\pgfplotsset{compat=1.10}
\newcommand{\PP}{\mathbb{P}}
\newcommand{\calR}{\mathcal{R}}
\newcommand{\calH}{\mathcal{H}}
\newcommand{\calM}{\mathcal{M}}
\newcommand{\calL}{\mathcal{L}}
\newcommand{\calP}{\mathcal{P}}
\newcommand{\calQ}{\mathcal{Q}}
\newcommand{\calF}{\mathcal{F}}
\newcommand{\calT}{\mathcal{T}}
\newcommand{\calE}{\mathcal{E}}
\newcommand{\RR}{\mathbb{R}}
\newcommand{\NN}{\mathbb{N}}
\newcommand{\ZZ}{\mathbb{Z}}
\newcommand{\FF}{\mathbb{F}}
\newcommand{\E}{\mathbb{E}}
\DeclareMathOperator{\Bin}{Bin}
\DeclareMathOperator{\pre}{Left}
\DeclareMathOperator{\nex}{Right}
\newcommand{\ALG}{\mathrm{ALG}}
\newcommand{\OPT}{\mathrm{OPT}}
\newcommand{\calI}{\mathcal{I}}
\newtheorem{thm}{Theorem}
\newtheorem{lem}[thm]{Lemma}
\newtheorem{rmk}[thm]{Remark}
\theoremstyle{definition}
\newtheorem{defi}[thm]{Definition}
\newlength{\algofontsize}
\newcommand{\jscom}[1]{\todo[color=orange!25!white]{JS: #1}}
\begin{document}
	
	\algrenewcommand\algorithmicrequire{\textbf{Input:}}
	\algrenewcommand\algorithmicensure{\textbf{Output:}}
	
	\title{\vspace{-2.5em}Strong Algorithms for the Ordinal Matroid Secretary Problem\thanks{This work was partially supported by FONDECYT project 11130266, Conicyt PCI PII 20150140 and N\'ucleo Milenio Informaci\'on y Coordinaci\'on en Redes ICM/FIC RC130003}}

	\author{
	Jos\'e A.~Soto\thanks{Departamento de Ingenier\'ia Matem\'atica \& CMM, Universidad de Chile. Email: {\tt jsoto@dim.uchile.cl}}
	\and
	Abner Turkieltaub\thanks{Departamento de Ingenier\'ia Matem\'atica, Universidad de Chile. Email: {\tt abnerturkieltaub@gmail.com}}
	\and 
	Victor Verdugo\thanks{Departamento de Ingenier\'ia Industrial, Universidad de Chile \& D\'epartment d'Informatique CNRS UMR 8548, \'Ecole normale sup\'erieure, PSL Research University. Email: {\tt vverdugo@dii.uchile.cl}}}
\date{\vspace{-3em}}
\maketitle
 \begin{abstract}
 \noindent In contrast with the standard and widely studied utility variant, in the ordinal Matroid Secretary Problem (MSP) candidates do not reveal numerical weights but the decision maker can still discern if a candidate is better than another. We consider three competitiveness measures for the ordinal MSP. An algorithm is $\alpha$ ordinal-competitive if for every weight function compatible with the ordinal information, the expected output weight  is at least $1/\alpha$ times that of the optimum; it is $\alpha$ intersection-competitive if its expected output includes at least $1/\alpha$ fraction of the elements of the optimum, and it is $\alpha$ probability-competitive if every element from the optimum appears with probability $1/\alpha$ in the output. This is the strongest notion as any $\alpha$ probability-competitive algorithm is also $\alpha$ intersection, ordinal and utility (standard)~competitive.
 
 Our main result is the introduction of a technique based on \emph{forbidden sets} to design algorithms with strong probability-competitive ratios on many matroid classes. In fact, we improve upon the guarantees for almost every matroid class considered in the MSP literature: we achieve probability-competitive ratios of $e$ for transversal matroids (matching Kesselheim et al.~\cite{KRTV2013}, but under a stronger notion); of $4$ for graphic matroids (improving on $2e$ by Korula and Pál~\cite{KP2009}); of $3\sqrt{3}\approx 5.19$ for laminar matroids (improving on 9.6 by Ma et al.~\cite{MTW2016}); and of $k^{k/(k-1)}$ for a superclass of $k$ column sparse matroids, improving on the $ke$ result by Soto \cite{Soto2013}. We also get constant ratios for hypergraphic matroids, for certain gammoids and for graph packing matroids that generalize matching matroids. The forbidden sets technique is inspired by the backward analysis of the classical secretary problem algorithm and by the analysis of the $e$-competitive algorithm for online weighted bipartite matching by Kesselheim et al.~\cite{KRTV2013}. Additionally, we modify Kleinberg's $1+O(\sqrt{1/\rho})$ utility-competitive algorithm for uniform matroids of rank $\rho$ in order to obtain a  $1+O(\sqrt{\log \rho/\rho})$ probability-competitive algorithm.
 Our second contribution are algorithms for the ordinal MSP on arbitrary matroids. We devise an $O(1)$ intersection-competitive algorithm, an $O(\log \rho)$ probability-competitive algorithm and an $O(\log\log \rho)$ ordinal-competitive algorithm for matroids of rank $\rho$. The last two results are based on the $O(\log\log \rho)$ utility-competitive algorithm by Feldman et al. \cite{FSZ2015}.
\end{abstract}

\section{Introduction}

In the classical secretary problem (see, e.g., \cite{Ferguson2010} for a survey) an employer wants to select exactly one out of $n$ secretaries arriving in random order. After each arrival, the employer learns the relative merits of the new candidate (i.e., he can compare the candidate with previous ones but no numerical quantity is revealed), and must reject or accept immediately. Lindley \cite{Lindley1961} and Dynkin \cite{Dynkin1963} show that the strategy of sampling $1/e$ fraction of the candidates and then selecting the first record has a probability of at least $1/e$ of selecting the best secretary and that no algorithm can beat this constant. During the last decade, generalizations of this problem have attracted the attention of researchers, specially due to applications in online auctions and online mechanism design. 

Arguably the most natural extension is the \emph{generalized secretary problem}  by Babaioff et al.~\cite{BIK2007}. In their setting, a set $R$ of candidates of known size is presented to an algorithm on uniform random order. On arrival, each element $r$ reveals its hidden weight $w(r)$ and the algorithm must irrevocably decide whether to select it or not while preserving the independence of the set of selected elements on a given independence system\footnote{An independence system is a pair $(R,\calI)$ where $R$ is finite, and $\calI$ is a nonempty family of subsets of $R$ that is closed under inclusion. The sets in $\calI$ are called independent sets.} $(R,\calI)$. 
The objective is to maximize the total weight of the selected independent set. Mainly motivated by applications and by the richness of their structural properties, Babaioff et al.~focuses on the case in which $(R, \calI)$ is a \emph{matroid}\footnote{A matroid is an independence system $(R, \calI)$ satisfying the next augmentation property: whenever $X, Y \in \calI$, and $|X|<|Y|$, there must be an element $r\in Y\setminus X$ such that $X+r\in \calI$.}, where this problem	takes the name of \emph{Matroid Secretary Problem} (MSP). An $\alpha$ competitive algorithm (which we call $\alpha$ utility-competitive algorithm to differentiate it from other measures) is one that returns an independent set 
whose expected weight 
is at least $1/\alpha$ times the weight of an optimum independent set. Babaioff et al.~posts the well-known, and still open, \emph{Matroid Secretary Conjecture} which in its weak form states that there must be a constant competitive algorithm for the MSP on any matroid, and on its strong form, claims that the constant is $e$. 
They also provide a {\it Threshold Price  Algorithm} (TPA) achieving an $O(\log \rho)$ utility-competitive ratio on matroids of rank $\rho$. 	Better algorithms have improved this ratio to $O(\sqrt{\log \rho})$ by Chakraborty and Lachish \cite{CL2012} and later to the current best ratio of $O(\log\log \rho)$ by Lachish \cite{Lachish2014} and independently by Feldman et al.~\cite{FSZ2015}.  

The generalized secretary problem has also been studied on non-matroidal systems such as knapsack \cite{BIKK2007}, online matchings on graphs and hypergraphs \cite{Dimitrov2012, KP2009, KRTV2013} and LP packings \cite{KRTV2014}. Babaioff et al.~\cite{BIK2007} show that for general independence systems every algorithm on $n$ elements must be $\Omega(\log n /\log \log n)$ utility-competitive and Rubinstein \cite{Rubinstein2016} provides an $O(\log n \log\rho)$ competitive algorithm. Many works consider alternative objective functions, such as minimum sum of ranks \cite{AMW2001}, time discounted variants \cite{BDGIT2009}, convex costs under knapsack type restrictions \cite{BUCM12} and submodular objectives \cite{BHZ2013, FNS11, FZ15}. Another rich line of research focuses on relaxing the random order condition and the power of the adversary setting the weights. There are $O(1)$ competitive algorithms for the MSP on the random order and assignment model \cite{Soto2013}, on the adversarial order random assignment model \cite{OV2013, Soto2013} and on the free order model \cite{JSZ2013}. Non-uniform arrival orders have also been considered \cite{KKN2015}. Related to this line of research is the design of order-oblivious algorithms, which can sample a constant fraction of the elements but afterwards, the arrival order may be arbitrary; and results related to prophet inequality settings on which the weight of each element is a random variable with known distribution but the arrival order is adversarial, see \cite{KW12, AKW14, DK15, RubinsteinS17}.

\subsection{Ordinal MSP versus Utility MSP}
A common requirement for the mentioned algorithms for the MSP on general matroids is that they require to use numerical weights: they all treat elements with similar weights more or less equivalently. In contrast, in the definition of the classical secretary problem, the decisions made by an algorithm only rely on the ordinal preferences between candidates. This is a very natural condition as for many applications it is difficult to determine (for instance, via an interview) a numerical weight representing each candidate, but it is often simple to compare any two of them. For this reason we restrict our study to the \emph{ordinal MSP} in which the decision maker can only compare seen elements. This is, it can check which element is higher in an underlying hidden total order $\succ$. In particular, the algorithm cannot really compute the \emph{weight} of a set. However, matroids have the following extremely nice feature: it is possible to find a maximum weight independent set using only ordinal information by greedily constructing an independent set using the order $\succ$. The set $\OPT$ found this way is the unique lexicographically\footnote{A set $A=\{a_1,\dots, a_k\}\subseteq R$ is lexicographically larger than a set $B=\{b_1,\dots, b_k\}\subseteq R$ of the same size, with respect to the order $\succ$ on $A\cup B$ if $a_i\succ b_i$ for the first $i$ on which $a_i$ and $b_i$ differ.} maximum independent of the matroid.
 So it is quite natural to ask what can be done without numerical weights in the secretary setting. In order to measure the performance of an algorithm we introduce three notions of competitiveness for the ordinal MSP. We say that an algorithm is $\alpha$ \emph{ordinal-competitive} if for every nonnegative weight function compatible with the total order, the expected weight of the output independent set $\ALG$ is at least $1/\alpha$ the weight of $\OPT$. It is $\alpha$ \emph{intersection-competitive} if the expected fraction of elements in $\OPT$ that is included into $\ALG$ is at least $1/\alpha$, and it is $\alpha$ \emph{probability-competitive} if for every element $r\in \OPT$, the probability that $r$ is included into $\ALG$ is at least $1/\alpha$. It is not hard to see that any $\alpha$ ordinal-competitive algorithm is also $\alpha$ competitive in the classical (utility) sense and also that any $\alpha$ probability-competitive algorithm is also $\alpha$ competitive in every other setting. 
 
All the mentioned algorithms  \cite{BIK2007, CL2012, Lachish2014, FSZ2015} for the standard or \emph{utility MSP} do not work in the ordinal model, and so the existence of an $\alpha$-utility competitive algorithm does not imply the existence of an $\alpha$ probability/intersection/ordinal competitive algorithm. 
To the authors' knowledge the only known algorithms for ordinal MSP on general matroids (apart from the trivial $O(\rho)$ probability-competitive that selects the top independent singleton using a classical secretary algorithm) are Bateni et al.'s $O(\log^2 \rho)$ ordinal-competitive algorithm \cite{BHZ2013} (which also works for submodular objective functions) and Soto's variant of TPA \cite{Soto2013} which is $O(\log \rho)$ ordinal-competitive.

Independently and simultaneously from this work, Hoefer and Kodric \cite{HoeferK2017} study the ordinal secretary problem obtaining constant ordinal-competitive algorithms for bipartite matching, general packing LP and independent sets with bounded local independence number. They also show that Feldman and Zenklusen's reduction \cite{FZ15} from submodular to linear MSP works in the ordinal model.

The strong matroid secretary conjecture may hold even for the probability-competitive  notion. We note, however, that for some matroid classes the notion of competitiveness matters. In the $J$-choice $K$-best secretary problem by Buchbinder et al.~\cite{Buchbinder2014}, one has to select $J$ elements from a randomly ordered stream getting profit only from the top $K$ elements. In particular, every $\alpha$ competitive algorithm for the $\rho$-choice $\rho$-best secretary problem that only uses ordinal information, is an $\alpha$ intersection-competitive algorithm for uniform matroids of rank $\rho$. Chan et al.~\cite{Chan2015} show that the best competitive ratio achievable by an algorithm that can only use ordinal information on the $2$-choice $2$-best secretary problem is approximately $0.488628$. In contrast, by using numerical information one can achieve a higher competitive ratio of $0.492006$ that also works for the sum-of-weights objective function (i.e., for utility-competitiveness). 
Thus showing a gap between the optimal probability ratio and the optimal utility ratio achievable on uniform matroids of rank 2.

\subsection{MSP on specific matroids}

An extensive amount of work has been done on the last decade on the MSP on specific matroid classes including unitary \cite{Lindley1961,Dynkin1963,GM1966,Beckmann1990,Bruss2000}, uniform \cite{BIKK2007,Kleinberg2005}, transversal \cite{BIK2007,Dimitrov2012,KP2009}, graphic \cite{BIK2007, BDGIT2009,KP2009}, cographic \cite{Soto2013}, regular and MFMC \cite{DK2014}, $k$-column sparse \cite{Soto2013} and laminar matroids \cite{IW2011,JSZ2013,MTW2016}. Even though the algorithms are stated for the utility MSP, many of the proofs work directly on either the ordinal MSP under ordinal or even probability competitiveness. We include in Table \ref{tab:resumen} a summary of all known results together with the improved bounds obtained in this paper.

\subsection{Our results and techniques}
We first formalize our new competitiveness notions for ordinal MSP and study their interrelations. We say that a performance notion is stronger than another if any algorithm that is $\alpha$ competitive for the former is also $\alpha$ competitive for the latter. Under this definition, we show that probability is stronger than ordinal and intersection, and that ordinal is stronger than utility. 

On the first part of our article we focus on a powerful technique to define strong algorithms for the MSP on many classes of matroids. Informally (see the exact definition in Section \ref{sec:specific}), we say that an algorithm has forbidden set of size $k$ if it samples without selecting $s\sim \Bin(n,p)$ elements and the following condition holds. Suppose that an element $r^*$ of $\OPT$ arrives in a position $t>s$ and let $R_t$ be the set of elements that arrived on or before time $t$. For each time step $i$ between $s+1$ and $t-1$ there is a  random set $\calF_i$ of at most $k$ forbidden elements such that if for every $i$, the element arriving at time $i$ is not forbidden ($i\not\in \calF_i$) then $r^*$ is sure to be selected. The following is the key lemma that shows why algorithms with small forbidden sets are useful for the MSP.

\begin{restatable}[Key Lemma]{lem}{key}
	\label{lem:keylemma}
	By setting the right sampling probability $p=p(k)$, every algorithm with forbidden sets of size $k$ is $\alpha(k)$ probability competitive, where
	$$(p(k),\; \alpha(k)) = \begin{dcases*} (1/e, \; e)& if $k=1$,\\
	(k^{-\frac{1}{k-1}},\;  k^{\frac{k}{k-1}})& if $k\geq 2$.
	\end{dcases*}$$
\end{restatable}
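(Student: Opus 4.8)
Fix an element $r^*\in\OPT$; the goal is to show $\Pr[r^*\in\ALG]\ge 1/\alpha(k)$. I will condition on the sample size $s$ (distributed as $\Bin(n,p)$) and on the arrival position $t$ of $r^*$. In the model underlying the algorithm — draw $s\sim\Bin(n,p)$, take a uniformly random arrival order, and let the first $s$ arrivals be the (unselected) sample — the position of $r^*$ is uniform in $\{1,\dots,n\}$ and independent of $s$. By the forbidden-set hypothesis, on the event $t>s$ it suffices for $r^*$ to be selected that, for every $i\in\{s+1,\dots,t-1\}$, the element arriving at time $i$ avoids the forbidden set $\calF_i$, where $|\calF_i|\le k$.

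The heart of the argument is a backward-reveal recursion, in the spirit of the classical secretary analysis and of Kesselheim et al.~\cite{KRTV2013}. Having conditioned on $s$ and on $\mathrm{pos}(r^*)=t$, reveal the random arrival order position by position, starting from position $n$ and going down. When we are about to reveal position $i<t$, the set of not-yet-revealed elements is exactly the set $R_i$ of the first $i$ arrivals; since $\calF_i$ depends only on $R_i$ (part of the precise definition in Section~\ref{sec:specific}), it is already determined at this point, while the element placed at position $i$ is uniform among the $i$ elements of $R_i$, even conditioned on everything revealed so far (which already encodes the success of all the later steps). Hence, for $i>k$, position $i$ avoids $\calF_i$ with conditional probability at least $(i-k)/i$. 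Multiplying these conditional probabilities along $i=s+1,\dots,t-1$ yields, whenever $t>s\ge k$,
\[
\Pr\bigl[r^*\in\ALG \,\big|\, s,\ \mathrm{pos}(r^*)=t\bigr]\ \ge\ \prod_{i=s+1}^{t-1}\frac{i-k}{i}.
\]
(When $s<k$ we use the trivial bound $0$; this costs only $o(1)$ since $\Pr[\Bin(n,p)<k]\to 0$.)

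It remains to average over $t$ and $s$ and optimize $p$. Using $\Pr[\mathrm{pos}(r^*)=t\mid s]=1/n$ and summing over $t\in\{s+1,\dots,n\}$, we need $\tfrac1n\sum_{t=s+1}^{n}\prod_{i=s+1}^{t-1}\frac{i-k}{i}$. For $k\ge 2$ this sum telescopes: $\sum_{t=s+1}^{n}\prod_{i=s+1}^{t-1}\frac{i-k}{i}=\frac{1}{k-1}\bigl(s-n\prod_{i=s+1}^{n}\frac{i-k}{i}\bigr)$, and since $\prod_{i=s+1}^{n}\frac{i-k}{i}=(1+o(1))(s/n)^k$ for $n\gg s$, we obtain $\Pr[r^*\in\ALG\mid s]\ge(1+o(1))\,\frac{(s/n)-(s/n)^k}{k-1}$. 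For $k=1$ the product is $s/(t-1)$ and the sum is $s(H_{n-1}-H_{s-1})=(1+o(1))\,s\ln(n/s)$, giving $\Pr[r^*\in\ALG\mid s]\ge(1+o(1))\,(s/n)\ln(n/s)$. Taking the expectation over $s\sim\Bin(n,p)$ and using that $s/n\to p$ in probability while the relevant function is bounded and continuous, we get $\Pr[r^*\in\ALG]\ge(1-o(1))\,g_k(p)$, where $g_1(p)=p\ln(1/p)$ and $g_k(p)=\frac{p-p^k}{k-1}$ for $k\ge 2$. Finally, $g_1$ is maximized at $p=1/e$ with value $1/e$; and for $k\ge 2$, $g_k'(p)=\frac{1-kp^{k-1}}{k-1}$ vanishes at $p=k^{-1/(k-1)}$, where $p^k=p/k$, so $g_k(p)=\frac{p-p/k}{k-1}=\frac pk=k^{-k/(k-1)}$. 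These are precisely $p(k)$ and $1/\alpha(k)$.

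The main obstacle is making the backward-reveal step watertight: one must argue that $\calF_i$ is measurable with respect to $R_i$ (so that it is already fixed at the matching step of the backward process) and that, conditioned on all later reveals and their successes, the element at position $i$ is still uniform over $R_i$; this is exactly where the precise form of the forbidden-set definition is needed. A secondary, purely technical point is pushing the asymptotics $g_k(p)$ cleanly through the $\Bin(n,p)$ expectation (or, if one wants the bound without the $o(1)$, keeping the exact telescoped expression together with explicit $\Gamma$-ratio estimates). Everything else is the standard "sample a $p$-fraction and do the backward analysis" computation.
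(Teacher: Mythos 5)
You take essentially the paper's route for the heart of the argument: condition on $s$, on the arrival position $t$ of $r^*$, reveal the order backwards, and bound the conditional selection probability by $\prod_{j=s+1}^{t-1}\bigl(\tfrac{j-k}{j}\bigr)_+$; two small things to tighten are that $\calF_j=\calF(R_j,R_t,r_t)$ depends on $R_t$ and $r_t$ as well as on $R_j$ (harmless, since both are already determined when the backward step $j$ is performed), and that you need Lemma \ref{lem:opt-restricted} to guarantee $r^*\in\OPT(R_t)$ before the $k$-forbidden property can be invoked. Where you genuinely differ is the final evaluation. The paper computes $\E_{s\sim\Bin(n,p)}\tfrac1n\sum_{t=s+1}^n\prod_{j=s+1}^{t-1}(1-k/j)_+$ via an auxiliary continuous-time game ($n$ players with i.i.d.\ uniform arrival times in $[0,1]$, each choosing $k$ partners among earlier arrivals), whose winning probability equals this expression and is at least $\int_p^1(p/\tau)^k\,d\tau$, i.e.\ exactly $-p\ln p$ or $\tfrac{p-p^k}{k-1}$ for \emph{every} $n$; your telescoping-plus-$o(1)$ computation only yields $\alpha(k)(1+o(1))$ competitiveness, which is weaker than the way the lemma is used (the paper stresses, e.g., that the transversal algorithm is $e$ probability-competitive always, not just asymptotically). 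For $k\ge 2$ your exact telescoped identity actually closes this gap with no Gamma-ratio estimates: $\prod_{j=s+1}^{n}\tfrac{j-k}{j}=\tfrac{s(s-1)\cdots(s-k+1)}{n(n-1)\cdots(n-k+1)}$, and the $k$-th factorial moment of $\Bin(n,p)$ is $n(n-1)\cdots(n-k+1)p^k$, so taking expectations gives exactly $\tfrac{p-p^k}{k-1}$ (and the telescoped formula remains a valid lower bound when $s<k$, so no cases are lost). For $k=1$, though, $x\mapsto x\ln(1/x)$ is concave, so Jensen goes the wrong way and substituting $s\approx np$ does not recover the exact $-p\ln p$; this is precisely what the paper's continuous reformulation buys. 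In short, your approach is sound and essentially the paper's, but to obtain the stated non-asymptotic constants you should replace the asymptotic step by either the auxiliary-game/integral argument or, for $k\ge2$, the exact moment computation.
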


\begin{table}[t!] \small
	\begin{tabular}{|l|p{0.8em}l|c|c|c|}
		\hline
		\textbf{
			Matroid Class          } &   \multicolumn{2}{c|}{\textbf{Previous guarantees}                 }    & \multicolumn{3}{c|}{\textbf{   New algorithms}             } \\ \cline{4-6}
		&               \multicolumn{2}{c|}{\textbf{(u,o,p competitive)}  }                &        \textbf{ p-guarantee}         & \textbf{ref.} &  \textbf{forb.~size}  \\ \noalign{\hrule height 1.5pt}
		Transversal                                & o:   & $16$ \cite{Dimitrov2012}, $8$ \cite{KP2009}, $e$ \cite{KRTV2013} &              $e$               & Alg.~\ref{alg:transversal} &        1        \\ \hline 
		$\mu$ exch.~gammoids                       & u:   & $O(\mu^2)$ \cite{KP2009}, $e\mu$ \cite{KRTV2013}                 &      $\mu^{\mu/(\mu-1)}$       & Alg.~\ref{alg:gammoid} &      $\mu$      \\ 
		(type of hypermatching)                    &      &                                                                  &                                &  &\\ \hline
		Matching matroids                          &      & -                                                                &              $4$               & Alg.~\ref{alg:packing}  &        2        \\ \hline
		$\mu$ exch.~matroidal packings             &      & -                                                                &      $\mu^{\mu/(\mu-1)}$       &           Alg.~\ref{alg:packing} &      $\mu$      \\ \hline
		Graphic                                    & o:   & 16 \cite{BIK2007}, $3e$ \cite{BDGIT2009}, $2e$ \cite{KP2009}     &               4                & Alg,~\ref{alg:graphic}  &        2        \\ \hline
		Hypergraphic                                    &    & -     &               4                & Alg,~\ref{alg:graphic}  &        2        \\ \hline
		$k$-sparse matroids                        & o:   & $ke$ \cite{Soto2013}                                          &         $k^{k/(k-1)}$          & Alg.~\ref{alg:framed}  &       $k$       \\ \hline
		$k$-framed matroids                        &      & -                                                                &         $k^{k/(k-1)}$          & Alg.~\ref{alg:framed}  &       $k$       \\ \hline
		Semiplanar gammoids                        &      & -                                                                &           $4^{4/3}$            & Alg.~\ref{alg:semiplanar} &        4        \\ \hline
		Laminar                                    & o:   & $177.77$ \cite{IW2011}, $3\sqrt{3}e\approx 14.12$ \cite{JSZ2013}  &  $3\sqrt{3} \approx 5.19615$   & Alg.~\ref{alg:laminar}  &        3        \\ 
		& p:   & 9.6 \cite{MTW2016}                                               &                                &           &  \\ \hline
		Uniform $U(n,\rho)$                        & p:   & $e$ \cite{BIKK2007}                                           & $1+O(\sqrt{\log \rho / \rho})$ &           Alg.~\ref{alg:uniform}&    -   \\ 
		& o: & $1+O(\sqrt{1 / \rho})$                                           &                                &           &  \\ \hline
		Cographic                                  & p:   & $3e$ \cite{Soto2013}                                          &               -                &     -     &        -        \\ \hline
		Regular, MFMC                              & o:   & $9e$ \cite{DK2014}                                               &               -                &     -     &        -        \\ \hline
	\end{tabular}
	
	\caption{\label{tab:resumen}\small State of the art competitive ratios for all known matroid classes, including our results.}
	
\end{table}

Obtaining algorithms with small forbidden sets is simple for many matroid classes. In fact, it is easy to see that the variant of the standard classical secretary algorithm which samples $s\sim \Bin(n,p)$ elements and then selects the first element better than the best sampled element, has forbidden sets of size 1. Suppose that the maximum element $r^*$ arrives at time $t>s$ and denote by $x$ the second best element among those arrived up to time $t$. If $x$ does not arrive at any time between $s+1$ and $t-1$ then for sure, $x$ will be used as threshold and thus $r^*$ will be selected. In the above notation, all forbidden sets $\calF_i$ are equal to the singleton $\{x\}$. Using the key lemma, by setting $p=1/e$, this algorithm is $e$ competitive.  We provide new algorithms that beat the state of the art guarantees for transversal, graphic, $k$-sparse and laminar matroids. We also provide new algorithms for other classes of matroids such as matching matroids, certain matroidal graph packings (which generalize matching matroids), hypergraphic matroids, $k$-framed matroids (which generalize $k$-sparse matroids), semiplanar gammoids and low exchangeability gammoids. As an interesting side result, we  revisit Kleinberg's $1+O(\sqrt{1/\rho})$ ordinal-competitive algorithm. We show that its probability-competitiveness is bounded away from 1 and propose an algorithm that achieves a probability-competitive ratio of $1+O(\sqrt{\log \rho / \rho})$. Our new results for specific classes of matroids are summarized on Table \ref{tab:resumen}, which, for completeness includes all matroid classes ever studied on the MSP, even those for which we couldn't improve the state of the art. In the references within the table u, o and p stand for utility, ordinal and probability competitiveness respectively.

On the second part of this paper we obtain results for the ordinal MSP on general matroids.

\begin{thm}\label{thm:intersection}
	There exists a $\ln(2/e)$ intersection-competitive algorithm for the MSP.
\end{thm}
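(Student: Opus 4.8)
The plan is to analyze the matroidal analogue of the greedy algorithm of Kesselheim et al.~\cite{KRTV2013}. Fix a sampling probability $\beta\in(0,1)$, to be optimized at the very end. The algorithm draws $s\sim\Bin(n,\beta)$, treats the first $s$ arrivals as observation-only, and then maintains an independent set $I$ (initially empty): when $e_t$ arrives with $t>s$, letting $R_t$ be the set of elements seen so far and $\OPT(R_t)$ the lexicographically maximum independent subset of $R_t$ (computed by the ordinal greedy), it adds $e_t$ to $I$ exactly when $e_t\in\OPT(R_t)$ and $I+e_t\in\calI$; it returns $\ALG=I$. Since $|\OPT|=\rho$, the goal is to prove $\E[\,|\ALG\cap\OPT|\,]\ge\rho/\alpha$, equivalently $\frac1\rho\sum_{r\in\OPT}\Pr[r\in\ALG]\ge 1/\alpha$.

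I would first record two elementary facts. \emph{(i) Hereditariness of $\OPT$:} if $r\in\OPT$ and $r\in R'\subseteq R$ then $r\in\OPT(R')$, because $r\in\OPT(R')$ holds iff $r$ lies outside the matroid closure of $\{e\in R':e\succ r\}$, and closures are monotone in the set. Hence every $r\in\OPT$ that arrives after the sample passes the test $e_t\in\OPT(R_t)$ automatically, so the sole obstruction to picking it at its arrival time is that $r$ already lies in the span of the set $I$ held at that moment. \emph{(ii)} Embedding the randomness in independent uniform labels $\tau_e\in[0,1]$ (with $\{\tau_e<\beta\}$ the sample and arrivals ordered by increasing $\tau$), the set $I_{<y}$ accumulated by the algorithm from all arrivals before time $y$ is a deterministic function of $\{(e,\tau_e):\tau_e<y\}$, hence, conditioned on $\tau_r=y$, depends only on the remaining (still independent) labels. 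Combining these,
\[ \Pr[r\in\ALG]=\int_\beta^1\Pr\bigl[\,r\notin\mathrm{cl}(I_{<y})\ \big|\ \tau_r=y\,\bigr]\,dy, \]
so the whole theorem reduces to a lower bound on $\Pr[r\notin\mathrm{cl}(I_{<y})\mid\tau_r=y]$.

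The core of the proof — and, I expect, its main difficulty — is a structural/charging lemma bounding how much of $\OPT$ the algorithm's current set can block. In contrast to the transversal case, the algorithm may here select elements of $\OPT(R_t)\setminus\OPT$, and these can lie in the span of later elements of $\OPT$; this blocking is exactly the phenomenon responsible for the utility-competitiveness of this algorithm being open. For intersection-competitiveness we only need to count elements, so it should suffice to charge the blocked elements of $\OPT$ to the elements the algorithm holds. Concretely, when a blocked $r\in\OPT$ arrives at time $y$, the fundamental circuit $C\bigl(r,I_{<y}\bigr)$ exists; by fact (i), $r\notin\mathrm{cl}\bigl(I_{<y}\cap\{e\succ r\}\bigr)$, so $C$ contains an element $\prec r$, and its worst element $\phi(r)$ satisfies $\phi(r)\in I_{<y}\subseteq\ALG$ and $\phi(r)\prec r$. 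The map $\phi$ charges each blocked element of $\OPT$ to a strictly worse element of $\ALG$; averaging over the random arrival order — and using that, by the same greedy-style analysis applied one level down, a constant fraction of what the algorithm holds is itself in $\OPT$ — bounds the total charge and produces a lower bound $\Pr[r\notin\mathrm{cl}(I_{<y})\mid\tau_r=y]\ge g(y)$ for an explicit function $g$. Making this averaging precise while respecting the conditioning on arrival times, and absorbing the non-injectivity of $\phi$, is the delicate step.

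Finally, summing over $r\in\OPT$,
\[ \frac{\E[\,|\ALG\cap\OPT|\,]}{\rho}=\frac1\rho\sum_{r\in\OPT}\int_\beta^1\Pr[r\notin\mathrm{cl}(I_{<y})\mid\tau_r=y]\,dy\ \ge\ \int_\beta^1 g(y)\,dy, \]
and choosing $\beta$ to maximize the last integral yields the closed-form constant in the statement. Since the optimal $\beta$ is an absolute number, independent of $\rho$ and $n$, the resulting ratio is a universal constant, as claimed.
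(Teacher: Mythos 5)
You have the paper's algorithm exactly (its ``Improving Greedy'': sample, then accept $r_t$ iff $r_t\in\OPT(R_t)$ and independence is preserved; the fixed-size versus $\Bin(n,\beta)$ sample is immaterial), but the analysis you outline has a structural flaw that cannot be repaired, beyond the fact that its core step is only sketched. You reduce the theorem to a lower bound $\Pr[r\notin\mathrm{cl}(I_{<y})\mid\tau_r=y]\ge g(y)$ with $g$ independent of $r$ and $\int_\beta^1 g(y)\,dy$ a universal constant. Such a bound would make this algorithm constant \emph{probability}-competitive on every matroid, hence, by Lemma \ref{lem:variant-relations}, constant utility-competitive. But precisely this algorithm is known not to be constant utility-competitive on general matroids (the paper recalls this, citing Babaioff et al.~\cite{BIK2007}, right after introducing Algorithm \ref{alg:cardinal}). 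So no such uniform $g$ exists: on bad instances, particular elements of $\OPT$ are blocked by previously accepted ``current-optimum'' junk with probability arbitrarily close to $1$, and any per-element guarantee must fail. The difficulty is therefore not just the ``delicate'' charging step you defer --- which, as sketched, is also circular, since it invokes ``a constant fraction of what the algorithm holds is itself in $\OPT$,'' essentially the statement being proved --- but the choice of reduction itself: a correct proof can only control the blocking \emph{on average} over $\OPT$, not element by element.

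That is exactly what the paper does, with a short global counting argument and no span/charging analysis at all. Deterministically $|\ALG|\ge|\OPT\setminus R_s|$, because every non-sampled element of $\OPT$ passes the test $r_t\in\OPT(R_t)$ on arrival (Lemma \ref{lem:opt-restricted}) and hence is either accepted or already spanned by $\ALG$, so $\ALG$ spans $\OPT\setminus R_s$. Moreover both $\ALG$ and $\OPT\setminus R_s$ are contained in $B=\{r_i:\ i>s,\ r_i\in\OPT(R_i)\}$, and $\E|B|\le(H_n-H_s)\rho$ since $\Pr(r_i\in\OPT(R_i))\le\rho/i$ (Lemma \ref{lem:harmonic}). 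Inclusion--exclusion then yields $\E|\ALG\cap\OPT|\ \ge\ 2\,\E|\OPT\setminus R_s|-\E|B|\ \ge\ \bigl(2-2s/n+\ln(s/n)\bigr)\rho$, which is maximized at $s=n/2$ and gives $\E|\ALG\cap\OPT|\ge(1-\ln 2)\,|\OPT|$, i.e.\ the claimed constant ratio $1/(1-\ln 2)\approx 3.26$. Nothing in this argument bounds, or needs to bound, the selection probability of any individual optimum element, which is why it survives the counterexamples that doom the per-element route.
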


\begin{thm}\label{thm:ord-prob}
	There exist an $O(\log \log \rho)$ ordinal-competitive algorithm  and an $O(\log \rho)$ probability competitive algorithm for the MSP.
\end{thm}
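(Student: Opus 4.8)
The plan is to obtain both algorithms as ordinal implementations of the recursive bucketing scheme of Feldman et al.~\cite{FSZ2015}. Recall the shape of their $O(\log\log\rho)$ utility-competitive algorithm: after sampling a constant fraction of the stream to learn a proxy base, it partitions the weight scale into geometric buckets, recursively groups the buckets into $O(\log\log\rho)$ meta-levels, and inside the innermost groups runs a constant-competitive subroutine; the final ratio is the product of the constant paid per level and the number of levels. The obstruction in the ordinal model is that we never observe weights: we cannot form weight buckets, and, worse, a single adversarial weight function can concentrate essentially all of $w(\OPT)$ on one element, so that ``bucketing by ordinal rank'' need not track the weights at all. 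The sampled base is the only source of coordinates (ranks of elements) on which to hang the FSZ machinery, and the whole point is to show that whatever weight the adversary piles onto a high-ranked element is still recovered.

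\textbf{The $O(\log\rho)$ probability-competitive algorithm.} Here I would collapse the recursion of \cite{FSZ2015} to a single level, chosen uniformly at random among the $\Theta(\log\rho)$ scales --- this is also the natural strengthening of Soto's $O(\log\rho)$ ordinal-competitive TPA variant to the probability notion. Concretely: sample $s\sim\Bin(n,1/2)$ elements without selecting; run the greedy algorithm on the sample to get its lexicographically-maximum independent set $B_s$ (this uses only comparisons, as recalled in the introduction); choose $j$ uniformly in $\{0,1,\dots,\lceil\log_2\rho\rceil\}$ and let the threshold $\tau$ be the $2^{j}$-th largest element of $B_s$ (and $\tau$ the bottom of the order, i.e.\ ``accept everything,'' when $|B_s|<2^{j}$); then, among the non-sampled elements, greedily accept in arrival order every element that beats $\tau$ and keeps the selection independent. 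To analyse it I fix $r^*\in\OPT$, let $i$ be the rank of $r^*$ inside $\OPT=\{r_1\succ r_2\succ\cdots\}$, and condition on $j=\lceil\log_2 i\rceil$, which occurs with probability $\Omega(1/\log\rho)$. Given this choice, the event ``$r^*$ is selected'' is sandwiched between two conditions on the random sample split: the proxy threshold $\tau$ must fall \emph{below} $r^*$ (so $r^*$ clears it) and \emph{above} enough of $\OPT$ that the prefix of $\OPT$ beating $\tau$ has size $O(i)$ (so $r^*$'s ``slot'' is contested by only $O(i)$ elements, and with constant probability the greedy-in-arrival rule fills it with $r^*$ itself). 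Both conditions hold simultaneously with constant probability, using the Chernoff-type fact that the top $\Theta(i)$ elements of $\OPT$ split roughly evenly between sample and non-sample, together with the matroid-exchange fact that the lexicographically-maximum independent set of the elements beating any fixed threshold is a prefix of $\OPT$. Multiplying, $\Pr[r^*\in\ALG]=\Omega(1/\log\rho)$.

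\textbf{The $O(\log\log\rho)$ ordinal-competitive algorithm.} Here I would keep the full recursion of \cite{FSZ2015} but replace numerical weight buckets by \emph{rank buckets}: the $t$-th bucket is the set of elements whose rank inside $B_s$ lies in a fixed geometric window, the recursion groups these rank buckets into meta-levels exactly as in their paper, and each step of the subroutine is again implemented with comparisons only. Since we only want ordinal-competitiveness we may analyse one weight function at a time: fix $w$ compatible with $\succ$ and let $\OPT$ be its lex-max base (which coincides with the ordinal $\OPT$). The key point is a ``staircase'' charging in which the $k$-th largest weight appearing in $\OPT$ is charged to the rank-$k$ bucket; this is legitimate precisely because $\OPT$ is the \emph{lexicographically} maximum base, so heavier $\OPT$-elements sit at smaller ranks, and it shows that $w(\OPT)$ is, up to a constant, captured by the staircase. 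From there the Feldman-et-al.\ recursion applies essentially verbatim, contributing the usual constant-per-level times $O(\log\log\rho)$-levels factor.

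\textbf{Main obstacle.} The delicate part of both arguments is exactly this interface between ordinal data and adversarial weights. For probability-competitiveness it is what forces the coarsening to a single randomly chosen scale --- hence the extra $\log\rho$ rather than $\log\log\rho$ --- since with only comparisons we cannot tell which of the $\Theta(\log\rho)$ scales is relevant for a given $r^*$, and the two-sided sandwich above must be verified to hold with constant probability simultaneously. For ordinal-competitiveness the subtlety is the staircase charging, valid only because $\OPT$ is lexicographically (not merely weight-)optimal; granting it, the translation of the FSZ recursion into the rank world is routine, but one must check at each recursion step that no numerical comparison has sneaked in.
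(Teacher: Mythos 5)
Your ordinal half follows the paper's route in outline (geometrically spaced thresholds taken from the sample optimum, then the Feldman et al.\ machinery applied with layers playing the role of weight classes), but your probability half contains a genuine gap, and it is exactly the step you wave through: ``$r^*$'s slot is contested by only $O(i)$ elements, and with constant probability the greedy-in-arrival rule fills it with $r^*$ itself.'' Bounding the rank of the set above the threshold $\tau$ by $O(i)$ does not bound the probability that $r^*$ is \emph{spanned} by previously accepted elements, and non-$\OPT$ elements lying between $\tau$ and $r^*$ can do exactly that. Concretely, take a graphic matroid with $r^*=v_0w$, a star $f_1,\dots,f_{i-1}$ at $v_0$ above $r^*$, and $M=\Theta(\rho)$ internally disjoint two-edge paths $g_t=v_0p_t$, $h_t=p_tw$ placed immediately below $r^*$ in the value order. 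Here $r^*\in\OPT$ and every $g_t\in\OPT$, so for every scale $j$ for which $\tau=s(2^j)$ falls below $r^*$ (necessary for $r^*$ to clear the threshold), $\Theta(i)$ of the pairs $(g_t,h_t)$ lie above $\tau$ for suitable $i$; before $r^*$ arrives your rule accepts every arriving non-sampled $g_t$ and $h_t$ above $\tau$, and as soon as one pair is completed, $v_0$ and $w$ are joined and $r^*$ is rejected. The probability that no pair completes before $r^*$ is $O(1/\sqrt{i})$, not a constant, so conditioning on the ``right'' scale does not rescue the argument and the algorithm is $\omega(\log\rho)$ probability-competitive on such instances. This interference problem is precisely what the paper's construction is built to neutralize, and it does so with ingredients absent from your sketch: it does not use a single random scale at all, but takes \emph{all} of $\OPT(R_s)$ as layer boundaries, restricts attention to $R_s^+=\{r\notin R_s: r\in\OPT(R_s+r)\}$, shows by a coin-flip coupling that each fixed $\OPT$ element sits \emph{alone} in its layer with probability at least $1/8$, and then invokes the layered guarantee $\E|\ALG\cap C_j|\ge g(1+\rho)^{-1}|\OPT\cap C_j|$ of the Feldman et al.\ algorithm, whose bucket matroids (formed by contracting spans of sampled elements) are what prevent elements of lower layers from spanning the isolated element; this is where the $O(\log\rho)$ comes from, as $g(1+\rho)=O(\log\rho)$.

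On the ordinal side your plan is essentially the paper's reduction, but be aware that the ``staircase charging'' is where the actual work lies and it is not routine: the layered algorithm's guarantee is stated layer-by-layer relative to the optimum of the \emph{non-sampled} part, and turning this into the ordinal guarantee $\E|\ALG\cap R^k|\ge \Omega(1/\log\log\rho)\,|\OPT\cap R^k|$ for \emph{every} $k$ requires relating the sample optimum's prefix to the global optimum's prefix (in the paper, events such as ``at least $q$ of the top $2q$ elements of $\OPT$ are non-sampled'' and ``at least $2q$ of the next $4q$ are sampled,'' which force $s(2q)\succeq f(6q)$ and hence force whole layers inside $R^{T(2^j)}$), together with the equivalence of Lemma \ref{lem:ordinal-equivalence}; none of this is captured by simply observing that $\OPT$ is lexicographically maximum.
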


Even though Theorem \ref{thm:intersection} is attained by a very simple algorithm, we note that standard ideas such as thresholding do not work for the intersection notion since elements outside $\OPT$ that are higher than actual elements from $\OPT$ do not contribute to the objective. 
The algorithms mentioned in Theorem \ref{thm:ord-prob} are based on the recent $O(\log\log \rho)$ utility-competitive algorithm by Feldman et al.~\cite{FSZ2015}. 
Their algorithm samples a fraction of the elements so to classify most of the non-sampled ones into $h=O(\log \rho)$ weight classes consisting on elements whose weights are off by at most a factor of 2. 
They implement a clever random strategy to group consecutive weight classes together into buckets, each one containing roughly the same random number of weight classes. 
On each bucket they define a single random matroid with the property that if one picks an independent set from each one of these matroids, their union is independent in the original one. 
Their algorithm then selects a greedy independent set on each bucket matroid and output their union. 
Due to the random bucketing, the expected number of elements selected on each weight class is at least $\Omega(1/\log h)$ times the number of elements that $\OPT$ selects from the same class.	 
This means that the global algorithm is actually $O(\log h)=O(\log\log\rho)$ utility-competitive. 

In the ordinal setting we cannot implement this idea in the same way, since basically we do not have any weights. 
However, we can still partition the elements of the matroid into \emph{ordered layers}.
The idea is to select a collection of thresholds obtained from the optimum of a sample, and use them as separators to induce the layers. 
This motivates the definition of the Layered-MSP (see Section \ref{subsec:reduction-layered-ordinal}).
For both ordinal and probability competitiveness we provide a reduction from the Layered-MSP to the ordinal MSP.
For the ordinal notion, we select as thresholds a geometrically decreasing subset of the sample optimum according to the value order, so we can partition the matroid into $h=O(\log \rho)$ layers. For the probability notion we use all the elements of the sample optimum as thresholds.
The crucial result in this part is that our reduction allows to go from any $g(h)$ competitive algorithm for the Layered-MSP to a  $g(O(1+\log h))$ ordinal-competitive algorithm, and to a $g(O(h))$ probability-competitive algorithm. 
In particular, by applying Feldman et al.'s algorithm, interpreting these layers as weight classes, we get an $O(\log\log \rho)$ ordinal-competitive algorithm and an $O(\log \rho)$ probability-competitive algorithm for the original matroid.

\subsection{Organization}
In Section \ref{sec:preliminaries} we fix some notation and formally describe the performance guarantees for the ordinal MSP, studying their relations. In Section \ref{sec:specific} we prove our key lemma for algorithms with small forbidden sets. We then devise simple algorithms for all the matroid classes mentioned in Table \ref{tab:resumen}. In Section \ref{sec:general} we describe our new algorithms for general matroids, and prove Theorems \ref{thm:intersection} and \ref{thm:ord-prob}. To keep the discussion clear and simple, we defer some of the proofs to the Appendix. 

\section{Preliminaries} \label{sec:preliminaries}

Let $\calM=(R,\calI,\succ)$ be a matroid with ground set $R=\{r^1,r^2,\ldots,r^n\}$, and $\succ$ a total order. We call $\succ$ the \emph{value order} and say that $r^1$ is the highest valued element, $r^2$ is the second one, and so on, then $r^1\succ r^2 \succ \dots \succ r^n$. By matroid properties, for every subset $Q\subseteq R$, there is a \emph{unique lexicographically optimum base\footnote{A base of a set $Q$ is a maximal independent subset $X\subseteq Q$.}} $\OPT(Q)$ obtained by applying the greedy algorithm in the order~$\succ$, over the set $Q$. We say that a nonnegative weight function $w\colon E\to \RR_+$ is compatible with the value order if $r^i \succ r^j \implies w(r^i)\ge w(r^j)$. Note that for every compatible weight function the set $\OPT=\OPT(E)$ is a maximum weight independent set. 
We  reserve the use of superscripts $k\in \NN$ on a set $Q$ to denote the subset of the highest $\min\{k,|Q|\}$ valued elements of $Q$. In particular $R^k$ and $\OPT^k$ denote the set of the top $k$ elements of the matroid and of $\OPT$ respectively. We also reserve $n$ and $\rho$ to denote the number of elements of $R$ and its rank respectively.


In the (utility/ordinal) MSP, the elements of a (nonnegatively weighted/totally ordered matroid) are presented in \emph{uniform random order} to an online algorithm that does not know a priori the (weights/value order) of unrevealed elements. At any moment, the algorithm can (view the weight of/compare in the total order) any pair of revealed element. When a new element $r$ is presented, the algorithm must decide whether to add $r$ to the solution and this decision is permanent. The algorithm must guarantee that the set of selected elements is at all times independent\footnote{For particular classes of matroids, we may assume that $\calM$ is known beforehand by the algorithm, or alternatively that it is discovered by the algorithm via an independence oracle that allows it to test any subset of revealed elements. In any case, it is a standard assumption that the algorithm at least know the number of elements $n$ in the matroid.} in the matroid.  The objective of the algorithm is to return a set $\ALG$ as close as $\OPT$ as possible according to certain competitiveness metric.

To make notation lighter, we use $+$ and $-$ for the union and difference of a set with a single element respectively. That is, $Q+r-e=(Q\cup \{r\})\setminus \{e\}$.
The {\it rank} of a set $S$ is the cardinality of its bases, $\rho(Q)=\max\{|I|:I\in \calI,\; I\subseteq Q\}$, and the {\it span} of $Q$ is $\text{span}(Q)=\{r\in R:\rho(Q+r)=\rho(Q)\}$. 
In matroids, $\OPT$ has the property of {\it improving} any subset in the following sense.

\begin{restatable}{lem}{lemoptrestricted}
\label{lem:opt-restricted}
	Let $Q\subseteq R$. Then, $\OPT\cap Q\subseteq \OPT(Q)$.
\end{restatable}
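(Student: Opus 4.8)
The plan is to prove the statement by relating the greedy construction of $\OPT = \OPT(R)$ to the greedy construction of $\OPT(Q)$, and showing that every element the first greedy picks that happens to lie in $Q$ is also picked by the second greedy. Concretely, let $r$ be an arbitrary element of $\OPT \cap Q$; I want to show $r \in \OPT(Q)$. Let $S_r$ denote the set of elements of $R$ strictly more valuable than $r$ in the order $\succ$, so that when the global greedy processes $r$ it has already accepted exactly $\OPT \cap S_r$, and it accepts $r$ precisely because $(\OPT \cap S_r) + r \in \calI$. Similarly, when the greedy on $Q$ processes $r$, it has already accepted $\OPT(Q) \cap S_r$ (the elements of $Q$ more valuable than $r$ that it chose to keep), and it accepts $r$ iff $(\OPT(Q) \cap S_r) + r \in \calI$.

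The key step is an induction on the value order establishing that $\OPT(Q) \cap S_r \supseteq \OPT \cap S_r \cap Q$ for every $r$; equivalently, that the greedy on $Q$ accepts every element of $\OPT \cap Q$ that precedes $r$. Assuming this inductive hypothesis, I argue that $r$ itself is accepted by the $Q$-greedy: since $r \in \OPT$, the set $(\OPT \cap S_r) + r$ is independent, hence so is its subset $(\OPT \cap S_r \cap Q) + r$; but the $Q$-greedy, when it reaches $r$, has a current independent set that contains $\OPT \cap S_r \cap Q$ — wait, this is not immediate, since the $Q$-greedy's current set may be larger and thus a superset relation does not directly give independence of the union. Here is the fix: run the induction more carefully. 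I will show by induction on the position in $\succ$ that the set accepted by the $Q$-greedy up to and including any point is exactly $\OPT(Q)$ restricted to that prefix, and separately that it contains $\OPT \cap Q$ restricted to that prefix. For the inductive step at element $r \in \OPT \cap Q$: the $Q$-greedy's current set $A$ is independent and by the induction hypothesis $A \supseteq \OPT \cap S_r \cap Q$. I need $A + r \in \calI$.

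To get this I use the matroid exchange/augmentation structure rather than a naive superset argument. Since $r \in \OPT$, $r \notin \mathrm{span}(\OPT \cap S_r)$, so $r \notin \mathrm{span}(\OPT \cap S_r \cap Q)$ either. Now $A$ is an independent subset of $Q \cap (S_r \cup \{r\})\setminus\{r\}$ with $A \supseteq \OPT\cap S_r\cap Q$; I claim $A \cup \{r\}$ is independent. Suppose not: then $r \in \mathrm{span}(A)$, so there is a circuit $C \subseteq A + r$ with $r \in C$. Since $r\notin\mathrm{span}(\OPT\cap S_r\cap Q)$, the circuit $C$ must use some element $a \in A \setminus (\OPT\cap S_r\cap Q)$, i.e. an element of $Q$ that is in $S_r$ but was rejected by the global greedy. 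But an element $a$ rejected by the global greedy at its turn satisfies $a \in \mathrm{span}(\OPT \cap S_a)$ where $S_a \supsetneq S_r\cup\{r\}$... and then a standard circuit-elimination argument derives that the $Q$-greedy would itself have rejected $a$, contradicting $a \in A$. This circuit-elimination bookkeeping is the main obstacle: one must track, for each rejected element, a certificate circuit lying entirely among more-valuable elements, and combine these certificates to show consistency between the two greedy runs. Once this is done, $A + r \in \calI$, so the $Q$-greedy accepts $r$, completing the induction and hence the proof that $\OPT \cap Q \subseteq \OPT(Q)$.

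I expect the cleanest writeup actually avoids explicit circuit elimination by invoking the well-known matroid fact that the greedy-optimal (lexicographically maximum) independent set of a weighted matroid, restricted to any subset $Q$, is contained in the greedy-optimal independent set of $Q$ — this is exactly the statement that $\OPT$ viewed via strictly decreasing weights compatible with $\succ$ is the unique maximum-weight base, together with the matroid union/closure property that an element is in the optimum base iff it is the minimum-weight (here: least valuable) element of no circuit among elements of weight $\ge w(r)$. Phrasing it through that characterization, $r \in \OPT$ means $r$ is in no circuit contained in $\{x : x \succeq r\}$ other than ones also containing a less-valuable element — contradiction-free — and this property is inherited when we pass to circuits contained in $Q \cap \{x : x\succeq r\}$, giving $r \in \OPT(Q)$ immediately. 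I would present this short argument as the main proof and relegate the inductive greedy-matching version to a remark.
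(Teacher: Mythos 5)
Your closing argument---that $r\in\OPT$ means $r$ lies in no circuit contained in $\{x: x\succeq r\}$, a certificate that is trivially inherited when one only considers circuits contained in $Q\cap\{x: x\succeq r\}$, whence $r\in\OPT(Q)$---is correct and is essentially the paper's own proof, which expresses the same certificate as $r\notin\mathrm{span}(R^{k-1})$ and uses monotonicity of the span under $R^{k-1}\cap Q\subseteq R^{k-1}$. The lengthy inductive/circuit-elimination attempt in your opening paragraphs is an unnecessary detour that you never fully carry out, but since you explicitly discard it in favor of the short monotonicity argument, the proposal stands as a proof.
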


\begin{proof}[Proof of Lemma \ref{lem:opt-restricted}]
Let $r=r^k$ be an element of $\OPT\cap Q$. Since $r^k$ is selected by the Greedy algorithm we have that $r\not\in \text{span}(R^{k-1})$. But then $r\not\in \text{span}(R^{k-1}\cap Q)$ and so it is also selected by the Greedy algorithm applied only on the set $Q$. Therefore, $r\in \OPT(Q)$.
\end{proof}

Recall that the algorithm considered for the MSP have access to the elements in an online and uniformly at random fashion. We denote by $r_1$ the first element arriving, $r_2$ the second, and so on. In general, $R_t=\{r_1,r_2,\ldots,r_t\}$ is the set of elements arriving up to time $t$. \\

\noindent {\it Utility competitiveness. }
An algorithm for the utility MSP returning a set $\ALG$ is $\alpha\geq 1$ utility-competitive if
\begin{equation}
\E[w(\ALG)] \geq w(\OPT) / \alpha.\vphantom{\Bigg|} \label{eq1}
\end{equation}


\subsection{Measures of competitiveness for the ordinal MSP}\label{sec:variants}

We introduce three measures of competitiveness or the ordinal MSP. Recall that the algorithm only learns ordinal information about the elements but it cannot access numerical weights. In this sense, they are closer to the classical secretary problem than the utility variant (for a discussion about this aspect in the original secretary problem, see \cite{Ferguson2010}). 

Since the weight function remains completely hidden for the algorithm, the first measure of competitiveness we consider is the following. An algorithm is $\alpha$ {\it ordinal-competitive} if \emph{for every} weight function $w$ compatible with the value order, condition \eqref{eq1} holds.
An equivalent characterization of competitiveness is obtained by the following lemma.

\begin{lem}\label{lem:ordinal-equivalence} 
An algorithm is $\alpha\geq 1$ ordinal-competitive if and only if for every $k\in [n]$,
	\begin{equation}
	\quad \E|\ALG\cap R^k| \geq \E|\OPT\cap R^k| / \alpha. \label{ordinal2}
	\end{equation}
\end{lem}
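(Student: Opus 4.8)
The plan is to prove both directions by relating an arbitrary compatible weight function to a nonnegative combination of the indicator weight functions $w_k := \mathbf{1}_{R^k}$ (the function assigning weight $1$ to each of the top $k$ elements and $0$ to the rest), since these are precisely the extreme rays of the cone of compatible weight functions. First I would record the elementary identity: for any compatible $w$, writing the distinct positive values attained by $w$ and using that $w$ is nonincreasing along the value order $\succ$, one can express $w = \sum_{k=1}^{n} c_k w_k$ with all $c_k \ge 0$; concretely $c_k = w(r^k) - w(r^{k+1}) \ge 0$ for $k < n$ and $c_n = w(r^n) \ge 0$ (here $w(r^{k})$ denotes the value of $w$ on the $k$-th highest element). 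Then for any set $S$ we have $w(S) = \sum_k c_k |S \cap R^k|$, so by linearity of expectation $\E[w(\ALG)] = \sum_k c_k \,\E|\ALG \cap R^k|$ and likewise $w(\OPT) = \sum_k c_k\, \E|\OPT\cap R^k|$ — note $\OPT$ is the fixed lexicographic optimum, so $w(\OPT)$ is deterministic, but it is harmless to write it with an expectation since it equals its own expectation.

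For the ``if'' direction, assume \eqref{ordinal2} holds for every $k$. Given any compatible $w$, decompose it as above; multiplying \eqref{ordinal2} by $c_k \ge 0$ and summing over $k$ yields $\E[w(\ALG)] = \sum_k c_k \E|\ALG\cap R^k| \ge \tfrac{1}{\alpha}\sum_k c_k \E|\OPT\cap R^k| = w(\OPT)/\alpha$, which is \eqref{eq1}. For the ``only if'' direction, fix $k \in [n]$ and apply the definition of $\alpha$ ordinal-competitiveness to the specific compatible weight function $w = w_k = \mathbf{1}_{R^k}$ (it is indeed compatible: $r^i \succ r^j$ implies $i < j$, and $\mathbf{1}_{R^k}(r^i) \ge \mathbf{1}_{R^k}(r^j)$ since membership in $R^k$ is ``downward closed'' in the value order). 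For this $w$ we have $w(S) = |S \cap R^k|$ for every $S$, so \eqref{eq1} becomes exactly $\E|\ALG\cap R^k| \ge \E|\OPT\cap R^k|/\alpha$, i.e.\ \eqref{ordinal2}.

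The only subtlety worth flagging is purely bookkeeping: one must be slightly careful that ties in $w$ are allowed (compatibility only forces $w$ to be nonincreasing, not strictly decreasing, along $\succ$), but this causes no trouble since the coefficients $c_k$ are then simply zero for the corresponding indices and the telescoping still goes through. I do not expect a real obstacle here; the content is just the observation that the indicator weights $\mathbf{1}_{R^k}$ generate the compatible cone and that both sides of \eqref{eq1} are linear in $w$. If one prefers to avoid superscripts colliding with the paper's convention that $w(r^k)$ already means ``value of $w$ on the $k$-th element,'' one can instead index the decomposition by the sorted element list directly; this is cosmetic.
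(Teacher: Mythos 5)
Your proposal is correct and matches the paper's own argument: the ``only if'' direction tests the definition on the compatible indicator weights $\mathbf{1}_{R^k}$, and the ``if'' direction uses exactly the same telescoping decomposition $w=\sum_k (w(r^k)-w(r^{k+1}))\mathbf{1}_{R^k}$ (with $w(r^{n+1}):=0$) together with linearity of expectation. No substantive difference from the paper's proof.
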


\begin{proof}
	Consider an $\alpha$ ordinal-competitive algorithm returning a set $\ALG$ and let $k\in [n]$.  Define
	the weight function $w(r)=1$ if $r\in R^k$ and zero otherwise, which is compatible with the value order. Then, $\E[|\ALG\cap R^k|]=\E[w(\ALG)] \geq \frac{1}{\alpha}w(\OPT) = \frac{1}{\alpha} |\OPT\cap R^k|.$
	Now suppose that \eqref{ordinal2} holds. Then, for any compatible function $w$, and defining $w(r^{n+1}):=0$, 
	\begin{align*}
		\E[w(\ALG)] &=\sum_{k=1}^n (w(r^k)-w(r^{k+1}))\cdot \E[|\ALG \cap R^k|]\\
				&\geq   \sum_{k=1}^n (w(r^k)-w(r^{k+1}))\cdot \frac{1}{\alpha}\E[|\OPT \cap R^k|] = \frac{1}\alpha \E[w(\OPT)]. \qedhere
	\end{align*}
\end{proof}

In the second measure we consider, we want to make sure that every element of the optimum is part of the output with large probability. We say that an algorithm is $\alpha\geq 1$ {\it probability-competitive} if for every $e\in \OPT$,
\begin{equation}
\Pr(e\in \ALG) \geq 1/\alpha.
\end{equation}
Finally, in the third measure we want to maximize the number of elements in the optimum that the algorithm outputs. We say an algorithm is $\alpha\geq 1$ {\it intersection-competitive} if
\begin{equation}\E[|\OPT \cap \ALG|]\geq |\OPT| / \alpha.\end{equation}


\noindent \emph{Relation between variants.} The ordinal and probability measures are in fact stronger than the standard utility notion. That is, any $\alpha$ ordinal/probability-competitive algorithm yields to an $\alpha$ utility-competitive algorithm. Furthermore, the probability is the stronger of them all.
\begin{restatable}{lem}{variantrelations}
\label{lem:variant-relations}
	If an algorithm is $\alpha$ ordinal-competitive then it is $\alpha$ utility-competitive. If an algorithm is $\alpha$ probability-competitive then it is also $\alpha$ ordinal, utility and intersection-competitive.
\end{restatable}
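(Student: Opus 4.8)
The plan is to establish the chain of implications $\text{probability} \Rightarrow \text{ordinal} \Rightarrow \text{utility}$ and separately $\text{probability} \Rightarrow \text{intersection}$, relying mostly on the characterization in Lemma~\ref{lem:ordinal-equivalence} and on linearity of expectation. First I would handle the claim that ordinal-competitiveness implies utility-competitiveness: this is immediate, since ordinal-competitiveness requires condition~\eqref{eq1} to hold \emph{for every} weight function compatible with the value order, and the true (hidden) weight function of a utility-MSP instance is in particular one such compatible function; hence \eqref{eq1} holds for it, which is exactly $\alpha$ utility-competitiveness.

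Next I would show that probability-competitiveness implies ordinal-competitiveness. Here I invoke the equivalent characterization from Lemma~\ref{lem:ordinal-equivalence}: it suffices to verify that $\E|\ALG \cap R^k| \ge \E|\OPT \cap R^k|/\alpha$ for every $k \in [n]$. Writing both sides via indicator variables and using linearity of expectation, $\E|\ALG \cap R^k| = \sum_{e \in R^k} \Pr(e \in \ALG)$ and $\E|\OPT \cap R^k| = \sum_{e \in R^k} \Pr(e \in \OPT)$. The key observation is that if $e \in R^k$ and $e \notin \OPT$ then $e$ contributes $0$ to the right-hand sum but a nonnegative amount to the left-hand sum, so it is enough to compare the contributions of elements $e \in R^k \cap \OPT$. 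Wait --- more carefully: $\OPT$ is a fixed set (it is $\OPT(R)$, determined by the value order, not random), so $\Pr(e \in \OPT) \in \{0,1\}$, and $\E|\OPT \cap R^k| = |\OPT \cap R^k|$. Thus it suffices to show $\sum_{e \in R^k}\Pr(e \in \ALG) \ge \frac{1}{\alpha}|\OPT \cap R^k|$, and since $\Pr(e \in \ALG) \ge 1/\alpha$ for each $e \in \OPT$ by hypothesis, restricting the left sum to $e \in \OPT \cap R^k$ already gives $\sum_{e \in R^k}\Pr(e \in \ALG) \ge \sum_{e \in \OPT \cap R^k} \Pr(e \in \ALG) \ge |\OPT \cap R^k|/\alpha$, as needed. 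Combined with the previous paragraph, probability-competitiveness also implies utility-competitiveness.

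Finally, for intersection-competitiveness: again by linearity of expectation and the fact that $\OPT$ is deterministic, $\E|\OPT \cap \ALG| = \sum_{e \in \OPT} \Pr(e \in \ALG) \ge \sum_{e \in \OPT} \frac{1}{\alpha} = |\OPT|/\alpha$, which is exactly the definition of $\alpha$ intersection-competitiveness. I do not anticipate a genuine obstacle here; the only point requiring a little care is making explicit that $\OPT = \OPT(R)$ is a fixed (non-random) set so that its indicator variables are constants, and invoking Lemma~\ref{lem:ordinal-equivalence} rather than arguing about arbitrary compatible weight functions directly (though the direct argument, summing the telescoping identity $w(\ALG) = \sum_k (w(r^k)-w(r^{k+1}))|\ALG \cap R^k|$ as in the proof of that lemma, works equally well and could be substituted if one prefers a self-contained computation).
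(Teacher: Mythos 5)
Your proposal is correct and follows essentially the same route as the paper: ordinal $\Rightarrow$ utility is immediate from the definition, and for the probability hypothesis you bound $\E|\ALG\cap R^k|$ below by $\sum_{e\in\OPT\cap R^k}\Pr(e\in\ALG)\ge|\OPT\cap R^k|/\alpha$ and invoke Lemma~\ref{lem:ordinal-equivalence}, with the intersection bound by the same linearity-of-expectation computation the paper uses.
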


\begin{proof}[Proof of Lemma \ref{lem:variant-relations}]
	If an algorithm is $\alpha$ ordinal-competitive then by definition it is $\alpha$ utility-competitive. Now consider an $\alpha$ probability-competitive algorithm returning a set $\ALG$. For any $k\in [n]$, we have
	$\E[|\ALG\cap R^k|] \geq \E[|\ALG\cap \OPT\cap R^k|] = \sum_{r\in \OPT\cap R^k}\Pr(r\in \ALG)\geq \frac{|\OPT\cap R^k|}{\alpha}.$
	which means that the algorithm is $\alpha$ ordinal-competitive, and therefore, it is also $\alpha$ utility-competitive.
	To see that the algorithm is also $\alpha$ intersection-competitive we note that
	
	\[\E[|\ALG\cap \OPT|] =\sum_{e \in \OPT} \Pr(e\in \ALG) \geq \frac{|\OPT|}{\alpha}.\quad \qedhere\]
\end{proof}

%
%

An algorithm for the utility variant, which recall is able to use the elements' weights, may not be adapted to the ordinal MSP, since in this model it can not use any weight information. In other words, the existence of an $\alpha$ utility-competitive for a matroid (or matroid class) does not imply the existence of an $\alpha$-competitive algorithm for any of the other three measures.
 It is  worth noting that the intersection-competitive measure is incomparable with the other measures. It is not hard to find fixed families of instances for which a given algorithm is almost 1 intersection-competitive but has unbounded utility/ordinal/probability-competitiveness. There are also examples achieving almost 1 ordinal-competitiveness but unbounded intersection-competitive ratio.

\section{Improved algorithms for specific matroids}\label{sec:specific}

In this section we describe a powerful \emph{forbidden sets} technique to analyze algorithms for the ordinal MSP. Thanks to this technique we devise algorithms for many matroid classes previously studied in the context of MSP and to other matroids that have not been studied in this context. Our results improve upon the best known competitive ratios for almost all studied matroid classes. The only classes in which we do not find an improvement are cographic matroids for which there is already a $3e$ probability-competitive algorithm \cite{Soto2013}, and regular-and-MFMC matroids for which there is a $9e$ ordinal-competitive algorithm \cite{DK2014}. At the end of this section we briefly study uniform matroids, for which Kleinberg's algorithm \cite{Kleinberg2005} yields a $1+O(\sqrt{1/\rho})$ ordinal-competitive guarantee. We show that the probability-competitiveness of Kleinberg's algorithm is at least 4/3 and we propose a variant that achieves a slightly weaker ratio of $1+O(\sqrt{\log \rho / \rho})$ in the probability notion.

\subsection{Forbidden sets technique}

The following is the key definition that allows us to devise constant probability-competitive algorithms for specific classes of matroids. Recall that $|R|=n$, and for $t\in [n]$, $r_t$ is the random element arriving at time $t$, and $R_t$ is the random set of elements arriving at or before time $t$.

\begin{defi} An algorithm has forbidden sets of size $k$ if it has the following properties.
	\begin{compactenum}
		\item \textbf{(Correctness)} The algorithm returns an independent set $\ALG$.
		\item \textbf{(Sampling property)} It chooses a sample size $s$ at random from $\Bin(n,p)$ for some fixed \emph{sampling probability} $p$, and it does not accept any element from the first $s$ arriving ones.
		\item \textbf{($k$-forbidden property)}\label{property:k-forbidden}  For every triple $(X,Y,r^*)$ with $Y \subseteq R$, $r^* \in \OPT(Y)$ and $X\subseteq Y-r^*$, one can define a set $\calF(X,Y,r^*)\subseteq X$ of at most $k$ \emph{forbidden elements} of $X$ such that the following condition holds. Let $t\geq s+1$ be a fixed time. If $r_t\in \OPT(R_t)$ and for every $j\in \{s+1,\dots, t-1\}$,  $r_j \not\in \calF(R_j,R_t,r_t)$ then $r_t$ is selected by the algorithm.
		\end{compactenum}
\end{defi}

To better understand the $k$-forbidden property suppose that a fixed element $r^*\in \OPT$ arrives at step $t\geq s+1$, that is $r_t=r^*$. Note that the set $R_{t-1}$ of elements arriving before $r_t$ is a random subset of size $t-1$ of $R-r_t$, and no matter the choice of $R_{t-1}$, $r^*$ is always part  of $\OPT(R_t)=\OPT(R_{t-1}+r_t)$.
Inductively, for $j=t-1$ down to $j=1$, once $R_{j}$ is specified, $r_{j}$ is a uniform random element of $R_j$, and $R_{j-1}$ is defined as $R_j-r_j$. Moreover, this choice is \emph{independent} of the previous random experiments (i.e., the choices of $\{r_{j+1},\dots, r_{t-1}, R_{t-1}\}$). The $k$-forbidden property (\ref{property:k-forbidden}.) says that if for every $j\in \{s+1, \dots, t-1\}$ element $r_j$ is not a forbidden element in $\calF(R_j, R_t, r_t)$ then $r^*$ is guaranteed to be selected by the algorithm.  Designing algorithms with small forbidden sets is the key to achieve constant competitiveness as our key Lemma (that we restate below) shows.

%

\key*

\begin{proof}
	Fix an element $r^*$ from $\OPT$ and condition on the realization of $s\sim \Bin(n,p)$, on the time $t$ on which $r^*=r_t$ arrives and on the set $Y=R_t\ni r_t$ of the first $t$ elements arriving. Abbreviate $\PP_t(\cdot )=\Pr(\cdot | r_t=r^*, R_t=Y, s).$ By Lemma \ref{lem:opt-restricted},  $r^*\in \OPT(R_t)$ and  by the $k$-forbidden property, 
\begin{align*}
\PP_t(r^*\in \ALG) &\geq \PP_t\left(\text{For all } j \in \{s+1,\dots, t-1\},\, r_j \in R_j \setminus \calF(R_j,Y,r^*)\right)\\
&= \prod_{j=s+1}^{t-1} \Pr(r_j \in R_j\setminus \calF(R_j,Y,r^*)) \geq \prod_{j=s+1}^{t-1}\left(\frac{j-k}{j}\right)_+,
\end{align*}
where $x_+=\max\{0,x\}$. The equality above holds because of the independence of the random experiments defining iteratively $r_{t-1}$, $r_{t-2}$, down to $r_{s+1}$ as mentioned before the statement of this lemma.  
By removing the initial conditioning we get
\begin{align}
\Pr(r^*\in \ALG) &\geq \E_{s\sim \Bin(n,p)}\frac{1}{n}\sum_{t=s+1}^n \prod_{j=s+1}^{t-1}\left(1-\frac{k}{j}\right)_+. \label{rhs}
\end{align}

To compute the right hand side we use the following auxiliary process. Suppose that $n$  people participate in a game. Each player $x$ arrives at a time $\tau(x)$ chosen uniformly at random from the interval $[0,1]$. Each person arriving after time $p$ selects a subset of $k$ \emph{partners} from the set of people arriving before them, without knowing their actual arrival times (if less than $k$ people have arrived before her, then all of them are chosen as partners).  A player wins if she arrives after time $p$ and \text{every} one of her partners arrived before time $p$. Since the arrival times are equally distributed and the event that two people arrive at the same time has zero probability the arrival order is uniform among all possible permutations. Furthermore, the number of people arriving before time $p$ distributes as $\Bin(n,p)$. Using these facts, the probability that a given person $x$ wins is exactly the right hand side of \eqref{rhs}.
But we can also compute this probability using its arrival time $\tau(x)$ as
\begin{align*}
\int_{p}^1 \Pr(\text{ all partners of $x$ arrived before time $p$ } |\ \tau(x)=\tau)\ d\tau &\geq 
\int_{p}^1 (p/\tau)^k d\tau,
\end{align*}
which holds since the arrival time of each partner of $x$ is a uniform random variable in $[0,\tau]$, and conditioned on $\tau$, each partner arrives before time $p$ with probability $p/\tau$. Since $x$ may have less than $k$ partners, we don't necessarily have equality. We conclude that for every $r^*\in \OPT$,
\begin{align*}
\Pr(r^*\in \ALG) \geq \int_{p}^1(p/\tau)^k d\tau =\begin{dcases*}
-p\ln(p),& if $k=1$,\\
\frac{p-p^k}{k-1}, &if $k\geq 2$.
\end{dcases*}
\end{align*}

By optimizing the value of $p$ as a function of $k$, we obtain that the probability of $r^*\in \ALG$ is $1/\alpha(k)$, with $\alpha(k)$ as in the statement of the lemma and the probability achieving it is $p=p(k)$.\end{proof}

The idea of analyzing an algorithm as a series of stochastically independent experiments which defines the reverse arrival sequence appears very early in the history of the secretary problem. As mentioned in the introduction, one can prove that the algorithm for the classical secretary algorithm that samples $s\sim\Bin(n,p(1))$ elements and then selects the first element better than all the sampled ones is $e$ probability-competitive showing that it has forbidden sets of size 1. In our notation, for each $(X,Y,r^*)$ with $r^*$ the maximum element of $Y$ and $X\subseteq Y-r^*$, define the forbidden set $\calF(X,Y,r^*)$ as the singleton $\OPT(X)$. The $1$-forbidden condition states that if the element $r_t$ arriving at time $t$ is a record (it is in $\OPT(R_t)$), and if for every time $j\in \{s+1,\dots, t-1\}$, $r_j \not\in \OPT(R_j)$ (i.e., the $j$-th arriving element is not a record), then $r_t$ will be chosen by the algorithm. Since all forbidden sets have size at most 1, setting the sampling probability to be $p(1)=1/e$ guarantees the probability-competitive ratio of $\alpha(1)=e$.

Kesselheim et al.'s algorithm for online bipartite matching \cite{KRTV2013} uses a technique very similar to that in the previous lemma to compute the expected weight contribution of the $k$-th arriving vertex, which is enough to prove asymptotically $e+O(1/n)$ utility-competitiveness for online bipartite matchings. In fact, for transversal matroids, this yields to an $e+O(1/n)$ ordinal-competitive algorithm. But their analysis does not imply any guarantee on the probability notion. In the next section we show by using a different analysis, that their algorithm has forbidden sets of size 1 and so it is actually $e$ probability-competitive (always, not just asymptotically). 

It turns out that many classes of matroids behave very similarly to transversal matroids in the following sense. A set $X$ is independent on a given matroid if and only if each element $r\in X$ can be mapped to an object (e.g., an edge covering $r$, a path ending in $r$, a subgraph covering $r$, etc.) $\tilde{r}$ such that the set $\tilde{X}=\{\tilde{r}\colon r\in X\}$  satisfies a combinatorial property (e.g., a matching covering $X$, a collection of edge/node disjoint paths connecting $X$ with some source, a collection of disjoint subgraphs covering $X$). We call $\tilde{X}$ a \emph{witness} for the independence of $X$. A set $X$ may have multiple witnesses, but we will always assume that there is a \emph{canonical witness}, $\text{witness}(X)$, that can be computed by the algorithm and furthermore, the choice of the witness cannot depend on the set of elements seen so far nor on its arrival order. 
The fact that the witnesses do not depend on the arrival order makes them amenable to the analysis by the reverse arrival sequence analysis above, which in turn, will help us to devise algorithms with constant-size forbidden sets. 

\subsection{Transversal matroids and Gammoids}

\noindent{\it Transversal matroids.} Let $G=(L \cup R, F)$ be a bipartite graph with color classes $L$ and $R$, where the elements of $R$ are called the {\it terminals} of $G$. 
The transversal matroid $\calT[G]$ associated to $G$ is the matroid with ground set $R$ whose independent sets are those $X\subseteq R$ that can be covered by a matching in $G$.  
We call $G$ the {\it transversal presentation} of $\calT[G]$. \\

\noindent{\it Gammoids.} Let $G=(V,E)$ be a digraph and two subsets $S,R\subseteq V$ called {\it sources} and {\it terminals} respectively, which do not need to be disjoint. 
The {\it gammoid} $\Gamma(G,S,R)$ is the matroid over 
the terminals where $X\subseteq R$ is independent if $X$ is linked to $S$, that is, if there are node-disjoint directed paths starting from $S$ and ending on each element of $X$. 
We say that $(G,S,R)$ is the gammoid presentation of the matroid. 
Note that transversal matroids are gammoids by declaring all the non-terminals as sources and directing the arcs in the transversal presentation from sources to terminals. \\

\noindent{\it Transversal and Gammoid MSP.} In the \emph{Transversal MSP}, a transversal presentation $G$ for an unknown ordered matroid $\calM=\calT[G]$ is either revealed at the beginning of the process, or it is revealed online in the following way. Initially, the algorithm only knows the number of terminals. Terminals arrive in random order and whenever $r\in R$ arrives, it reveals its ordinal value information and the set of its neighbors in $L$. 
In the \emph{gammoid MSP}, a gammoid presentation $(G,S,R)$ for an unknown gammoid is either revealed at the beginning or it is revealed online as elements from $R$ arrive: when a terminal $r\in R$ arrives all possible $S$-$r$ paths are revealed. 
At that time the algorithm has access to the subgraph  $G_t\subseteq G$ only containing the arcs belonging to every possible $S$-$R_t$ path and can test whether a vertex is in $S$ or not.

Most of the known algorithms for transversal MSP  \cite{BIK2007, Dimitrov2012, KP2009} work with ordinal information. 
The best algorithm so far, by Kesselheim et al.~\cite{KRTV2013} achieves an asymptotically optimal utility-competitive ratio of $e+O(1/n)$ for the more general (non-matroidal) \emph{vertex-at-a-time bipartite online matching problem}, in which edges incident to the same arriving vertex may have different weights and the objective is to select a matching of maximum total weight. For the specific case of transversal matroid this algorithm can be implemented in the ordinal model, meaning that is $e+O(1/n)$ ordinal-competitive. 
Interestingly, for the broader bipartite matchings case, Kesselheim's algorithm does not work in the ordinal model, but the previous algorithm by Korula and P\'al \cite{KP2009} does, achieving $8$ ordinal-competitiveness. A recent result by Hoefer and Kodric \cite{HoeferK2017} improves this factor to $2e$ ordinal-competiive for bipartite matchings.\\

\noindent{\it Exchangeability parameter for gammoids.} Let us define a parameter to control the competitiveness of our algorithm for the gammoid MSP. 
Let $X$ be an independent set and let $Q$ be a path linking a terminal $r\in R\setminus X$ outside $X$ to $S$. 
The \emph{exchangeability} $\mu$ of the presentation $(G,S,R)$ is the maximum number of paths in $\calP_X$ that $Q$ intersects. 
The intuition behind is the following: in order to include $Q$ into $\calP_X$ while keeping disjointness we have to remove or exchange at least $\mu$ paths from $\calP_X$. 
For instance, if we define the \emph{diameter} $d$ of the gammoid presentation as the maximum number of nodes in any source-terminal path, then $\mu$ is at most $d$. If furthermore the terminals are sinks, that is out-degree 0, then $\mu$ is at most $d-1$, since paths ending at different terminals cannot intersect on a terminal. 

\begin{rmk}\label{remark:trans-are-gam}
This is the case for transversal matroids: their diameter in the gammoid presentation is $2$ and their exchangeability is 1. For our results, we assume the algorithm also knows an upper bound $\mu$ for the exchangeability parameter, and in this case we call the problem {\it $\mu$-gammoid MSP} or bounded exchangeability gammoid MSP.
\end{rmk}
The $\mu$-gammoid MSP problem is a special case of the (non-matroidal) {\it hypergraph vertex-at-a-time matching} (HVM) with edges of size at most $\mu+1$ studied by Korula and P\'al \cite{KP2009} and later by Kesselheim et al.'s \cite{KRTV2013} online hypermatching problem (see the discussion in those papers for precise definitions). They achieve $O(\mu^2)$ and $e\mu$ utility-competitiveness respectively.  Below we propose an algorithm for the $\mu$-gammoid MSP that has forbidden sets of size $\mu$. Provided we know $\mu$ upfront we get an $\alpha(\mu)$ probability-competitive algorithm. Note that for $\mu\geq 2$, $\alpha(\mu)=\mu^{1+1/(\mu-1)} < e\mu$, so our guarantee strictly improves on that of previous algorithms for HVM and hypermatching, on the special case of $\mu$-gammoids.\\

\noindent{\it Our algorithms.} We use the convention that for every vertex $v$ covered by some matching $M$, $M(v)$  denotes the vertex matched with $v$ in $M$. Furthermore, for every independent set $X\subseteq R$, we select canonically a witness matching $M_X:=\text{witness}(X)$ that covers $X$. In the case of gammoids, for any set $\calP$ of node-disjoint paths linking some set $X$ to $S$, and for every $v\in X$, $\calP(v)$  denotes the unique path in $\calP$ linking 
$v$ to $S$. We also say that $\calP$ covers a vertex $u$ if $u$ is in the union of the vertices of all paths in $\calP$.  
Furthermore, for every independent set $X\subseteq R$, we canonically select a fixed collection of node-disjoint $S$-$X$ directed paths $\calP_X:=\text{witness}(X)$ linking $X$ to $S$, and we assume that this choice does not depend on the entire graph but only on the minimum subgraph containing all arcs in every $S$-$X$ path. We also recall that on step $i$, $R_i=\{r_1,\dots, r_i\}$ denotes the set of revealed terminals and $G_i$ denotes the subgraph of the presentation currently revealed.
\noindent \begin{minipage}[t]{8cm}
  \vspace{0pt}  
\begin{algorithm}[H]
	\small
	\begin{algorithmic}[1]
		\Require{Presentation of a transversal matroid $\calT[G]$ whose terminals arrive in random order.}
		\Statex $\triangleright${$M$ and $\ALG$ are the currently chosen matching and right vertices respectively.}
		\Statex 
		\State $\ALG\gets \emptyset$, $s\gets \Bin(n,p)$, $M \gets \emptyset$ 
		\For{$i=s+1$ to $n$}
		\If{$\vphantom{\big|}r_i \in \OPT(R_i)$ and $\ell_i:=M_{\OPT(R_i)}(r_i)$ is not covered by $M$ $\vphantom{\big|}$} 
		\State $\vphantom{\big|}\ALG\gets \ALG+r_i$, $M \gets M\cup \{\ell_ir_i\}$	
		\EndIf 
		\EndFor
		\State Return $\ALG$.
	\end{algorithmic}
	\caption{\small for transversal matroids. \label{alg:transversal}}
\end{algorithm}

\end{minipage}%
\hfill
\begin{minipage}[t]{8.1cm}
  \vspace{0pt}
\begin{algorithm}[H]
	\small
	\begin{algorithmic}[1]
		\Require{Presentation of a gammoid $\Gamma:=\Gamma(G,S,R)$ whose terminals arrive in random order.}
		\Statex $\triangleright${$\calP$ and $\ALG$ are the currently chosen collection of node-disjoint paths and terminals selected respectively.}
		\State $\ALG\gets \emptyset$, $s\gets \Bin(n,p)$, $\calP \gets \emptyset$. 
		\For{$i=s+1$ to $n$}
		\If{$\vphantom{\big|}r_i \in \OPT(R_i)$ and  no vertex in the path $\calP_{\OPT(R_i)}(r_i)$ is covered by $\calP\vphantom{\big|}$} 
		\State $\vphantom{\big|}\ALG\gets \ALG+r_i$, $\calP \gets \calP \cup \{\calP_{\OPT(R_i)}(r_i)\}$	
		\EndIf 
		\EndFor
		\State Return $\ALG$.
	\end{algorithmic}
	\caption{\small for $\mu$-bounded gammoids. \label{alg:gammoid}}
\end{algorithm}
\end{minipage}
\vspace{.5cm}

The algorithms above can compute $\OPT(R_i)$ without knowing the terminal weights, by just using the greedy algorithm. 
This requires that one is able to check independence algorithmically in each case. Indeed, for the transversal MSP algorithm, a set $X\subseteq R_i$ is independent if and only if the maximum cardinality matching on $G_i[N_{G}(X)\cup X]$ has size $|X|$. In the case of gammoids, one can check if $X\subseteq R_i$ is independent, by a standard reduction to a flow problem on $G_i$.
\begin{thm}
	Algorithm \ref{alg:gammoid} has forbidden sets of size equal to the exchangeability $\mu$ of the gammoid presentation. If $\mu$ is known, we can set $p=p(\mu)$ to get an $\alpha(\mu)$ probability-competitive algorithm.
\end{thm}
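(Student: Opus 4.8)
The plan is to verify the three properties in the definition of ``forbidden sets of size $k$'' for Algorithm~\ref{alg:gammoid} with $k=\mu$, and then invoke the Key Lemma (Lemma~\ref{lem:keylemma}). Correctness (property~1) is immediate: the algorithm only adds a terminal $r_i$ when the entire path $\calP_{\OPT(R_i)}(r_i)$ avoids every vertex already covered by $\calP$, so the updated $\calP$ remains a collection of node-disjoint $S$-$\ALG$ paths; hence $\ALG$ is always linked to $S$ and therefore independent in $\Gamma$. The sampling property (property~2) holds by construction since $s\sim\Bin(n,p)$ and nothing is accepted during the first $s$ arrivals.

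The heart of the argument is the $\mu$-forbidden property. Fix a triple $(X,Y,r^*)$ with $r^*\in\OPT(Y)$ and $X\subseteq Y-r^*$. I would define the forbidden set $\calF(X,Y,r^*)$ as the set of terminals $r\in X$ such that the canonical witness path $\calP_{\OPT(Y)}(r^*)$ for $r^*$ (inside $\OPT(Y)$) intersects the canonical witness path $\calP_{\OPT(X)}(r)$ for $r$ (inside $\OPT(X)$) — more precisely, one restricts attention to those $r\in\OPT(X)$, since only elements of $\OPT(X)$ can ever be selected by the algorithm when the current revealed set is (a superset containing) them. The key point is that $|\calF(X,Y,r^*)|\le\mu$: the single path $Q:=\calP_{\OPT(Y)}(r^*)$ is a path linking the terminal $r^*\notin\OPT(X)$ to $S$ (using $r^*\in\OPT(Y)$, $X\subseteq Y-r^*$, and Lemma~\ref{lem:opt-restricted} which gives $\OPT(X)\subseteq\OPT(Y)$ — wait, more carefully, one needs $r^*\notin\text{span}$ of the relevant set so that $\OPT(X)+r^*$ is independent; this follows because $r^*\in\OPT(Y)$ so $r^*\notin\text{span}_Y(\{$elements of $Y$ above $r^*\})\supseteq\text{span}_Y(\OPT(X)\cap\{$above $r^*\})$, and elements of $\OPT(X)$ below $r^*$ don't matter for spanning), so by the very definition of the exchangeability parameter $\mu$, $Q$ intersects at most $\mu$ of the paths in the canonical witness collection $\calP_{\OPT(X)}$. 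Thus at most $\mu$ terminals $r\in\OPT(X)$ have $\calP_{\OPT(X)}(r)$ meeting $Q$, giving $|\calF(X,Y,r^*)|\le\mu$.

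Next I would prove the guarantee: if $r_t\in\OPT(R_t)$ and for every $j\in\{s+1,\dots,t-1\}$ we have $r_j\notin\calF(R_j,R_t,r_t)$, then $r_t$ is selected. Set $Q:=\calP_{\OPT(R_t)}(r_t)$, the path the algorithm would use to accept $r_t$ at step $t$. It suffices to show that no vertex of $Q$ is covered by $\calP$ at the moment $r_t$ arrives, i.e.\ that none of the paths added during steps $s+1,\dots,t-1$ meets $Q$. Suppose $r_j$ was accepted at some step $j<t$; then the path added was $\calP_{\OPT(R_j)}(r_j)$ and $r_j\in\OPT(R_j)$. Because the canonical witness depends only on the minimal subgraph spanned by the relevant $S$-$X$ paths and not on arrival order, one has $\calP_{\OPT(R_j)}(r_j)=\calP_{\OPT(R_j)\cap(\text{stuff})}(r_j)$ — the clean statement I want is that the path the algorithm assigned to an accepted $r_j$ coincides with the path $\calP_{\OPT(X)}(r_j)$ appearing in the definition of $\calF$ for the appropriate $X$; here I would take $X=R_j$, noting $r_j\in\OPT(R_j)=\OPT(X)$ and $R_j\subseteq R_t-r_t$. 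Then $r_j\notin\calF(R_j,R_t,r_t)$ means exactly that $\calP_{\OPT(R_j)}(r_j)$ does not meet $Q$. Doing this for all accepted $j\in\{s+1,\dots,t-1\}$ shows $\calP$ avoids $Q$, so the ``if'' condition in line~3 is satisfied and $r_t$ is accepted.

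The main obstacle I anticipate is making the consistency-of-witnesses bookkeeping fully rigorous: the definition of $\calF$ must be phrased purely in terms of the triple $(X,Y,r^*)$ (in particular its witness paths $\calP_{\OPT(X)}$ and $\calP_{\OPT(Y)}$), while in the run of the algorithm the path actually assigned to an accepted element $r_j$ is $\calP_{\OPT(R_j)}(r_j)$, and one must check these match up when $X=R_j$, $Y=R_t$. This is precisely why the hypothesis that the canonical witness $\text{witness}(\cdot)$ does not depend on arrival order (only on the minimal subgraph) is essential, and I would spell out that invariant carefully. The parts that are genuinely routine — correctness, the sampling property, and plugging $k=\mu$ into Lemma~\ref{lem:keylemma} to conclude the $\alpha(\mu)$ probability-competitiveness — I would state briefly. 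Once the $\mu$-forbidden property is established, the final sentence of the theorem is immediate from the Key Lemma with $p=p(\mu)$.
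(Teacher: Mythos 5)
Your proposal is correct and follows essentially the same route as the paper: the same forbidden set $\calF(X,Y,r^*)=\{v\in\OPT(X)\colon \calP_{\OPT(X)}(v)\text{ intersects }\calP_{\OPT(Y)}(r^*)\}$, the same correctness and sampling observations, the size bound $|\calF|\le\mu$ straight from the definition of exchangeability, and the conclusion via the Key Lemma with $p=p(\mu)$. The only remark is that your digression about spans (and the tentative claim $\OPT(X)\subseteq\OPT(Y)$, which is not what Lemma~\ref{lem:opt-restricted} gives) is unnecessary: the exchangeability definition only requires that $r^*$ be a terminal outside the independent set $\OPT(X)$ and that $\calP_{\OPT(Y)}(r^*)$ be an $S$-$r^*$ path, both of which hold trivially since $r^*\notin X$.
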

\begin{proof}
	By construction, the set $\calP$ contains node-disjoint paths covering $\ALG$ at every time step, hence the algorithm is correct. The sampling condition is also satisfied by design. Let $r^*\in \OPT(Y)$ where $Y$ is a fixed set of terminals of size $t\geq s+1$, and suppose that $R_t=Y$ and $r_t=r^*$. Note that $r_t$ is selected by the algorithm if all vertices in the path $\calP_{\OPT(R_t)}(r_t)$ are not covered by the collection $\calP$ prior to that iteration. 
	In other words, by defining the forbidden sets to be
	$$\calF(X,Y,r^*)=\{v\in \OPT(X)\colon \calP_{\OPT(X)}(v) \text{ intersects } \calP_{\OPT(Y)}(r^*)\},$$
	the element $r_t$ is selected if $r_j\not\in \calF(R_j,R_t,r_t)$ for all $j\in \{s+1,\dots,t-1\}$. By definition, each forbidden set has size at most $\mu$.	 
\end{proof}
Algorithm \ref{alg:transversal} is essentially Algorithm \ref{alg:gammoid} applied over the gammoid presentation of the transversal matroid $\calT[G]$. Together with Remark \ref{remark:trans-are-gam}, it follows the result for the transversal MSP. 
\begin{thm}
	Algorithm \ref{alg:transversal} has forbidden sets of size 1, and therefore, by choosing $p=1/e$, it is an (optimal) $e$ probability-competitive for transversal matroids.
\end{thm}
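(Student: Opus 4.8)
The plan is to show that Algorithm~\ref{alg:transversal} is exactly the specialization of Algorithm~\ref{alg:gammoid} to the standard gammoid presentation of a transversal matroid, and then invoke the theorem just proved for Algorithm~\ref{alg:gammoid} together with Remark~\ref{remark:trans-are-gam}. First I would recall the reduction: given a bipartite presentation $G=(L\cup R,F)$ of $\calT[G]$, form the digraph $\vec G$ by directing every edge of $F$ from $L$ to $R$, declare $S:=L$ the sources and $R$ the terminals. A set $X\subseteq R$ is linked to $S$ in $\vec G$ by node-disjoint paths if and only if $X$ is covered by a matching of $G$, since every $S$–$X$ path has length exactly one (an edge $\ell r$ with $\ell\in L$, $r\in X$), and node-disjointness of such one-edge paths is precisely the matching condition. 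Hence $\Gamma(\vec G,L,R)=\calT[G]$, and one can choose the canonical witness $\calP_X$ to consist of the single edges $M_X(v)v$ for $v\in X$, where $M_X=\text{witness}(X)$ is the canonical witness matching; this choice depends only on the relevant subgraph, as required.

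Next I would check that under this correspondence the two algorithms make identical decisions on every input. Both sample $s\sim\Bin(n,p)$ and reject the first $s$ terminals. At a step $i>s$, Algorithm~\ref{alg:gammoid} accepts $r_i$ iff $r_i\in\OPT(R_i)$ and no vertex of the path $\calP_{\OPT(R_i)}(r_i)$ is covered by the current collection $\calP$. In the transversal case that path is the single edge $\ell_i r_i$ with $\ell_i=M_{\OPT(R_i)}(r_i)$; its vertices are $\ell_i$ and $r_i$. Since $r_i$ is freshly arrived it is never already covered, so the condition reduces to: $\ell_i$ is not covered by $\calP$, i.e.\ $\ell_i$ is not matched by the current matching $M$ — which is exactly the acceptance test of Algorithm~\ref{alg:transversal}. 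The maintained objects also coincide: $\calP$ is the set of edges $\ell_j r_j$ for accepted $r_j$, which is the matching $M$. Thus $\ALG$ is the same random set in both algorithms.

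Given this identification, the conclusion is immediate: by the theorem for Algorithm~\ref{alg:gammoid}, that algorithm has forbidden sets of size equal to the exchangeability $\mu$ of the presentation $(\vec G,L,R)$; by Remark~\ref{remark:trans-are-gam}, the diameter of this gammoid presentation is $2$ (every source–terminal path has two nodes) and, since terminals are sinks, the exchangeability is $1$. Hence Algorithm~\ref{alg:transversal} has forbidden sets of size $1$. Applying the Key Lemma (Lemma~\ref{lem:keylemma}) with $k=1$ and $p=p(1)=1/e$ yields that the algorithm is $\alpha(1)=e$ probability-competitive. Optimality follows from the classical lower bound of Lindley and Dynkin already for rank-one uniform matroids, which are transversal, so no algorithm can beat $e$ even in the weaker utility sense.

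I do not expect a serious obstacle here; the only point requiring a little care is verifying that the canonical witnesses chosen for the transversal matroid (single matching edges) are compatible with the ``independent of arrival order and of elements seen'' requirement in the definition of forbidden sets, and that the maximum-cardinality-matching characterization of independence in $G_i[N_G(X)\cup X]$ lets the algorithm actually compute $\OPT(R_i)$ greedily online — both of which are routine and were already noted in the surrounding text. One should also state explicitly, to be safe, that the forbidden set $\calF(X,Y,r^*)$ inherited from Algorithm~\ref{alg:gammoid} specializes to $\{v\in\OPT(X): M_{\OPT(X)}(v)=M_{\OPT(Y)}(r^*)\}$, which is a singleton because two distinct vertices cannot be matched to the same left vertex — this is the concrete reason the forbidden-set size drops to $1$.
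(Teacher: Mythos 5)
Your proposal is correct and follows essentially the same route as the paper, which likewise proves this theorem by observing that Algorithm~\ref{alg:transversal} is Algorithm~\ref{alg:gammoid} run on the gammoid presentation of $\calT[G]$ and then invoking Remark~\ref{remark:trans-are-gam} (exchangeability $1$) together with the Key Lemma. You simply spell out the details the paper leaves implicit — the step-by-step equivalence of the two acceptance tests and the explicit singleton form of the forbidden sets — which is a faithful elaboration rather than a different argument.
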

We remark that every constant utility-competitive algorithm so far for transversal MSP requires to learn a bipartite presentation online. It is an open problem to find constant competitive algorithms for transversal matroids that only access the matroid via an independence oracle.

\subsection{Matching Matroids and Matroidal Graph Packings}

\noindent{\it Matroidal graph packings.} Let $\calH$ be a finite set of graphs. An {\it $\calH$-packing} of a host graph $G=(V,E)$ is a collection $\calQ=\{H_i\}_{i=1\dots k}$ of node-disjoint subgraphs of $G$  such that each $H\in \calQ$ is isomorphic to some graph in $\calH$. 
A vertex in $G$ is said to be covered by $\calQ$ if it belongs to some graph of $\calQ$.  
Let $R\subseteq V$ be a set of vertices called terminals and consider the independence system $\calM(R,G,\calH)$ over $R$ whose independent sets are all $X\subseteq R$ for which there is an $\calH$-packing covering $X$ in $G$. 
We say that $\calH$ is \emph{matroidal} if $\calM(V(G),G,\calH)$ defines a matroid for every graph $G$.  
Note that in this case, if $\calM(R,G,\calH)$ is  the restriction of $\calM(V(G),G,\calH)$ to a subset $R$, then it is also a matroid. 
We call every such $\calM(R,G,\calH)$ a graph packing matroid. \\

\noindent{\it Matroidal families.} 
For an extensive treatment see Loebl and Poljak \cite{Loebl1988}, Janata \cite{Janata2005} and the references therein. 
In the following examples, $K_n$ denotes the complete graph on $n$ vertices and $S_n$ denotes the star with $n$ legs. 
If $\calH=\{K_1\}$ then $\calM(V, G,\calH)$ is the free matroid over $V$ where all sets are independent.
If $\calH=\{K_2\}$ then $\calM(V,G,\calH)$ is the \emph{matching matroid} of $G$, whose independent sets are all vertex sets that can be covered by a matching.
If for some $k$ the family $\calH=\{S_1,S_2,\dots, S_k\}$ is a sequential sets of stars, then $\calM(V,G,\calH)$ is matroidal. It is, in fact, the matroid union of many matching matroids.
The family $\calM(V,G,\calH)$ is also matroidal if $\calH=\{K_2, H\}$ where $H$ is either  a factor-critical graph or a 1-propeller.\footnote{A factor-critical graph is one such that $H-x$ admits a perfect matching for all $x\in V(H)$. A 1-propeller is a graph having a leaf $r$, and a vertex $c$ such that for every $x\in V(H)-c$, $H-x$ admits a perfect matching.}
For most known matroidal classes there are polynomial time algorithms available to check independence. This is the case for all classes above. Observe that transversal matroids are instances of matching matroids restricted to one side of the bipartition of the  host graph.

As we did for gammoids, we also define an exchangeability parameter $\mu$ to control the competitiveness of our algorithm. Consider an $\calH$-packing $\calQ$ of $G$, and a subgraph $H\subseteq G$ from the class $\calH$ covering a terminal $r\in R$ that is not covered by $\calQ$. The exchangeability $\mu$ of $\calM(R,G,\calH)$ is the maximum number of terminals over all such $\calQ$ and $H$ that would become uncovered if we removed all graphs from $\calQ$ that intersect $H$, namely,
\[\mu:=\max\left\{\sum_{H'\in \calQ\colon V(H)\cap V(H')\neq \emptyset} |V(H')\cap R| \,\colon H \text{ covers a terminal not covered by $\calQ$}\right\}.\] 
This parameter may be  complicated to compute but there is a simple upper bound: let $h$ be the maximum number of vertices of a graph from $\calH$. Then the worst possible situation occurs when $\calQ$ contains only graphs of size $h$, and $H$ is also a graph of size $h$ intersecting every graph in $\calQ$ on exactly one vertex (different from $r$). In this case, all graphs from $\calQ$ must be removed, so the number of newly uncovered vertices becomes $h\cdot(h-1)$.  This means that $\mu\leq h(h-1)$. 

In the \emph{$\calH$-Packing MSP}, the algorithm receives a collection $\calH$ of graphs. A host graph $G=(V,E)$ with terminals $R\subseteq V$ is either revealed at the beginning or it is revealed online as elements from $R$ arrive: when a terminal $r\in R$ arrives, all possible edges that belong to a graph $H\subseteq G$ with $r\in V(H)$ with $H\in \calH$ (via isomorphism) are revealed. More precisely, let $R_t$ denote the set of terminals revealed up to time $t$. At that time the algorithm has access to the subgraph  $G_t\subseteq G$ induced by all vertices belonging to every possible subgraph $H\subseteq G$, with $H\in \calH$ that intersects $R_t$. The algorithm can also test whether a vertex is a terminal or not. We also assume that an upper bound $\mu$ for the exchangeability parameter is available, and in this case we call the problem bounded $\calH$-packing MSP. Analogously to previous sections, for every independent $X$ we select a canonical packing $\calQ_X:=\text{witness}(X)$ that does not depend on the arrival order.

In the description of the algorithm we use the convention that for every $\calH$-packing $\calQ$ of a set $X$, and for every $v\in V$ covered by $\calQ$, $\calQ(v)$ denotes the unique graph in $\calQ$ covering $v$.  We also recall that on step $i$, $R_i=\{r_1,\dots, r_i\}$ denotes the set of revealed terminals and $G_i$ denotes the subgraph of the presentation currently revealed. 

\begin{algorithm}[H]
	\small
	\begin{algorithmic}[1]
		\Require{A matroidal family $\calH$ and a host graph $G=(V,E)$ whose terminals $R\subseteq V$ arrive in random order.}
		\Statex \Comment{$\calQ$ and $\ALG$ are the currently chosen $\calH$-packing and terminals selected respectively.}
		\State $\ALG\gets \emptyset$, $s\gets \Bin(n,p)$, $\calQ \gets \emptyset$. 
		\For{$i=s+1$ to $n$}
		\If{$r_i \in \OPT(R_i)$ and $r_i$ is already covered by $\calQ$.}
		\State $\ALG\gets \ALG+r_i$.
		\ElsIf{$r_i \in \OPT(R_i)$ and $\calQ\cup \{\calQ_{\OPT(R_i)}(r_i)\}$ is an $\calH$-packing.} 
		\State $\ALG\gets \ALG+r_i$, $\calQ\gets \calQ \cup \{\calQ_{\OPT(R_i)}(r_i)\}$.			
		\EndIf 
		\EndFor
		\State Return $\ALG$.
	\end{algorithmic}
	\caption{\small for $\mu$ bounded $\calH$-packing matroids. \label{alg:packing}}
\end{algorithm}

The algorithm can compute $\OPT(R_i)$ without knowing the terminal weights, by  applying the greedy algorithm for $\calM(R,G,\calH)$.

\begin{thm}
	Algorithm \ref{alg:packing} has forbidden sets of size equal to the exchangeability $\mu$ of the $\calH$-presentation. If $\mu$ is known beforehand, we can set $p=p(\mu)$ to obtain an $\alpha(\mu)$ probability-competitive algorithm for $\mu$-bounded graph packing matroids.
\end{thm}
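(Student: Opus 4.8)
The plan is to verify that Algorithm~\ref{alg:packing} meets the three requirements of having \emph{forbidden sets of size $k$} with $k=\mu$, and then to invoke the Key Lemma (Lemma~\ref{lem:keylemma}) with $p=p(\mu)$. Correctness is immediate: the working collection $\calQ$ is an $\calH$-packing at every step, since it starts empty and is enlarged only in the \textbf{ElsIf} branch, precisely when the enlargement still yields an $\calH$-packing; hence the returned $\ALG$ is covered by an $\calH$-packing of $G$ and is therefore independent in $\calM(R,G,\calH)$, using that $\calH$ is matroidal. The sampling property holds by design, as $s\sim\Bin(n,p)$ and nothing among the first $s$ arrivals is accepted.

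The core is to define the forbidden sets and bound their size. Given a triple $(X,Y,r^*)$ with $Y\subseteq R$, $r^*\in\OPT(Y)$ and $X\subseteq Y-r^*$, let $H^*:=\calQ_{\OPT(Y)}(r^*)$ be the canonical graph covering $r^*$ in the witness packing of $\OPT(Y)$, and set
\[
  \calF(X,Y,r^*)=\bigl\{\,v\in\OPT(X)\ :\ \calQ_{\OPT(X)}(v)\ \text{shares a vertex with}\ H^*\,\bigr\}\subseteq\OPT(X)\subseteq X .
\]
The graphs $\calQ_{\OPT(X)}(v)$, for $v\in\OPT(X)$, are members of the pairwise vertex-disjoint packing $\calQ_{\OPT(X)}$, so $|\calF(X,Y,r^*)|$ is at most $\sum |V(H')\cap R|$ taken over the graphs $H'\in\calQ_{\OPT(X)}$ that meet $H^*$. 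Since $H^*$ is a copy of a graph in $\calH$ that covers the terminal $r^*$, and $r^*$ is not covered by $\calQ_{\OPT(X)}$ (here is where $X\subseteq Y-r^*$ and the choice of canonical witnesses are used), the pair $(\calQ_{\OPT(X)},H^*)$ is one of the configurations in the maximum defining the exchangeability, so this sum is at most $\mu$.

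It remains to establish the guarantee. Fix $t\geq s+1$, assume $r_t\in\OPT(R_t)$, put $Y:=R_t$, $H^*:=\calQ_{\OPT(R_t)}(r_t)$, and suppose $r_j\notin\calF(R_j,R_t,r_t)$ for every $j\in\{s+1,\dots,t-1\}$. I will show by induction on $j$ from $s$ up to $t-1$ that the collection $\calQ$ after processing step $j$ is vertex-disjoint from $H^*$. The base $j=s$ is clear since $\calQ=\emptyset$. For the step, the only way $\calQ$ changes at step $j\leq t-1$ is the \textbf{ElsIf} branch, where a graph $H_j:=\calQ_{\OPT(R_j)}(r_j)$ with $r_j\in\OPT(R_j)$ is added; if $H_j$ met $H^*$ then $r_j\in V(H_j)\cap\OPT(R_j)$ would lie in $\calF(R_j,R_t,r_t)$ by definition, contradicting the assumption, so $H_j$ is disjoint from $H^*$ and disjointness is preserved. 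Consequently, at the start of step $t$ the current $\calQ$ is disjoint from $H^*$; in particular the terminal $r_t\in V(H^*)$ is not yet covered by $\calQ$, so the algorithm evaluates the \textbf{ElsIf} test, where $\calQ\cup\{H^*\}$ is vertex-disjoint and each of its members (including $H^*$, a copy of a graph in $\calH$) belongs to $\calH$; thus $\calQ\cup\{H^*\}$ is an $\calH$-packing and $r_t$ is selected. This proves the $\mu$-forbidden property, and applying the Key Lemma with $p=p(\mu)$ gives the $\alpha(\mu)$ probability-competitive bound.

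The step I expect to be delicate is not the inductive disjointness argument --- which is a direct adaptation of the gammoid argument for Algorithm~\ref{alg:gammoid}, with node-disjoint paths replaced by vertex-disjoint subgraphs --- but rather making the size bound $|\calF|\leq\mu$ completely rigorous: one must check that $(\calQ_{\OPT(X)},H^*)$ is genuinely an admissible configuration in the definition of $\mu$, i.e.\ that $H^*$ is ``responsible'' for a terminal ($r^*$) that $\calQ_{\OPT(X)}$ does not cover. This is where the hypothesis $X\subseteq Y-r^*$ and the fact that the canonical witnesses are fixed functions of the independent set (independent of arrival order) both enter, and it is worth stating explicitly that the witnesses may be chosen so that this degenerate overlap does not occur. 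The extra first branch of the algorithm, which accepts an already-covered terminal without touching $\calQ$, causes no difficulty since it leaves $\calQ$ unchanged and hence does not affect the induction.
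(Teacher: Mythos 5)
Your overall architecture matches the paper's (correctness and sampling by design, forbidden sets built from the witness graphs of $\OPT(X)$ that meet the witness graph $H^*=\calQ_{\OPT(Y)}(r^*)$, then the Key Lemma), but there is a genuine gap in the size bound, and it is exactly the point you flagged and then waved away. Your forbidden set consists of all $v\in\OPT(X)$ whose witness graph meets $V(H^*)$, and your bound $|\calF|\le\mu$ rests on the claim that $r^*$ is not covered by $\calQ_{\OPT(X)}$, so that $(\calQ_{\OPT(X)},H^*)$ is an admissible configuration in the definition of $\mu$. That claim is false in general: $X\subseteq Y-r^*$ only gives $r^*\notin X$, while the canonical packing $\calQ_{\OPT(X)}$ may well cover the \emph{vertex} $r^*$ incidentally, and it cannot in general be chosen to avoid this, because the witness is a single fixed function of the independent set while $r^*$ ranges over all elements (e.g.\ in a matching matroid a terminal whose canonical covering edge has $r^*$ as its other endpoint). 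Concretely, take $\calH=\{K_2\}$ (so $\mu\le 2$), terminals $r^*,a,y,w$, non-terminals $x_a,x_w$, edges $\{r^*,y\},\{r^*,a\},\{y,w\},\{a,x_a\},\{w,x_w\}$, canonical witnesses $\text{witness}(\{a,y,w\})=\{\{r^*,a\},\{y,w\}\}$ and $\text{witness}(\{r^*,a,y,w\})=\{\{r^*,y\},\{a,x_a\},\{w,x_w\}\}$, and the triple $X=\{a,y,w\}$, $Y=X+r^*$. Then $H^*=\{r^*,y\}$, and your set $\calF(X,Y,r^*)=\{a,y,w\}$ has size $3>\mu$, so your argument does not establish forbidden sets of size $\mu$.

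The paper avoids this by defining $\calF(X,Y,r^*)=\{v\in\OPT(X)\colon V(\calQ_{\OPT(X)}(v))\text{ intersects }V(\calQ_{\OPT(Y)}(r^*))\setminus\{r^*\}\}$, i.e.\ graphs touching $H^*$ \emph{only at the vertex $r^*$} are not forbidden. This is sound precisely because of the first acceptance branch of Algorithm~\ref{alg:packing}, which you dismissed as causing ``no difficulty'': if some earlier selected witness graph covers the vertex $r^*$ (without meeting the rest of $H^*$), then $r_t=r^*$ arrives already covered by $\calQ$ and is accepted by that branch; otherwise no graph of $\calQ$ meets $V(H^*)$ at all and the \textbf{ElsIf} branch accepts as in your induction. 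So the case split over the two branches is essential to shrinking the forbidden set to the point where the bound by $\mu$ goes through (in the example above the paper's set is $\{y,w\}$, of size $2=\mu$). To repair your proof you should adopt the paper's definition, redo the guarantee argument with the two cases (``$\calQ$ covers $r^*$'' versus ``$\calQ$ disjoint from $V(H^*)$''), and drop the unsupported assertion that witnesses can be chosen to avoid covering $r^*$.
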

\begin{proof}
	Correctness and the sampling condition for forbidden sets are satisfied by design. Now let $r^*\in \OPT(Y)$ where $Y$ is a fixed set of terminals of size $t\geq s+1$, and suppose that $R_t=Y$ and $r_t=r^*$. Terminal $r_t$ is selected by the algorithm if either $\calQ$ already covers it on arrival, or if all vertices in the graph $\calQ_{\OPT(R_t)}(r_t)$ were not covered by graphs in $\calQ$ prior to that iteration. 
	
	In any case by defining as forbidden sets
	$$\calF(X,Y,r^*)=\{v\in \OPT(X)\colon V(\calQ_{\OPT(X)}(v)) \text{ intersects } V(\calQ_{\OPT(Y)}(r^*))\setminus \{r^*\}\},$$
	we have that $r_t$ is selected if $r_j\not\in \calF(R_j,R_t,r_t)\,$ for all $j\in \{s+1,\dots,t-1\}$. By definition, each forbidden set  has size at most $\mu$.	 
\end{proof}

We remark that the competitiveness achievable by an algorithm heavily depends on the way the matroid is presented. For instance, consider a matching matroid $\calM$ with host graph $G=(V,E)$. Note that the exchangeability of this matroid is $\mu\leq 2(2-1)=2$.
If the graph is presented online then we can use the previous algorithm to obtain an $\alpha(2)=4$ probability-competitive algorithm. If on the other hand the graph is presented upfront, then we can use the fact that every matching matroid is transversal \cite{Edmonds1965} to construct a transversal presentation of $\calM$ and then apply our $e$ probability-competitive algorithm using that presentation.

\subsection{Graphic and Hypergraphic Matroids}

\noindent {\it Graphic and hypergraphic matroids.} The {\it graphic} matroid $\calM[G]=(R,\calI)$ associated to a graph $G=(V,R)$ is the one whose independent sets are all the subsets of edges $X\subseteq R$ such that $(V,X)$ is a forest. The {\it hypergraphic} matroid $\calM[G]$ associated to a hypergraph $G=(V,R)$, whose edges may be incident to any number of vertices (if all the edges have size 1 or 2 we are back in the graphic case), is the matroid over $R$ whose independent sets $X\subseteq R$ are those, for which one can canonically choose for every $r\in X$ an edge denoted by $\text{edge}(r,X)=u(r)v(r)$ in $K_V=(V,\binom{V}{2})$ with both endpoints in $r$ in such a way that all
$\text{edge}(r,X)$, for $r\in X$ are different, and the collection $\text{edge}(X)=\{\text{edge}(r,X)\colon r\in X\}$ is a forest \cite{Lorea1975}.

One can check that the hypergraphic matroid $\calM[G]=(R,\calI)$ is the matroid induced from the graphic matroid $\calM[K_V]$ via the bipartite graph $(R\cup \binom{V}{2}, \tilde{E})$ with $ef \in \tilde{E}$ if $f\subseteq e$. In other words, $X$ is independent in the hypergraphic matroid $\calM[G]$ if $\text{edge}(X)$ is independent in the graphic matroid $\calM[K_V]$. Moreover, if $G$ is already a graph, $\text{edge}(X)=X$ and $\text{edge}(r,X)=r$ for all $r\in X$.

In the \emph{graphic MSP}/\emph{hypergraphic MSP} we assume that the underlying graph/hypergraph $G$ of a matroid $\calM[G]$ is either revealed at the beginning or revealed online in the natural way: we learn edges as they arrive. Let $X\subseteq R$ be an independent set. By orienting each connected component of the forest $\text{edge}(X)$ from an arbitrary root, we obtain a \emph{canonical orientation} $\text{arc}(X)$ of $\text{edge}(X)$ (for convenience, we denote by $\text{arc}(e,X)$ the oriented version of $\text{edge}(e,X)$) with indegree $\deg^-_{\text{arc}(X)}(v)\leq 1$ for every vertex $v$. The converse is \emph{almost} true in the following sense. If $A$ is a set of arcs (maybe including loops) such that  $\deg^-_A(v)\leq 1$ for every vertex then the underlying graph is not necessarily a forest in $K_V$, but a pseudoforest:
every connected component contains at most 1 cycle, which is directed. In fact, the edge sets of pseudoforest of a given graph $J$ are exactly the independent set of the so called bicircular matroid of $J$. This matroid is transversal with presentation $H$, where $V(H)= V(J) \cup E(J)$ and $ve\in E(H)$ if and only if $e$ is incident to $v$. This is the starting point for our algorithm for graphic matroids. \\

\noindent {\it The algorithm.} The plan is to only consider edges that belong to the current optimum. Furthermore, if we select an edge, then we orient it and include it into an arc set $A$ with the property that each vertex has maximum in-degree 1. Instead of using a random orientation (as in the algorithms by Korula and P\'al \cite{KP2009} or Soto \cite{Soto2013}), at every step we use the canonical orientation of the current optimum forest. In order to avoid closing a cycle we also impose that an arc $(u,v)$ can not be added to $A$ if $\deg^-_A(u)=1$ or $\deg^-_A(v)=1$.  The same algorithm works on hypergraphic matroids if we replace each independent set $X$ on the hypergraphic matroid by its associated forest $\text{edge}(X)$.  We also recall that on step $i$, $R_i=\{r_1,\dots, r_i\}$ denotes the set of revealed edges. The algorithm is fully described below.

\begin{algorithm}[H]
	\small
	\begin{algorithmic}[1]
		\Require{A hypergraphic matroid $\calM[G]$ with underlying hypergraph $G=(V,R)$, whose edges arrive in random order.}
		
		\Statex \Comment{$\ALG$ and $A$ are the currently selected independent set and the orientation of its associated forest.}
		
		\State $\ALG\gets \emptyset$, $s\gets \Bin(n,p)$, $A \gets \emptyset$. 
		\For{$i=s+1$ to $n$}
		\If{$r_i\in \OPT(R_i)$}
		\State Let $a_i=(u_i,v_i)=\text{arc}(r_i,\OPT(R_i))$ be the canonical orientation of $\text{edge}(r_i,\OPT(R_i))$.
			\If{$\deg^-_A(u_i)=0=\deg^-_A(v_i)$}
		\State $\ALG\gets \ALG + r_i$, $A \gets A + a_i$
		\EndIf
		\EndIf
		\EndFor
		\State Return $\ALG$.
	\end{algorithmic}
	\caption{\small for graphic or hypergraphic matroids. \label{alg:graphic}}
\end{algorithm}

\begin{thm}
	Algorithm \ref{alg:graphic} has forbidden sets of size 2. By setting $p=p(2)=1/2$, we get an $\alpha(2)=4$ probability-competitive algorithm for both graphic and hypergraphic matroids.
\end{thm}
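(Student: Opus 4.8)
The plan is to verify the three conditions in the definition of ``forbidden sets of size $k$'' with $k=2$, and then invoke the Key Lemma (Lemma~\ref{lem:keylemma}), which for $k=2$ gives sampling probability $p(2)=2^{-1/(2-1)}=1/2$ and ratio $\alpha(2)=2^{2/(2-1)}=4$. Throughout I work with a hypergraphic matroid $\calM[G]$; the graphic case is recovered by taking $\text{edge}(r,X)=r$, with $\text{arc}(r,X)$ an orientation of $r$ itself.

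\emph{Correctness.} First I would show, by induction on the arrivals, that the arc set $A$ maintained by the algorithm is always the orientation of a forest in $K_V$ in which every vertex has in-degree at most $1$. The in-degree bound is immediate: an arc $a_i=(u_i,v_i)$ is added only when $\deg^-_A(u_i)=\deg^-_A(v_i)=0$, and inserting it raises only $\deg^-_A(v_i)$, to $1$. Acyclicity follows from the structural fact that in a digraph with all in-degrees at most $1$, a connected component that contains a cycle has all its in-degrees equal to $1$ (the number of arcs equals the number of vertices), whereas a tree component has a unique in-degree-$0$ vertex; so two vertices both of in-degree $0$ must lie in different components. Applying this to $u_i$ and $v_i$ right before $a_i$ is inserted, the insertion merges two distinct tree components, creating neither a cycle nor a parallel edge. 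Hence the undirected support of $A$ is always a forest; since that support equals $\{\text{edge}(r_i,\OPT(R_i)):r_i\in\ALG\}$, these edges are pairwise distinct, and each has both endpoints inside its hyperedge $r_i$, the set $\ALG$ is independent in $\calM[G]$. The sampling property holds by construction (the loop starts at $s+1$ and $A$ begins empty).

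\emph{The $2$-forbidden property.} For a triple $(X,Y,r^*)$ with $r^*\in\OPT(Y)$ and $X\subseteq Y-r^*$, write $(u,v):=\text{arc}(r^*,\OPT(Y))$ and set
\[
\calF(X,Y,r^*):=\bigl\{\,e\in\OPT(X):\ \text{the head of }\text{arc}(e,\OPT(X))\in\{u,v\}\,\bigr\}\ \subseteq\ \OPT(X)\subseteq X .
\]
Since $\{\text{arc}(e,\OPT(X)):e\in\OPT(X)\}$ is the canonical orientation of the forest $\text{edge}(\OPT(X))$, in which every vertex has in-degree at most $1$, at most one $e$ has its arc headed at $u$ and at most one at $v$, so $|\calF(X,Y,r^*)|\le 2$. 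Now fix $t\ge s+1$, assume $r_t\in\OPT(R_t)$ and $r_j\notin\calF(R_j,R_t,r_t)$ for every $j\in\{s+1,\dots,t-1\}$; I claim $r_t$ is selected. At step $t$ the algorithm forms $a_t=(u,v)=\text{arc}(r_t,\OPT(R_t))$ and accepts $r_t$ precisely when $\deg^-_A(u)=\deg^-_A(v)=0$. Suppose instead some arc of $A$ has head $u$: it was inserted at a step $j\in\{s+1,\dots,t-1\}$ as $a_j=\text{arc}(r_j,\OPT(R_j))$ with $r_j\in\OPT(R_j)$, and having head $u\in\{u,v\}$ forces $r_j\in\calF(R_j,R_t,r_t)$, a contradiction; symmetrically, no arc of $A$ has head $v$. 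Hence $r_t$ is accepted. Thus Algorithm~\ref{alg:graphic} has forbidden sets of size $2$, and the Key Lemma with $k=2$ (so $p=1/2$, $\alpha=4$) gives the claimed $4$ probability-competitive bound for graphic and hypergraphic matroids.

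\emph{Main obstacle.} The bookkeeping is routine; the two points deserving care are (i) the correctness step — arguing that requiring \emph{both} endpoints of a new arc to have in-degree $0$ forces $A$ to be a genuine forest rather than merely a pseudoforest, and that the selected edges are pairwise distinct — and (ii) ensuring the deduction ``a blocking arc into $u$ or $v$ comes from a forbidden element'' uses only information available when that arc was inserted, namely $\OPT(R_j)$ and its canonical orientation, which is exactly what $\calF(\cdot,\cdot,\cdot)$ is defined in terms of.
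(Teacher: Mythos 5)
Your proof is correct and follows essentially the same route as the paper: correctness via the in-degree-at-most-$1$ invariant forcing the unoriented support of $A$ to be a forest, and the $2$-forbidden property by taking as forbidden the at most two elements of $\OPT(X)$ whose canonical arcs point into an endpoint of $\text{edge}(r^*,\OPT(Y))$. Note that your forbidden-set definition is the intended one — the paper's displayed formula contains a stray ``not'' that contradicts its own size bound — so no substantive difference remains.
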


\begin{proof}
	We first prove that the edge set $\hat{A}$ obtained from $A$ by removing its orientation is acyclic. 	Suppose by contradiction that at the end of some step $i$, $A$ contains for the first time a set $C$ such that its unoriented version  $\hat{C}$ is an undirected cycle.  
	Since this is the first time a cycle appears, $r_i$ must be selected and $a_i=(u_i,v_i)$ must be contained in $C$. 
	Since $a_i$ is included in $A$, we know that after its inclusion, $\deg^-_{A}(v_i)=\deg^-_{C}(v_i)=1$ (before its inclusion, the indegree of $v_i$ was 0) and that $\deg^-_{A}(u_i)=\deg^-_{C}(u_i)=0$. But since $\hat{C}$ is a cycle the outdegree of $u_i$ is $\deg^+_{C}(u_i)=2-\deg^-_{C}(u_i)=2$. But then, there must be another vertex $x$ in $C$ with indegree 2. This cannot happen because at every moment the indegree of each vertex is at most 1.
	
	The proof above guarantees correctness of the algorithm: for the graphic case $\hat{A}=\ALG$ and for the hypergraphic case, each edge $r_i$ of $\ALG$ is mapped to $\text{edge}(r_i,\OPT(R_i))\in \hat{A}$ which form a forest. In both cases we conclude $\ALG$ is independent.
	
	Since the sampling condition is satisfied by design, we only need to prove the 2-forbidden condition. Let $r^*\in \OPT(Y)$ where $Y$ is an arbitrary set of $t\geq s+1$ edges, and suppose that $R_t=Y$ and $r_t=r^*$. The algorithm would then define an arc $a_t=(u_t,v_t)$ and will proceed to add $r_t$ to $\ALG$ provided that no arc $a_j=(u_j,v_j)$ considered before has head $v_j$ equal to $u_t$ or $v_t$. In other words, by defining
	\begin{align*}\calF(X,Y, r^*)&=\left\{f \in \OPT(X)\colon  \substack{\text{arc}(f,\OPT(X))\text{ is not oriented}\\\text{ towards any endpoint of } \text{edge}(r^*,\OPT(Y))}\right\}
	\end{align*}
	then $r_t$ is selected if $r_j\not\in \calF(R_j,R_t,r_t)$ for all $j\in \{s+1,\dots, t-1\}$. Moreover, since each arc set $\text{arc}(\OPT(X))=\{\text{arc}(f,\OPT(X))\colon f\in \OPT(X)\}$ has maximum indegree 1, there are at most 2 arcs in $\text{arc}(\OPT(X))$ oriented towards an endpoint of $\text{edge}(r^*,\OPT(Y))$. So each forbidden set has size at most~2.
\end{proof}

\subsection{Column Sparse Representable Matroids and Multi-Framed matroids}

\noindent {\it Column sparse matroids.} An interesting way to generalize graphic matroids is via their matrix representation. We say that a matroid $\calM$ is represented by a $m\times n$ matrix $M$ with coefficients in a field $\FF$ if we can bijectively map the elements of its ground set to the columns of $M$ in such a way that the independent sets of $\calM$ are in correspondence with the sets of columns that are linearly independent in $\FF^{ m}$. Graphic matroids are representable by their adjacency matrix interpreted in $GF(2)$. In fact they can be represented in any field. Matroids that have this property are called regular. Note that each graphic matroid is representable by a very sparse matroid: each column has only 2 non-zero elements. Following \cite{Soto2013} we say that a matroid is $k$ column sparse representable if it admits a representation whose columns have at most $k$ nonzero entries each. These matroids include many known classes such as graphic matroids ($k=2$), rigidity matroids \cite{Whiteley1996} on dimension $d$ ($k=2d$) and more generally matroids arising from rigidity theory from $d$-uniform hypergraphs. These matroids are called $(k,\ell)$-sparse matroids (where $0\leq \ell \leq kd-1$), and they are, using our notation, $kd$ column sparse \cite{Streinu2011}. Interesting cases include generic bar-joint framework rigidity in the plane \cite{Laman1970}, which are characterized by $(2,3)$-sparse matroids in dimension $d=2$ (they are $4$ column sparse) and generic body-bar framework rigidity in $\RR^d$ which are characterized by $(\binom{d+1}{2},\binom{d+1}{2})$-sparse matroids \cite{Tay1984} (they are $d\binom{d+1}{2}$ column sparse).

Let $\calM$ be a matroid with $k$ sparse representation $M$ and let $X$ be an independent set of columns. It is easy to show (see e.g., \cite{Soto2013}) that we can select one nonzero coordinate from each column in $X$ such that no two selected entries lie on the same row. In other words, each independent set in $\calM$ is also independent in the transversal matroid whose bipartite representations has color classes the rows and columns of $M$ and where a column $i$ is connected to a row $j$ if entry $M_{ij}$ is nonzero. Even though the converse is not true, we can use the intuition obtained from graphic matroids to extend the algorithm to $k$ column sparse representable matroids with only few changes. Instead of doing that, we are going to further generalize this class of matroids in a different direction.\\

\noindent {\it Multiframed matroids.} A matroid $\calM$ is called a frame matroid \cite{Zaslavsky1994} if it can be extended to a second matroid $\calM'$ (i.e. $\calM$ is a restriction of $\calM'$) which possesses a \emph{frame} $B$, that is, a base  such that every element of $\calM$ is spanned by at most 2 elements of $B$. For instance, take a graphic matroid $\calM[G]$ and consider the graph $H=(V(G)+v_0, E(G) \cup \{v_0v\colon v\in V(G)\}$ where $v_0$ is a new vertex. Then, $\calM[H]$ is an extension of $\calM[G]$ containing the star centered at $v_0$ as basis $B:=\delta_H(v_0)=\{v_0v\colon v\in V(G)\}$. Since every edge $uv$ in $G$ is spanned by the set $\{v_0u, v_0v\}$ of (at most) 2 elements, we conclude that $B$ is a frame for $\calM[G]$.  We define a new class of matroids, called multiframed of $k$-framed matroids, in a natural way. A matroid $\calM=(R,\calI)$ is a $k$-framed matroid if it admits an extension $\calM^B=(R',\calI)$ having a $k$-frame $B$, i.e., a base such that each element of $\calM$ is spanned by at most $k$ elements of $B$. Without loss of generality we assume that $B\cap R=\emptyset$ (by adding parallel elements) and $R'=B\cup R$. In other words, $\calM^B$ is obtained by adjoining the $k$-frame $B$ to the original matroid $\calM$. Observe that if $\calM$ is represented by a $k$ column sparse matrix $M$, then the matrix $[I|M]$ obtained by adjoining an identity (in the field $\FF$) represents an extension $\calM^B$ of $\calM$ where the columns of $I$ form a base $B$ such that each column in $M$ is spanned by at most $k$ elements from $B$ (exactly those elements associated to the $k$ nonzero rows of the column). This means that $k$-framed matroids generalizes $k$ column sparse matroids. This generalization is strict since there are frame matroids that are nonrepresentable.

We define the \emph{$k$-framed MSP} as the variant of the MSP in which the $k$-framed matroid $\calM$, and its extension $\calM^B$ is either fully known beforehand or we simply have access to $B$ and an independence oracle for $\calM^B$ (in the case of $k$ column sparse matroid it is enough to receive the columns of the representation in an online fashion).

We need some notation. For every $r$ in the ground set $R$, we define the set $C(B,r)=\{y\in B\colon B+r-y \text{ is independent}\}$ which is also the minimal subset of $B$ spanning $r$.
It is easy to see that $C(B,r)+r$ is the unique circuit in $B+r$, often called the \emph{fundamental circuit} of $r$ with respect to the base $B$ in $\calM^B$.
Define also for each $y\in B$, the set
$K(B,y)=\{r\in R\colon B+r-y \text{ is independent}\}$. It is easy to see that $K(B,y)+y$ is the unique cocircuit inside $R + y$ in the matroid $\calM^B$, often called the \emph{fundamental cocircuit} of $y$ with respect to the base $B$. Observe that by definition, $r\in K(B,y) \iff y\in C(B,r)$. Furthermore, by definition of $k$-framed matroids, $C(B,r)$ has at most $k$ elements. Before presenting the algorithm we need the following result.

\begin{lem} Let $X$ be an independent set of a $k$-framed matroid $\calM$ with $k$-frame $B$. There is a (canonical) injection $\pi_X\colon X\to B$ such that $B+x-\pi_X(x)$ is independent for all $x\in X$.
\end{lem}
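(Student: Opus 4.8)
The plan is to interpret the problem as finding a perfect matching from $X$ into $B$ in the bipartite "fundamental circuit" graph, and to produce it canonically via a greedy/augmenting argument that respects the value order $\succ$. Concretely, consider the bipartite graph $H$ on color classes $X$ and $B$ where $x\in X$ is joined to $y\in B$ precisely when $y\in C(B,x)$, equivalently when $B+x-y$ is independent in $\calM^B$. What we want is a matching of $H$ that saturates $X$; any such matching gives the desired injection $\pi_X$ since $x y$ being a matching edge means $B+x-\pi_X(x)$ is independent. The existence of a saturating matching is exactly a matroid-union / matroid-intersection statement: the sets $C(B,x)+x$ are the fundamental circuits, and since $X$ is independent in $\calM^B$ one shows that no Hall violator exists.

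First I would prove the Hall condition. Take any $S\subseteq X$ and let $N(S)=\bigcup_{x\in S} C(B,x)\subseteq B$. I claim $|N(S)|\geq |S|$. Indeed, $\operatorname{span}_{\calM^B}(N(S))\supseteq \operatorname{span}_{\calM^B}\big(\bigcup_{x\in S}(C(B,x)+x)\big)\ni S$ because each $x\in\operatorname{span}(C(B,x))$; hence $\operatorname{rank}(N(S))\geq \operatorname{rank}(S)=|S|$ (using that $X$, and therefore $S$, is independent). Since $N(S)\subseteq B$ and $B$ is a base, $\operatorname{rank}(N(S))=|N(S)|$, so $|N(S)|\geq |S|$. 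By Hall's theorem $H$ has a matching saturating $X$, which already proves existence of \emph{some} valid injection.

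The remaining — and I expect main — point is \textbf{canonicity}: we must pick $\pi_X$ in a way that depends only on $\calM^B$ and $B$ (and $X$), not on any arrival order, so that in the online algorithm the "witness" $\pi_X$ is well defined and consistent across the nested sets $R_j$ that appear in the forbidden-set analysis. The natural choice is to run the greedy matching driven by $\succ$: order $X=\{x_1\succ x_2\succ\cdots\succ x_m\}$, and for $i=1,\dots,m$ let $\pi_X(x_i)$ be the $\succ$-minimum element of $C(B,x_i)$ that is still unused, if one exists, otherwise perform a lexicographically-least augmenting path in $H$ relative to the current partial matching. Since Hall's condition holds for \emph{every} subset, the standard augmenting-path argument guarantees that at each stage a saturating extension exists, so $x_i$ is always matched; and both the "pick $\succ$-minimum free neighbor" rule and the "lexicographically-least augmenting path" rule are deterministic functions of $(\calM^B,B,X,\succ)$ alone. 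This yields a canonical $\pi_X$.

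The one subtlety to check carefully is that $C(B,x)$ is indeed the \emph{minimal} spanning subset of $B$ and has size $\le k$ — but that is immediate from the definition of $k$-framed matroid and the uniqueness of the fundamental circuit $C(B,x)+x$ of $x$ with respect to $B$ in $\calM^B$, as recalled just before the lemma. I would state the construction, verify Hall's condition as above, invoke the augmenting-path algorithm for the matching, and remark that determinism of the tie-breaking rules gives the "canonical" qualifier; no heavy computation is needed beyond the rank inequality $\operatorname{rank}(N(S))\ge\operatorname{rank}(S)$.
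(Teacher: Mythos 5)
Your proof is correct, but it goes by a genuinely different route than the paper. The paper's argument is two lines: extend $X$ to a base $X'$ of the extension $\calM^B$, invoke the bijective (strong) basis-exchange theorem to get a bijection $\pi\colon X'\to B$ with $B+x-\pi(x)$ a base for every $x\in X'$, and restrict to $X$. You instead argue directly on the bipartite graph between $X$ and $B$ with edges $xy$ for $y\in C(B,x)$, verify Hall's condition via the rank inequality ($S\subseteq\operatorname{span}(N(S))$ because $x\in\operatorname{span}(C(B,x))$, and $N(S)\subseteq B$ is independent, so $|N(S)|=\operatorname{rank}(N(S))\ge\operatorname{rank}(S)=|S|$), and extract a saturating matching. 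This is sound and essentially reproves, in the form needed here, the exchange result that the paper cites as a black box; what it buys is self-containedness plus an explicit treatment of canonicity, which the paper handles only implicitly — your deterministic greedy-with-augmenting-paths rule pins down a choice of $\pi_X$ depending only on $(\calM^B,B,X)$, which is exactly what the forbidden-set analysis needs (no consistency across different sets $X$ is required, only independence from the arrival order). One cosmetic caveat: the value order $\succ$ is defined on $R$ and the paper assumes $B\cap R=\emptyset$, so "$\succ$-minimum element of $C(B,x_i)$" and "lexicographically-least augmenting path" need some fixed order on $B$; since $B$ is known in advance, fixing an arbitrary order on it is harmless, but you should say so rather than appeal to $\succ$ on $B$.
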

\begin{proof}
	Extend $X$ to a base $X'$ of $\calM'$. By the strong basis exchange axiom there is a bijection $\pi\colon X'\to B$ such that $B+x-\pi(x)$ is a base for all $x\in X'$. The restriction of $\pi$ to $X$ yields the desired injection.
\end{proof}

\begin{algorithm}[H]
	\small
	\begin{algorithmic}[1]
		\Require{A $k$-frame matroid $\calM$,  with independence oracle access to $\calM^B$ and to $B$. The elements of $\calM$ arrive in random order.}
		
		\Statex \Comment{$\ALG$ is the set currently selected and $B'$ is the set of elements of the frame $B$ that have been marked.}
		
		\State $\ALG\gets \emptyset$, $s\gets \Bin(n,p)$, $B' \gets \emptyset$. 
		\For{$i=s+1$ to $n$}
		\If{$r_i\in \OPT(R_i)$ and $C(B,r_i)\cap B'=\emptyset$}
		\State $\ALG\gets \ALG + r_i$, $B' \gets B' + \pi_{OPT(R_i)}(r_i)$
		\EndIf
		\EndFor
		\State Return $\ALG$.
	\end{algorithmic}
	\caption{\small for $k$-framed matroids. \label{alg:framed}}
\end{algorithm}

\begin{thm}
	Algorithm \ref{alg:framed} has forbidden sets of size $k$. By setting $p=p(k)$, we get an $\alpha(k)$ probability-competitive algorithm for $k$-framed matroids.
\end{thm}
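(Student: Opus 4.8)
The plan is to verify the three defining properties of an algorithm with forbidden sets of size $k$, and then invoke the Key Lemma (Lemma~\ref{lem:keylemma}) with $p=p(k)$ to conclude the $\alpha(k)$ probability-competitive guarantee. Correctness and the sampling property are essentially by design, so the work is in identifying the right forbidden sets and bounding their size. The natural candidate is, for every triple $(X,Y,r^*)$ with $Y\subseteq R$, $r^*\in\OPT(Y)$, and $X\subseteq Y-r^*$,
\[
\calF(X,Y,r^*)=\{x\in\OPT(X)\colon \pi_{\OPT(X)}(x)\in C(B,r^*)\}.
\]
Since $|C(B,r^*)|\le k$ (this is exactly the $k$-framed property, as $C(B,r^*)$ is the minimal subset of $B$ spanning $r^*$), and since $\pi_{\OPT(X)}$ is injective, at most $k$ elements of $\OPT(X)$ can map into $C(B,r^*)$; hence $|\calF(X,Y,r^*)|\le k$.

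Next I would establish the $k$-forbidden property. Fix $t\ge s+1$, condition on $r_t\in\OPT(R_t)$, and suppose that for every $j\in\{s+1,\dots,t-1\}$ we have $r_j\notin\calF(R_j,R_t,r_t)$; I must show $r_t$ is selected. By Lemma~\ref{lem:opt-restricted}, since $r_t\in\OPT$ restricted appropriately — more precisely the algorithm tests $r_i\in\OPT(R_i)$ and this holds at $i=t$ — the first acceptance condition is met, so it remains to check $C(B,r_t)\cap B'=\emptyset$ at the start of iteration $t$. The set $B'$ at that point is $\{\pi_{\OPT(R_j)}(r_j)\colon j\in\{s+1,\dots,t-1\},\ r_j \text{ was accepted}\}$, and in particular $B'\subseteq\{\pi_{\OPT(R_j)}(r_j)\colon s+1\le j\le t-1,\ r_j\in\OPT(R_j)\}$. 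By the hypothesis, for each such $j$ with $r_j\in\OPT(R_j)$ we have $r_j\notin\calF(R_j,R_t,r_t)$, which by definition of $\calF$ means $\pi_{\OPT(R_j)}(r_j)\notin C(B,r_t)$. Hence no element of $B'$ lies in $C(B,r_t)$, so the acceptance test passes and $r_t$ is selected.

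For correctness I would argue that $\ALG$ stays independent throughout: the invariant is that $\ALG\cup$ (something) relates to $B'$ via the injections, but the cleanest route is that at each acceptance of $r_i$ we add $\pi_{\OPT(R_i)}(r_i)$ to $B'$ where $B+r_i-\pi_{\OPT(R_i)}(r_i)$ is independent, and since $C(B,r_i)\cap B'=\emptyset$ one shows inductively that $(B\setminus B')\cup\ALG$ is independent in $\calM^B$; restricting to $R$ gives independence of $\ALG$ in $\calM$. This is the step I expect to require the most care: one needs a matroid-exchange argument (analogous to the acyclicity argument in the proof for graphic matroids) showing that the fundamental-circuit test $C(B,r_i)\cap B'=\emptyset$ exactly guarantees that swapping in $r_i$ while swapping out $\pi_{\OPT(R_i)}(r_i)$ preserves independence of the running base. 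Once all three properties are verified, applying Lemma~\ref{lem:keylemma} with $k$ as above and $p=p(k)$ yields the claimed $\alpha(k)=k^{k/(k-1)}$ probability-competitiveness (and $e$ when $k=1$).
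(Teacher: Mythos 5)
Your forbidden sets, the size bound via injectivity of $\pi_{\OPT(X)}$ together with $|C(B,r^*)|\le k$, and the verification of the $k$-forbidden property are exactly the paper's argument, so the heart of the proof matches. Where you genuinely diverge is correctness: the paper argues by contradiction with a circuit--cocircuit count (if $\ALG$ contained a circuit $Z$, the first accepted element $r_{i(1)}$ of $Z$ marks some $y\in C(B,r_{i(1)})$, every later accepted element of $Z$ has $y\notin C(B,\cdot)$ by the acceptance test, so the circuit $Z$ would meet the cocircuit $K(B,y)+y$ in exactly one element, impossible), whereas you propose maintaining the invariant that $(B\setminus B')\cup\ALG$ is independent (in fact a base) of $\calM^B$. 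Your route is valid, but as written it has a small hole that you flag but do not close: you need to justify that the exchange at step $i$ preserves the base. The missing observation is short: when $r_i$ is accepted, the test $C(B,r_i)\cap B'=\emptyset$ means the fundamental circuit $C(B,r_i)+r_i$ (taken with respect to the \emph{original} frame $B$) lies entirely inside the current base $A=(B\setminus B')\cup\ALG$ plus $r_i$; since $A+r_i$ contains a unique circuit, that circuit is $C(B,r_i)+r_i$, hence $\pi_{\OPT(R_i)}(r_i)\in C(B,r_i)=C(A,r_i)$ and $A-\pi_{\OPT(R_i)}(r_i)+r_i$ is again a base. With that sentence added your induction goes through and even yields slightly more than the paper's proof (namely that $\ALG$ together with the unmarked frame elements is a base), at the cost of an explicit induction; the paper's circuit--cocircuit argument is a one-shot global contradiction that avoids tracking any running base.
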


\begin{proof}
	Suppose that the algorithm is not correct, and let $Z=\{r_{i(1)},r_{i(2)},\dots, r_{i(\ell)}\}$ be a circuit in $\ALG$ with $s+1\leq i(1)\leq i(2) \leq \dots \leq i(\ell)$. When $r_{i(1)}$ arrived, $y:=\pi_{\OPT(R_{i(1)})}(r_{i(1)})\in C(B,r_{i(1)})$ was marked.	Furthermore, our algorithm guarantees that for every element $r_{i}\in\ALG$ with $i> i(1)$, $y\not\in C(B,r_i)$. In particular, $y \in C(B,r_{i(j)}) \iff j=1$, or equivalently,
	$K(B,y)\cap Z = \{r_{i(1)}\}$. But this implies that the circuit $Z$ and the cocircuit $K:=K(B,y)+y$ intersect only in one element, which cannot happen in a matroid. Therefore, the algorithm is correct.
	
	Since the sampling condition is satisfied by design, we only need to prove the $k$-forbidden condition. Let $r^*\in \OPT(Y)$ where $Y$ is an arbitrary set of $t\geq s+1$ elements, and suppose that $R_t=Y$ and $r_t=r^*$. The algorithm accepts $r^*$ if and only if no element of $C(B,r_t)$ was marked  before. Let $y\in C(B,r_t)$ be an arbitrary element. A sufficient condition for $y$ not being marked at step $j$ is that $\pi_{\OPT(R_j)}(r_j)\neq y$. Therefore, if we define the forbidden sets
		\begin{align*}\calF(X,Y,r^*)&=\{f \in \OPT(X)\colon \pi_{\OPT(X)}(f)\in C(B,r^*)\}
		\end{align*}
	it is clear that $r_t$ is selected if $r_j\not\in \calF(R_j,R_t,r^*)$ for all $j\in \{s+1,\dots,t-1\}$.  Moreover, since $\pi_{\OPT(X)}$ is injective we conclude that $|\calF(X,Y,r^*)|\leq |C(B,r^*)|\leq k$.
\end{proof}
	
\subsection{Laminar Matroids and Semiplanar Gammoids}

\noindent{\it Arc-capacitated Gammoids.} In this section we define special classes of gammoids amenable for our techniques. An {\it arc-capacitated gammoid} (ACG) $\calM[N]$ is defined by a directed network $N=(G,s,R,c)$ where $G=(V,E)$ is a digraph, $s\in V$ is a single source, $R\subseteq V\setminus \{s\}$ is a collection of terminals and $c\colon E\to \ZZ^+$ is a strictly positive integer capacity function on the arcs of $G$. The ground set of  $\calM[N]$ is $R$ and every set $J\subseteq R$ is independent if and only if there exist an $s$-$R$ flow on $N$ satisfying the capacity constraints on the arcs and where each $j\in J$ receives one unit of flow. Without loss of generality we assume that every terminal is reachable from $s$ via a directed path. It is easy to see that ACGs are equivalent to gammoids without loops, but they are more closely related to transportation or distribution applications.  

A {\it semiplanar gammoid} is an ACG whose digraph $G$ admits a \emph{semiplanar drawing}, which is a planar drawing where all terminals are on the $x$-axis, the source node is on the positive $y$-axis, and the rest of the graph (arcs and nodes) are strictly above the $x$-axis, in a way that they do not touch the infinite ray starting upwards from the source node.
It is easy to see that $G$ admits a semiplanar drawing if and only if it is planar and all the terminals and the source are contained in the same face (which could be a cycle, a tree or a pseudoforest). 

\noindent{\it Laminar Matroids.} An important example of semiplanar gammoids are \emph{laminar matroids}. A collection $\calL$ of non-empty subsets of a finite set of terminals $R$ is called laminar if\,  for every $ L, H\in \calL$, $L\cap H \in \{L,H,\emptyset\}$.
The laminar matroid $\calM[R, \calL, c]$, where $\calL$ is a laminar family over $R$ and $c\colon \calL \to \ZZ^+$ is a positive integer capacity function on the sets of $\calL$, is the matroid with ground set $R$ and whose independent sets are those $X\subseteq R$ for which the number of elements that $X$ contains in any set of the laminar family does not exceed its capacity, i.e. $|X\cap L|\leq c(L)$. Observe that, since $c$ is positive, every singleton is independent. Furthermore, without loss of generality we assume that $R\in \calL$ (with $c(R)$ equal to the rank of the matroid) and that all singletons are in $\calL$ (with $c(\{r\})=1$ for each $r\in R$). Adding those sets does not destroy laminarity. 

The Hasse diagram of the containment order of $\calL$ forms a tree $T'$ where every set in $\calL$ is a child of the smallest set in $\calL$ that strictly contains it. Note that the leaves of $T'$ are in bijection with the terminals and the root $v_R$ that represents $R$. Consider the directed graph $G$ obtained by adding an auxiliary vertex $s$ to $T'$ connected to $v_R$ and orienting the tree away from $s$. If we put $s$ on the upper semiaxis, we draw $G$ with all arcs pointing downwards, and we assign capacities to each arc equal to the capacity of the laminar set associated to the arc's head we obtain a semiplanar representation of $\calM[R,\calL,c]$. Furthermore, the terminals in $R$ appear in the $x$-axis in the left-to-right order induced by the drawing of the tree.\\ 

\noindent {\it Semiplanar drawings and neighbours.} In what follows we fix a semiplanar drawing $G$ of a semiplanar gammoid $\calM[N]$. We identify $R$ with the set $[n]$ by labeling the terminals from left to right as they appear in the $x$-axis. For convenience in the rest of the presentation, we add to $G$ two auxiliary nodes on the $x$-axis, $0$ and $n+1$, where 0 is located to the left of 1, and $n+1$ is located to the right of $n$, together with the  arcs $s0$ and $s(n+1)$. Observe that by our assumptions on the drawing, it is possible to add those arcs without destroying semiplanarity. 

For any set $J\subseteq [n]$ and any $y\in J\cup \{0,n+1\}$, we denote by $\pre_J(y)$ the closest element to $y$ in $J\cup \{0\}$ that is located strictly to its left (understanding $\pre_J(0)=0$). Similarly, we denote by $\nex_J(y)$ the first element in $J\cup \{n+1\}$ located to the right of $y$ (understanding $\nex_J(n+1)=n+1$). For $y\in [n]\setminus J$, we call $\pre_{J+y}(y)$ and $\nex_{J+y}(y)$ its left and right neighbors in $J$ (note that they maybe equal to 0 or $n+1$ and thus, they are not necessarily elements of $J$). \\

\noindent{\it Ancestors, tree-order and representatives.} In the future we will map each terminal outside an independent set $J$ to its neighbors in $J$. For the case in which $G$ is a tree (i.e., for laminar matroids) we want to consistently assign each element to just one of them. We do this as follows. For every pair of nodes $x, y$ in $G$, let $xGy$ be the unique undirected $x$-$y$ path in $G$. We say that $x$ is an {\it ancestor} of $y$ (and $y$ is a {\it descendant} of $x$) if $sGy$ contains $sGx$; in that case we denote $x\sqsupseteq y$ (and $y\sqsubseteq x$).
Note that $(V,\sqsubseteq)$ is a partial order, and in fact, it is a join-semilattice where $x\vee y$ is the lowest common ancestor of $x$ and $y$. Note that by our drawing choice, for any $i \in [j,j']\subseteq [0,n+1]$ we have $i\sqsubseteq j\vee j'$. In particular, if $[j,j']\subseteq [k,k']\subseteq [0,n+1]$ then $j\vee j' \sqsubseteq k\vee k'$.

For every nonempty set $J\subseteq [n]$, and for every terminal $y\in [n]$, we define its \emph{representantive} $\pi_J(y)$ in $J$ such that
$$\pi_J(y)=\begin{cases}
y &\text{if $y\in J$,}\\
\pre_{J+y}(y)&\text{if $y \in [n]\setminus J$ and } y\vee \pre_{J+y}(y) \sqsubset y\vee \nex_{J+y}(y),\\
\nex_{J+y}(y)&\text{if $y \in [n]\setminus J$ and } y\vee \pre_{J+y}(y) \sqsupseteq y\vee \nex_{J+y}(y).\end{cases}$$

This element is well defined since for all $y\in [n]\setminus J$, both $y\vee \pre_{J+y}(y)$ and $y\vee \nex_{J+y}(y)$ belong to $sGy$ and so one is an ancestor of the other. Observe that $\pi_J(y)$ is never equal to 0 (because then $s=0\vee y \sqsubset \nex_{J+y}(y)\vee y$ which is a contradiction as $s$ is the root), nor $n+1$ (because then $s=(n+1)\vee y \sqsubseteq \pre_{J+y}(y) \vee y \sqsubseteq v_R \sqsubset s$). In particular, $\pi_J(y)\in \{\pre_{J+y}(y),y,\nex_{J+y}(y)\}\cap J$. A graphical way to understand the definition of the representative of an element $y$ outside $J$ is the following: let  $j$ and $j'$ be the left and right neighbors of $y$ in $J$ respectively and call $sGj$ and $sGj'$ the left and right paths respectively. The representative of $y$ is its left neighbor (respectively, its right neighbor) if and only if by starting from $y$ and walking up on $G$ against its orientation, the first path hit is the left path (respectively the right path). In case of a tie, the representative is the right neighbor (see Figure \ref{fig:laminar}).

We claim also that for every $j\in J$, the set of elements \emph{to its right} having $j$ as representative is an interval of the form $[j,k]$ with $k<\nex_J(j)$. Indeed, this is true if $\nex_{J}(j)=n+1$. Suppose now that $j':=\nex_{J}(j)\leq n$, and that the claim does not hold. Then, there must be two consecutive terminals $i, i+1\in (j,j')$ with $\pi_J(i+1)=j$ and $\pi_J(i)=j'$. But then, we get the following contradiction:
\begin{align*}
	(i+1) \vee j' \sqsubseteq i \vee j' \sqsubseteq i \vee j \sqsubseteq (i+1) \vee j \sqsubset (i+1) \vee j'
\end{align*}
where the first inequality holds since $[i+1,j']\subseteq [i,j']$, the second, by the definition of $\pi_J(i)$, the third since $[j,i]\subseteq [j,i+1]$ and the fourth by definition of $\pi_J(i+1)$. Since every element in $(j,j')$ either has $j$ or $j'$ as representative, we conclude, in fact, that the entire set $\pi^{-1}_J(j)$ of elements with $j$ as representative is an interval of terminals enclosing $j$ but strictly contained in $[\pre_J(j),\nex_J(j)]$. In other words $(\pi_J(j))_{j\in J}$ is a partition of $[n]$ into $|J|$ intervals.\\

\begin{figure}
	\centering
\begin{tikzpicture}
\tikzstyle{myedgestyle} = [draw=gray!25, line width=1]
\tikzstyle{fat} = [draw=black!90, line width=2,-latex]
\tikzstyle{rojo} = [draw=red!90, line width=2,-latex]
\tikzstyle{thin} = [draw=black!50, line width=1,-latex]
\tikzset{every node/.style={draw,circle, minimum size=0.5cm}}

\node  (v5) at (4.4,1.6) {};
\node (v3) at (1.2,3.0) {};
\node (v1) at (-5.5,0) {1};
\node  [fill=red!50] (v2) at (5,0) {15};
\node  [fill=red!50]  (v17) at (3.4,0) {13};
\node (v31) at (4.2,0) {14};
\node  (v15) at (-0.5,0) {8};
\node (v30) at (-1.2,0) {7};
\node (v29) at (-2,0) {6};
\node [fill=red!50] (v14) at (-2.6,0) {5};
\node (v13) at (-3.3,0) {4};
\node (v18) at (-6.5,0) {0};
\node (v19) at (7.7,0) {18};
\node (v23) at (1.8,0) {11};
\node (v27) at (-4.8,0) {2};
\node [fill=red!50] (v28) at (-4.1,0) {3};
\node (v20) at (5.8,0) {16};
\node (v26) at (0.2,0) {9};
\node [fill=red!50] (v25) at (1,0) {10};
\node (v22) at (2.6,0) {12};
\node (v11) at (-3.7,1.6) {};
\node (v7) at (5.6,1.2) {};
\node (v6) at (-4.8,1.2) {};

\node (v9) at (-1,2.6) {};

\node (v12) at (-2.6,1.2) {};
\node (v16) at (1,1.2) {};
\node (v24) at (3.3,1.2) {};
\node (v21) at (6.6,0) {17};

\draw [thin] (v7) edge (v2);
\draw [thin] (v7) edge (v20);
\draw [thin] (v7) edge  (v21);
\draw [thin] (v5) edge node[above right, draw=none] {2}  (v7);
\draw [thin] (v5) edge node[above left, draw=none] {2} (v24);
\draw [thin] (v24) edge (v22);
\draw [thin] (v24) edge (v17);
\draw [thin] (v3) edge node[above right, draw=none] {3}  (v5);
\draw [thin] (v9) edge node[right, draw=none] {2}  (v16);
\draw [thin] (v16) edge (v23);
\draw [thin] (v16) edge (v25);
\draw [thin] (v16) edge (v26);
\draw [thin] (v3) edge node[above left, draw=none] {4} (v9);
\draw [thin] (v9) edge (v15);
\draw [thin] (v6) edge (v1);
\draw [thin] (v6) edge (v27);
\draw [thin] (v6) edge (v28);
\draw [thin] (v12) edge (v13);
\draw [thin] (v12) edge (v14);
\draw [thin] (v11) edge node[above left, draw=none] {2}   (v6);
\draw [thin] (v11) edge node[above right, draw=none] {2}  (v12);
\draw [thin] (v9) edge node[above left, draw=none] {3}  (v11);
\node [fill=blue] (v4) at (1.2,4.0) {};
\draw [thin] (v4) edge node[left, draw=none] {5}  (v3);
\draw [thin] plot[smooth, tension=.7] coordinates {(v4) (-5,2.6) (v18)};
\draw [thin] plot[smooth, tension=.7] coordinates {(v4) (6.1,2.6) (v19)};

\draw [thin] (v12) edge (v29);
\draw [thin] (v9) edge (v30);
\draw [thin] (v24) edge (v31);

\draw [line width=1.5] (-6,0.3) -- (-6,-0.5) -- (7.1,-0.5) -- (7.1,0.4);

\draw [line width=1.5] (-1.6,0.3) -- (-1.6,-0.5);
\draw [line width=1.5] (2.2,0.4) -- (2.2,-0.5);
\draw [line width =1.5] (-3.7,0.3) -- (-3.7,-0.5);

\draw [line width =1.5] (4.6,0.4) -- (4.6,-0.5);
\end{tikzpicture}
	\caption{\label{fig:laminar} \small Semiplanar drawing of a laminar matroid with ground set $[17]$. Nonunit arc capacities are shown. Below we show the partition $(\pi_{J}(j))_{j\in J}$ induced by the independent set $J=\{3,5,10,13,15\}$. }
\end{figure}
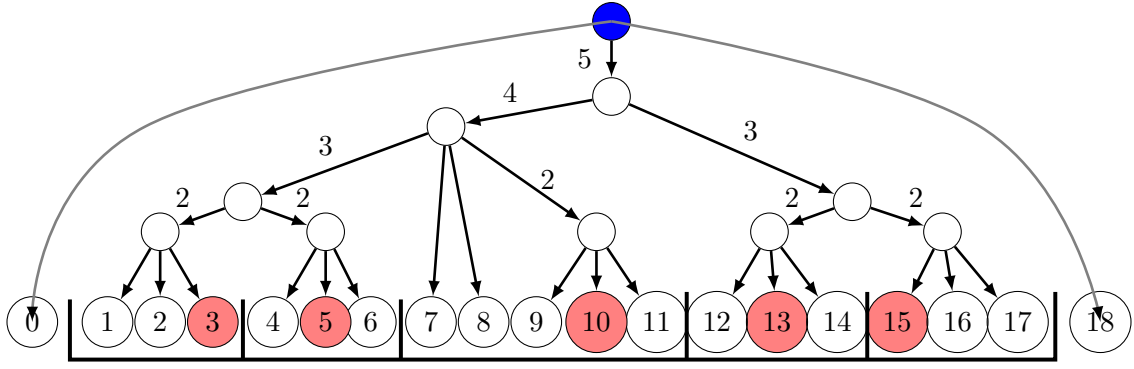

\noindent{\it The algorithms.} In what follows we describe our algorithms for semiplanar gammoids and  laminar matroids. Apart from some special case occurring when the sample is empty, our algorithms are extremely simple. Each one computes the optimal solution $\OPT(R_s)$ of the sample and leave their elements \emph{unmarked}. When a terminal $r_i$ that is part of the current optimum arrives, it checks if the (closest neighbors / representative) of $r_i$ in $\OPT(R_s)$ (are / is) unmarked. If so, it marks (them / it) and selects $r_i$ as part of the solution.

\noindent \begin{minipage}[t]{.56 \textwidth}
 \vspace{0pt}  
\begin{algorithm}[H]
	\algrenewcommand\algorithmicindent{0.8em}
	\small
	\begin{algorithmic}[1]
		\Require{An semiplanar gammoid with a fixed semiplanar drawing}	
		\State $\ALG\gets \emptyset$, $s\gets \Bin(n,p)$, 
		\If{$s=0$}
		\State{$\ALG\gets \{r_{1}\}$}
		\Else\ $B\gets \emptyset$.
		\For{$i=s+1$ to $n$}
		\If{$r_i\in \OPT(R_i)$, $\pre_{\OPT(R_s)+r_i}(r_i) \not \in B$ and \phantom{.......} \phantom{.....}$\nex_{\OPT(R_s)+r_i}(r_i) \not \in B$}\label{alg:line-6}
		\State \begin{varwidth}[t]{\linewidth}
		$\ALG\gets \ALG + r_i$, and \\
		$B\gets B\cup\{ \pre_{\OPT(R_s)+r_i}(r_i), \nex_{\OPT(R_s)+r_i}(r_i)\}$
	\end{varwidth}	\EndIf
		\EndFor
		\EndIf
		\State Return $\ALG$.
	\end{algorithmic}
	\caption{\small for semiplanar gammoids. \label{alg:semiplanar}}
\end{algorithm}

\end{minipage}%
\hfill
\begin{minipage}[t]{.4 \textwidth}
  \vspace{0pt}
\begin{algorithm}[H] 
	\algrenewcommand\algorithmicindent{0.9em}
	\small
	\begin{algorithmic}[1]
		\Require{A laminar matroid with a fixed semiplanar drawing }		
		\State $\ALG\gets \emptyset$, $s\gets \Bin(n,p)$, 
		\If{$s=0$}
		\State{$\ALG\gets \{r_1\}$}
		\Else\ $B\gets \emptyset$.
		\For{$i=s+1$ to $n$}
		\If{ $r_i\in \OPT(R_i)$ and \phantom{\qquad \qquad ......} \phantom{.....} $\pi_{\OPT(R_s)}(r_i)\not\in B$\quad}
		\State \begin{varwidth}[t]{\linewidth}
				$\vphantom{\big|}\ALG\gets \ALG + r_i$, and \\
				$B\gets B\cup \{\pi_{\OPT(R_s)}(r_i)\}$.
		\end{varwidth}
		\EndIf
		\EndFor
		\EndIf
		\State Return $\ALG$.
	\end{algorithmic}
	\caption{\small for laminar matroids. \label{alg:laminar}}
\end{algorithm}
\end{minipage}
\vspace{.5cm}

\begin{thm}\label{thm:semiplanar}
	Algorithm \ref{alg:semiplanar} has forbidden sets of size $4$. By setting $p=p(4)=\sqrt[3]{1/4}$ we get an $\alpha(4)=4^{4/3}\approx 6.3496$ probability-competitive algorithm for semiplanar ACGs.
\end{thm}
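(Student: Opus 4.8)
The plan is to verify the three defining properties of an algorithm with forbidden sets of size $4$, and then invoke the Key Lemma (Lemma~\ref{lem:keylemma}). Correctness and the sampling property are straightforward: the sampling property holds by design since $s\sim\Bin(n,p)$ and the algorithm accepts nothing in the first $s$ steps, while for correctness I would argue that the set $B$ of marked terminals, together with the running solution $\ALG$, maintains a valid $s$-$R$ flow witnessing independence of $\ALG$ in the semiplanar ACG. The key structural fact is that when a new terminal $r_i$ is accepted, both its left neighbor $\pre_{\OPT(R_s)+r_i}(r_i)$ and its right neighbor $\nex_{\OPT(R_s)+r_i}(r_i)$ in $\OPT(R_s)$ are unmarked; using the planarity of the drawing one shows that a unit of flow can be routed to $r_i$ by ``borrowing'' capacity along the two bounding paths $sGj$ and $sGj'$ without disturbing flow already routed to previously accepted terminals (each terminal blocks at most its two enclosing paths, and the nested interval structure guarantees these borrowings do not conflict). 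The special case $s=0$ only outputs the single terminal $r_1$, which is independent since every singleton is independent in an ACG.

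The heart of the argument is the $4$-forbidden property. Fix a triple $(X,Y,r^*)$ with $Y\subseteq R$, $r^*\in\OPT(Y)$, $X\subseteq Y-r^*$, and suppose $R_t=Y$, $r_t=r^*$ with $t\ge s+1$. By Lemma~\ref{lem:opt-restricted}, $r^*\in\OPT(R_t)$, so the algorithm will accept $r^*$ provided its two neighbors in $\OPT(R_s)$ are unmarked at time $t$. I would define
$$
\calF(X,Y,r^*) \;=\; \bigl\{\,f\in\OPT(X) \;:\; \pre_{\OPT(X_s)+r^*}(r^*)\ \text{or}\ \nex_{\OPT(X_s)+r^*}(r^*)\ \in\ \{\pre_{\OPT(X_s)+f}(f),\ \nex_{\OPT(X_s)+f}(f)\}\,\bigr\},
$$
where $X_s=X\cap R_s$ is the sampled part seen so far; in words, $f$ is forbidden if accepting $f$ would mark one of the two neighbors of $r^*$. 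If no element $r_j$ with $j\in\{s+1,\dots,t-1\}$ lies in $\calF(R_j,R_t,r^*)$, then neither neighbor of $r^*$ in $\OPT(R_s)$ ever gets marked before time $t$, so the condition on line~\ref{alg:line-6} is satisfied and $r^*$ is accepted. It remains to bound $|\calF(X,Y,r^*)|\le 4$: the left neighbor $\pre(r^*)$ of $r^*$ can be marked only by accepting a terminal whose own left-or-right neighbor equals $\pre(r^*)$; by the interval structure established before the theorem (the sets $\pi^{-1}$, or here the ``left-interval'' and ``right-interval'' of a fixed point of $\OPT(R_s)$), there are at most two elements of $\OPT(X)$ to the left of $\pre(r^*)$ having $\pre(r^*)$ as a neighbor from one side, but more carefully: a fixed point $z\in\OPT(R_s)$ is the \emph{right} neighbor of exactly one maximal interval of terminals to its left and the \emph{left} neighbor of exactly one maximal interval to its right, so among the elements of $\OPT(X)$ at most one can have $z$ as its left neighbor via the relevant side and at most one via the other; summing over the two neighbors $\pre(r^*)$ and $\nex(r^*)$ gives at most $4$ forbidden elements.

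The main obstacle I anticipate is the bookkeeping in the correctness proof, namely producing the flow witnessing independence of $\ALG$: one must show that the greedy marking scheme, which only records two neighbors per accepted terminal, actually corresponds to a globally feasible flow in the semiplanar network, and this requires exploiting the planar/laminar nesting of the bounding paths rather than a purely local exchange argument. Once correctness is in hand, the forbidden-set bound is essentially a consequence of the interval-partition structure already developed in the laminar discussion (adapted to mark neighbors on \emph{both} sides, hence $4$ instead of the single representative used for laminar matroids in Algorithm~\ref{alg:laminar}), and the competitive ratio follows immediately by plugging $k=4$ into Lemma~\ref{lem:keylemma}, giving $p(4)=4^{-1/3}$ and $\alpha(4)=4^{4/3}$.
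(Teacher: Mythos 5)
Your correctness sketch is in the spirit of the paper (the paper makes it precise via the unit-capacity network $N^1(\OPT(R_s))$ and Lemmas \ref{lem:subsubmatroid}--\ref{lem:submatroids}: marking both neighbors forces accepted terminals into non-adjacent regions, giving arc-disjoint witness paths), but the heart of your argument, the $4$-forbidden property, has a genuine gap. First, your sets $\calF(X,Y,r^*)$ are defined through $\OPT(X_s)$ with $X_s=X\cap R_s$, i.e.\ through the random sample, whereas the framework requires $\calF$ to be a deterministic function of the triple $(X,Y,r^*)$ alone; the product step in the proof of Lemma \ref{lem:keylemma} conditions only on $(s,t,Y)$ and uses that, given $R_j$, the event $r_j\notin\calF(R_j,Y,r^*)$ is determined by $R_j$ and is independent across the backward steps -- with sets depending on $R_s$ (which is only fixed by the entire backward chain) this factorization breaks, so you cannot simply "plug $k=4$ into the Key Lemma." Second, and more fatally, the size bound fails: the terminals having a fixed $z\in\OPT(R_s)$ as their left neighbor are \emph{all} terminals in the interval between $z$ and $\nex_{\OPT(R_s)}(z)$, and the current optimum $\OPT(X)=\OPT(R_j)$ can contain arbitrarily many elements of that interval (the sample optimum may be sparse there). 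Your claim that at most one element of $\OPT(X)$ can have $z$ as its left neighbor and at most one as its right neighbor conflates neighbors taken in $\OPT(R_s)$ with neighbors taken in $\OPT(X)$; so the set of elements whose acceptance could mark a neighbor of $r^*$ is not bounded by $4$. (That at most one such element is ever \emph{accepted} does not help, since the Key Lemma needs a cardinality bound on the forbidden set at every step.)

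The paper circumvents exactly this difficulty by choosing as forbidden set the two closest elements of $\OPT(X)$ on each side of $r^*$, namely $\calF_4(\OPT(X),r^*)$, which trivially has size at most $4$ and depends only on $(X,r^*)$. The price is that the implication "no forbidden element arrives $\Rightarrow$ $r^*$ is accepted" is no longer immediate: it is established inductively, showing via Lemma \ref{lem:tecnico} and the matroid exchange axiom that under condition $(\star)$ the enclosing intervals $I_4(\OPT(R_i),r^*)$ are non-decreasing in $i$, hence any element accepted before time $t$ lies outside $I_4(\OPT(R_s),r^*)$, is separated from $r^*$ by at least two elements of $\OPT(R_s)$, and therefore never marks $\pre_{\OPT(R_s)+r^*}(r^*)$ or $\nex_{\OPT(R_s)+r^*}(r^*)$. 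This monotonicity argument is the missing ingredient in your proposal; without it (or a reproved sample-conditioned variant of Lemma \ref{lem:keylemma}) the stated ratio $\alpha(4)=4^{4/3}$ does not follow.
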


The previous result applied to laminar matroids already beat the best guarantees known for that class (9.6 \cite{MTW2016} and $3\sqrt{3}e$ \cite{JSZ2013, JSZ2014}). But we can do better for laminar matroids since for those, there is a unique path from the source to each terminal. 

\begin{thm}\label{thm:laminar}
	Algorithm \ref{alg:laminar} has forbidden sets of size $3$. By setting $p=p(3)=\sqrt{1/3}$ we get an $\alpha(3)=3\sqrt{3}$ probability-competitive algorithm for laminar matroids.
\end{thm}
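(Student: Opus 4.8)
The plan is to follow the exact same template used for Algorithm \ref{alg:semiplanar}, but to exploit the tree structure of the laminar presentation to shave the forbidden set from $4$ down to $3$. Correctness is the easy part: I would first argue that the collection $B$ of marked elements, together with the selected set $\ALG$, maintains a valid certificate of independence at all times. Concretely, each selected terminal $r_i$ consumes one unit of capacity along the unique source-to-$r_i$ path, and we charge that consumption to the representative $\pi_{\OPT(R_s)}(r_i)\in\OPT(R_s)$, which is marked (added to $B$) precisely when $r_i$ is accepted. Since $(\pi_J^{-1}(j))_{j\in J}$ is a partition of $[n]$ into $|J|$ intervals (established just before Figure \ref{fig:laminar}), distinct selected terminals get distinct representatives, so at most one unit of ``$\OPT(R_s)$-flow'' is rerouted through each $j\in\OPT(R_s)$; combined with the fact that $\OPT(R_s)$ itself was independent, this shows $\ALG$ is independent in the laminar matroid. (When $s=0$ we just pick $r_1$, which is independent since every singleton is independent.) The sampling property holds by construction.

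The heart of the argument is the $3$-forbidden property. Fix a triple $(X,Y,r^*)$ with $r^*\in\OPT(Y)$ and $X\subseteq Y-r^*$, and set $t\ge s+1$ with $R_t=Y$, $r_t=r^*$. The algorithm accepts $r^*$ provided $r^*\in\OPT(R_t)$ (guaranteed by Lemma \ref{lem:opt-restricted}) and the single element $\pi_{\OPT(R_s)}(r^*)$ is not yet in $B$ when $r^*$ arrives. So $r^*$ fails to be selected only if some earlier non-sampled arrival $r_j$ ($s+1\le j\le t-1$) marked $\pi_{\OPT(R_s)}(r^*)$, i.e. $\pi_{\OPT(R_s)}(r_j)=\pi_{\OPT(R_s)}(r^*)$. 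The subtlety is that the relevant partition is the one induced by $\OPT(R_s)$, whereas the forbidden set must be expressed in terms of $\OPT(X)=\OPT(R_j)$. I would therefore define
$$\calF(X,Y,r^*)=\{\,f\in\OPT(X)\colon \pi_{\OPT(X)}(f)\in\{\pre_J(z),\,z,\,\nex_J(z)\}\ \text{for } z:=\pi_{\OPT(X)}(r^*),\ J:=\OPT(X)\,\},$$
or more simply, the set of $f\in\OPT(X)$ whose representative in $\OPT(X)$ equals the representative of $r^*$ in $\OPT(X)$, together with the (at most two) elements of $\OPT(X)$ adjacent to that representative in the left-right order. The point is that $\pi_J(f)$ for $f$ ranging over $\OPT(X)$ takes exactly $|\OPT(X)|$ distinct values (the interval partition), so only $f=r^*$'s representative-block can collide with $r^*$'s; the at-most-three bound then comes from the fact that as the sample grows from $R_s$ to $R_j$ the representative of a fixed element can only move to an immediate neighbor in the coarser partition (monotonicity of $\vee$ along intervals, using $[j,j']\subseteq[k,k']\Rightarrow j\vee j'\sqsubseteq k\vee k'$). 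So at most three elements of $\OPT(X)$ can, upon arrival, end up claiming $\pi_{\OPT(R_s)}(r^*)$. Hence $|\calF(X,Y,r^*)|\le 3$, and if no $r_j\in\calF(R_j,R_t,r_t)$ then $\pi_{\OPT(R_s)}(r^*)$ is still unmarked when $r^*$ arrives and $r^*$ is selected.

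With the $3$-forbidden property in hand, the Key Lemma (Lemma \ref{lem:keylemma}) with $k=3$ immediately gives a $p(3)=3^{-1/2}=\sqrt{1/3}$ sampling probability and an $\alpha(3)=3^{3/2}=3\sqrt3$ probability-competitive ratio, completing the proof. The main obstacle I anticipate is precisely the bookkeeping in the previous paragraph: carefully verifying that the representative map $\pi$ only ``drifts'' to an adjacent block when the reference set is refined, so that the forbidden set genuinely has size $3$ rather than a larger constant — this is where the join-semilattice identity $[j,j']\subseteq[k,k']\Rightarrow j\vee j'\sqsubseteq k\vee k'$ and the tie-breaking convention in the definition of $\pi_J$ must be used with care. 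The analogous step for the semiplanar case needed size $4$ because there one tracks both neighbors $\pre$ and $\nex$ independently (no canonical single representative); the tree structure of laminar matroids is exactly what collapses this to one representative plus its two potential neighbors.
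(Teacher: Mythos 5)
Your forbidden sets are in fact the same as the paper's (for the triple $(X,Y,r^*)$ you take $\pi_{\OPT(X)}(r^*)$ together with its two neighbours in $\OPT(X)$, which is exactly $\calF_3(\OPT(X),r^*)\cap[n]$), and the correctness and sampling parts are unproblematic: correctness is precisely Lemma \ref{lem:submatroids}(P2) combined with Lemma \ref{lem:subsubmatroid}, which you can cite instead of the loose ``flow rerouting'' sketch. The genuine gap is in the $3$-forbidden property itself. Your argument is local in time: you claim that, at each step $j$, any arrival that could mark $\pi_{\OPT(R_s)}(r^*)$ must already belong to $\calF(R_j,R_t,r_t)$, justified by an asserted ``the representative drifts by at most one block'' property as the reference set grows. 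That local claim is false. Take the laminar matroid that is simply uniform of rank $\rho$ (so by the tie-breaking rule $\pi_J(y)$ is generically $\nex_{J+y}(y)$). Suppose that after the sample an element $a$ arrives just to the left of $r^*$, enters and stays in the running optimum, and later (still before time $t$) an element $r_j$ arrives further to the left, with no element of $\OPT(R_s)$ between $r_j$ and $r^*$. At time $j$ the forbidden set is $\{a,\;z_j,\;\nex_{\OPT(R_j)}(z_j)\}$ with $z_j=\pi_{\OPT(R_j)}(r^*)$, so $r_j$ is \emph{not} forbidden at its own arrival time; yet $\pi_{\OPT(R_s)}(r_j)=\pi_{\OPT(R_s)}(r^*)$, so selecting $r_j$ marks $r^*$'s representative and dooms $r^*$. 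Hence no per-step argument of the kind you sketch can work.

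The only reason the forbidden property survives in this scenario is that the intermediate arrival $a$ was itself forbidden at \emph{its} arrival time, so the event analysed in the Key Lemma excludes it; any valid proof must therefore exploit the whole conjunction ``$r_i\notin\calF(R_i,R_t,r_t)$ for all $i\in\{s+1,\dots,t-1\}$''. This is what the paper's proof does: under that conjunction it proves by induction over time that the enclosing intervals $I_3(\OPT(R_i),r^*)$ are non-decreasing in $i$, using the single-element-removal case analysis of Lemma \ref{lem:tecnico}(c),(d) together with the matroid exchange property to compare $\OPT(R_i)$ with $\OPT(R_{i+1})$, and then concludes that every selected $r_i$ lies outside $I_3(\OPT(R_s),r^*)$ and hence has $\pi_{\OPT(R_s)}(r_i)\neq\pi_{\OPT(R_s)}(r^*)$. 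Your replacement heuristic is moreover not a refinement statement ($\OPT(R_s)$ need not be contained in $\OPT(R_j)$ — optimum elements are evicted as better ones arrive), and, as an unconditional claim, it fails in the example above. You yourself flag this bookkeeping as the anticipated obstacle; since it is exactly the heart of the proof and is neither carried out nor replaced by a correct alternative, the proposal does not yet establish the theorem.
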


We observe that this algorithm is very similar to the $3\sqrt{3}e$ competitive algorithm of Jaillet et al.~\cite{JSZ2013, JSZ2014}. Note that the partition of the terminals given by $\pi_{\OPT(R_s)}$ induces a unitary partition matroid $\calP'$. After the sample, Algorithm \ref{alg:laminar} simply selects on each part, the first arriving element that is part of the current optimum.  It can be shown that the partition matroid $\calP'$ we define is the same as the one defined in \cite[Section 3.2]{JSZ2014}. The main algorithmic difference is that in \cite{JSZ2014}, the authors use the algorithm for the classic secretary problem to select one terminal on each part that has constant probability of being the largest element. Instead, we select the first arriving element on each part that is part of the optimum at the time of arrival.
This small change makes the competitive ratio of our algorithm $e$ times smaller than theirs, but the analysis is more involved. To prove Theorems \ref{thm:semiplanar} and \ref{thm:laminar} we need to develop some extra tools.

\paragraph{The unit capacity semiplanar gammoid associated to an independent set and its standard arc-covering.} 
For every nonempty independent set $J\subseteq [n]$ in the semiplanar gammoid $\calM[N]$ we chose an arbitrary but fixed $s$-$J$ flow $f_J$ satisfying the arc capacity constraints. Define the unit capacity network $N^1(J)=(G^1(J),s,R,1)$, where $G^1(J)$ is obtained from $G$ by splitting each arc $uv$ with $f_J(uv)\geq 2$ into $f_J(uv)$ parallel arcs (we keep all arcs $uv$ with $f_J(uv)\in \{0,1\}$, we also keep the arcs $s0$ and $s(n+1)$). We do this in such a way that $G^1(J)$ is still semiplanar. The matroid $\calM[N^1(J)]$ is the unit capacity semiplanar gammoid associated to $J$. 

Observe that every path in $G^1(J)$ corresponds canonically to an unsplitted path in $G$ going through the same nodes. Furthermore, every arc-disjoint collection $\calP$ of $s$-$R$ paths in $G^1(J)$ can be regarded, by unsplitting parallel arcs, as an $s$-$R$ flow on $G$ satisfying the capacity constraints. Therefore, we have the following important lemma.

\begin{lem}\label{lem:subsubmatroid}
	Any independent set in $\calM[N^1(J)]$ is also independent in the original matroid $\calM[N]$.
\end{lem}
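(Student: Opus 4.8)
The plan is to unfold the two observations recorded just before the lemma into a formal argument. Suppose $X\subseteq R$ is independent in $\calM[N^1(J)]$. By definition there is an $s$-$R$ flow of unit capacities in $G^1(J)$ that delivers one unit of flow to each element of $X$; since the capacities are integral we may take it integral. First I would apply the standard integral flow decomposition: after discarding circulations (which do not change what is delivered to $X$), the flow becomes a collection of \emph{arc-disjoint} directed $s$-$R$ paths in $G^1(J)$, and for each $x\in X$ the net inflow at $x$ being at least $1$ forces at least one of these paths to end at $x$. Keeping exactly one such path per element of $X$ yields an arc-disjoint family $\calP=\{P_x\}_{x\in X}$ with $P_x$ a simple $s$-$x$ path.

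Next I would transport $\calP$ back to $G$. Every arc of $G^1(J)$ is either an arc of $G$ (the arcs $uv$ with $f_J(uv)\in\{0,1\}$, together with $s0$ and $s(n+1)$) or one of the $f_J(uv)$ parallel copies of an arc $uv$ of $G$ with $f_J(uv)\ge 2$; collapsing the parallel copies sends each $P_x$ to a simple $s$-$x$ path $\bar P_x$ in $G$ through the same sequence of nodes. I then define $g\colon E\to\ZZ_{\ge 0}$ by letting $g(e)$ be the number of indices $x\in X$ with $e\in\bar P_x$. As a sum of path-indicator flows, $g$ satisfies flow conservation at every node in $V\setminus(\{s\}\cup R)$ and delivers one unit of flow to each $x\in X$, so the only remaining point is the capacity check.

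For the capacity check, fix an arc $uv\in E$. Any $\bar P_x$ traversing $uv$ comes from $P_x$ traversing one of the copies of $uv$ in $G^1(J)$, and the number of such copies is $\max\{1,f_J(uv)\}$; since $\calP$ is arc-disjoint, at most $\max\{1,f_J(uv)\}$ of the paths $\bar P_x$ use $uv$, i.e.\ $g(uv)\le\max\{1,f_J(uv)\}$. Because $f_J$ is feasible for $N$ we have $f_J(uv)\le c(uv)$, and because all capacities are strictly positive integers $c(uv)\ge 1$; hence $g(uv)\le c(uv)$. Thus $g$ is a feasible $s$-$R$ flow on $N$ delivering one unit to each element of $X$, so $X$ is independent in $\calM[N]$, which is the claim.

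I do not anticipate a real obstacle, since the statement is essentially a formalization of the ``unsplitting'' remark already in the text. The only mild points of care are: noting that arcs with $f_J$-value $0$ (and the auxiliary arcs $s0$, $s(n+1)$) survive in $G^1(J)$ yet create no capacity violation precisely because every capacity of $N$ is at least $1$; and observing that the vertices $0$ and $n+1$, having no outgoing arcs, never lie on an $s$-$R$ path with $R=[n]$, so they can be disregarded throughout.
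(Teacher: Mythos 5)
Your proposal is correct and is essentially the paper's argument: the paper proves the lemma by exactly the unsplitting observation stated just before it (arc-disjoint $s$-$R$ paths in $G^1(J)$ become a capacity-feasible flow in $G$ after merging parallel copies), and you have simply spelled out the flow decomposition and the capacity count $g(uv)\le\max\{1,f_J(uv)\}\le c(uv)$ that the paper leaves implicit.
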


We say that two paths $P$ and $Q$ in the drawing of a graph \emph{cross} if $P$ enters $Q$ on the left, shares zero or more arcs with $Q$ and then exits $Q$ on the right, or viceversa. A collection of paths is mutually noncrossing if no pair of them cross.\footnote{An alternative way to understand this definition is to imagine that in the drawing nodes and arcs have positive area (they are circles and thick lines). A drawing of a path is any continuous non-self-intersecting curve drawn \emph{inside} the union of all circles and thick lines associated to the nodes and arcs of $P$ visiting them in the correct order. A collection of paths are mutually noncrossing if we can find drawings such that no pair of them intersects. Note that two noncrossing paths can still share arcs and nodes.} The $s$-$J$ flow $f_J$ in the network $N$ can be mapped to an $s$-$J$ flow $f^1_J$ in $N^1(J)$ in a natural way. By applying flow-decomposition on $f^1_J$ we get a collection $\{P^k\}_{k\in J}$ of arc-disjoint paths. Define also $P^0$ and $P^{n+1}$ as the paths consisting of a single arc $s0$ and $s(n+1)$ respectively. By a standard planar uncrossing argument, we can assume that all paths in $\{P^k\}_{k\in J\cup \{0,n+1\}}$ are mutually noncrossing (but they can still share internal nodes, in fact all paths $P^k$ with $k\in J$ contain $s$ and $v_R$, see Figure \ref{fig:1}). For $j\in J+0$, call $j'=\nex_{J+0}(j)$ and define the collection of arcs $A^j\subseteq E(G^1(J))$ as those that are drawn in the closed planar region $\mathcal{R}_j$ bounded by $P^j$, $P^{j'}$ and the $x$-axis. Since paths $\{P^k\}_{k\in J\cup \{0,n+1\}}$ form a topological star whose tips are in the $x$-axis we conclude that $\{\calR_k\}_{k\in J+0}$ is a division of the region bounded by $P^0$, $P^{n+1}$ and the $x$-axis, and thus $\{A^k\}_{k\in J+0}$ is a covering of all arcs in $G^1(J)$. In fact, if $1\leq j'\neq n+1$, then every arc of $P^{j'}$ belongs to pieces $A^j$ and $A^{j'}$, while each arc in $G^1(J)\setminus \{P^k\}_{k\in J}$ belongs to a single piece of the covering. Furthermore, the only terminals contained in region $\mathcal{R}_j$ are those in $[j,j']$.

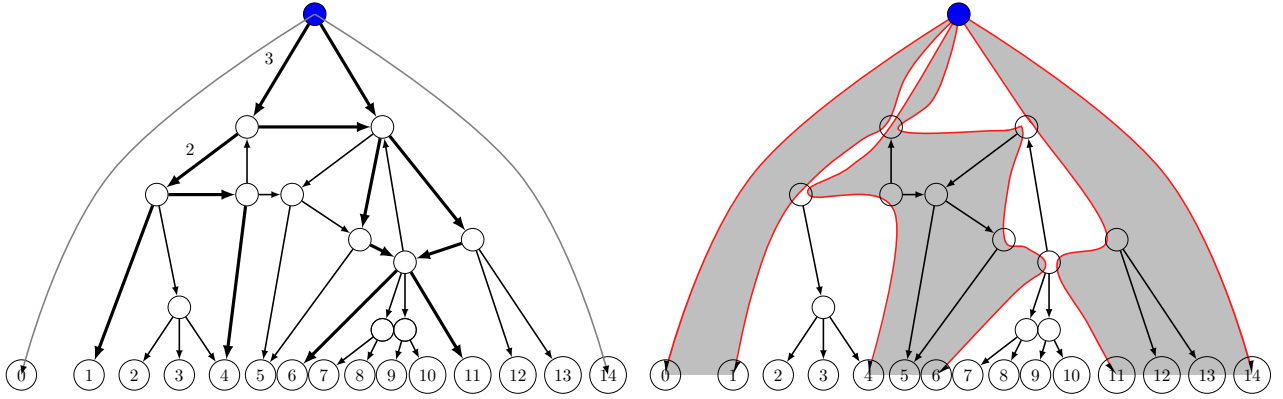
\begin{figure}[t]
	\centering
	{}\hspace{-10pt}{}
	\scalebox{0.6}{
		\begin{tikzpicture}
		\tikzstyle{myedgestyle} = [draw=gray!25, line width=1]
		\tikzstyle{fat} = [draw=black!90, line width=2,-latex]
		\tikzstyle{rojo} = [draw=red!90, line width=2,-latex]
		\tikzstyle{thin} = [draw=black!50, line width=1,-latex]
		\tikzset{every node/.style={draw,circle, minimum size=0.5cm}}
		
		\node  (v5) at (-1.5,5.5) {};
		\node[fill=blue] (v3) at (0,8) {};
		\node (v1) at (-5,0) {1};
		\node (v2) at (4.5,0) {12};
		\node (v17) at (3.5,0) {11};
		\node (v15) at (-0.5,0) {6};
		\node (v14) at (-1.2,0) {5};
		\node (v13) at (-2,0) {4};
		\node (v18) at (-6.5,0) {0};
		\node (v19) at (6.5,0) {14};
		\node (v23) at (1.7,0) {9};
		\node (v27) at (-4,0) {2};
		\node (v28) at (-3,0) {3};
		\node (v20) at (5.5,0) {13};
		\node (v26) at (0.2,0) {7};
		\node (v25) at (1,0) {8};
		\node (v22) at (2.5,0) {10};
		\node (v11) at (-3.5,4) {};
		\node (v7) at (3.5,3) {};
		\node (v6) at (1,3) {};
		\node (v8) at (2,2.5) {};
		\node (v9) at (-0.5,4) {};
		\node (v10) at (-1.5,4) {};
		\node (v12) at (-3,1.5) {};
		\node (v16) at (1.5,5.5) {};
		\node (v24) at (1.5,1) {};
		\node (v21) at (2,1) {};

		\draw [thin] plot[smooth, tension=.7] coordinates {(v3) (-4.5,4.5) (v18)};
		\draw [thin] plot[smooth, tension=.7] coordinates {(v3) (4.5,4.5) (v19)};
		
		\draw [fat] (v3) edge (v16);
		\draw [fat] (v16) edge (v7);
		\draw [thin] (v7) edge (v2);
		\draw [thin] (v7) edge (v20);
		
		\draw [fat] (v3) edge node[above left, draw=none] {3}  (v5);
		\draw [fat] (v5) edge (v16);
		\draw [fat] (v16) edge (v6);
		\draw [fat] (v7) edge (v8);
		\draw [fat] (v8) edge (v17);
		\draw [fat] (v6) edge (v8);
		\draw [fat] (v8) edge (v15);
		\draw [fat] (v5) edge node[above left, draw=none] {2}  (v11);
		\draw [fat] (v11) edge (v10);
		\draw [fat] (v10) edge (v13);
		\draw [fat] (v11) edge (v1);
		\node (v24) at (1.5,1) {};
		\node (v21) at (2,1) {};
		\draw [thin] (v8) edge (v21);
		\draw [thin] (v21) edge (v22);
		\draw [thin] (v21) edge (v23);
		\draw [thin] (v8) edge (v24);
		\draw [thin] (v24) edge (v25);
		
		\draw [thin] (v24) edge (v26);
		\draw [thin] (v16) edge (v9);
		\draw [thin] (v10) edge (v9);
		\draw [thin] (v9) edge (v14);
		\draw [thin] (v11) edge (v12);
		\draw [thin] (v12) edge (v27);
		\draw [thin] (v12) edge (v28);
		\draw [thin] (v10) edge (v5);
		\draw [thin] (v8) edge (v16);
		\draw [thin] (v9) edge (v6)  ;
		\draw [thin] (v12) edge (v13);
		\draw [thin] (v6) edge (v14);
		\end{tikzpicture} \quad
		\begin{tikzpicture}
		\tikzstyle{myedgestyle} = [draw=gray!25, line width=1]
		\tikzstyle{fat} = [draw=black!90, line width=2,-latex]
		\tikzstyle{rojo} = [draw=red!90, line width=2,-latex]
		\tikzstyle{thin} = [draw=black!50, line width=1,-latex]
		\tikzset{every node/.style={draw,circle, minimum size=0.5cm}}
		
		\node  (v5) at (-1.5,5.5) {};
		\node[fill=blue] (v3) at (0,8) {};
		\node (v1) at (-5,0) {1};
		\node (v2) at (4.5,0) {12};
		\node (v17) at (3.5,0) {11};
		\node (v15) at (-0.5,0) {6};
		\node (v14) at (-1.2,0) {5};
		\node (v13) at (-2,0) {4};
		\node (v18) at (-6.5,0) {0};
		\node (v19) at (6.5,0) {14};
		\node (v23) at (1.7,0) {9};
		\node (v27) at (-4,0) {2};
		\node (v28) at (-3,0) {3};
		\node (v20) at (5.5,0) {13};
		\node (v26) at (0.2,0) {7};
		\node (v25) at (1,0) {8};
		\node (v22) at (2.5,0) {10};
		\node (v11) at (-3.5,4) {};
		\node (v7) at (3.5,3) {};
		\node (v6) at (1,3) {};
		\node (v8) at (2,2.5) {};
		\node (v9) at (-0.5,4) {};
		\node (v10) at (-1.5,4) {};
		\node (v12) at (-3,1.5) {};
		\node (v16) at (1.5,5.5) {};
		\node (v24) at (1.5,1) {};
		\node (v21) at (2,1) {};
		
		\begin{scope}[on background layer]
		\draw [rojo, name path=one] plot[smooth, tension=.7] coordinates {(v3) (-4.5,4.5) (v18)};
		\draw [rojo, name path=six] plot[smooth, tension=.7] coordinates {(v3) (4.5,4.5) (v19)};
		
		\draw [rojo, name path=two] plot[smooth, tension=.7] coordinates {(v3) (-0.6,7.4)(-1.2,6.6) (-1.6,5.6) (-2.2,5.2) (-3.2,4.4) (-4,3.4) (-4.6,1.8) (v1)};

		\draw [rojo, name path=three] plot[smooth, tension=.7] coordinates {(v3) (v5) (-2.4,4.6) (-3.3,4)(-1.4,3.6)(v13)};
		
		\draw [rojo, name path=four] plot[smooth, tension=.7] coordinates {(v3) (-0.2,7.1) (-0.6,6.1) (-1.3,5.4) (0.9,5.4)  (1.4,5.4) (1.1,4.3) (v6)(1.5,2.8) (1.9,2.4)(1.2,1.8) (v15)};
		
		\draw [rojo, name path=five] plot[smooth, tension=.7] 
		coordinates {(v3) (1.4,5.9) (1.7,5.5) (3.3,3.1) (2.2,2.5)
			(2.9,1.1) (v17)};
			
		\tikzfillbetween[
		of=one and two,split
		] {gray!50};
		\tikzfillbetween[
		of=three and four,split
		] {gray!50};
		\tikzfillbetween[
		of=five and six,split
		] {gray!50};

		\end{scope}

		\draw [thin] (v7) edge (v2);
		
		\draw [thin] (v7) edge (v20);
		
		
		
		
		\draw [thin] (v8) edge (v21);
		\draw [thin] (v21) edge (v22);
		\draw [thin] (v21) edge (v23);
		\draw [thin] (v8) edge (v24);
		\draw [thin] (v24) edge (v25);
		\draw [thin] (v24) edge (v26);
		\draw [thin] (v16) edge (v9);
		\draw [thin] (v10) edge (v9);
		\draw [thin] (v9) edge (v14);
		\draw [thin] (v11) edge (v12);
		\draw [thin] (v12) edge (v27);
		\draw [thin] (v12) edge (v28);
		\draw [thin] (v10) edge (v5);
		\draw [thin] (v8) edge (v16);
		\draw [thin] (v9) edge (v6)  ;
		\draw [thin] (v12) edge (v13);
		\draw [thin] (v6) edge (v14);
		\end{tikzpicture}
	}\vspace{-15pt}
	\caption{\label{fig:1} \small On the left, a flow $f_J$ associated to the independent set $J=\{1,4,6,11\}$ on a semiplanar gammoid $\calM[N]$ with ground set $R=[13]$ ($0$ and $14$ are auxiliary nodes). On the right, $\calM[N^1(J)]$ and the partition of the drawing into regions $\calR_0$, $\calR_1$, $\calR_4$, $\calR_6$ and $\calR_{11}$ induced by the paths $\{P^j\}_{j\in J\cup \{0,14\}}$. The regions $\calR_0$, $\calR_4$ and $\calR_{11}$ are shaded.}
\end{figure}

\begin{lem}\label{lem:distinctpaths}
	Let $y \in [n]\setminus J$, and $j=\pre_{J+y}(y)$, $j'=\nex_{J+y}y$ be its neighbors in $J$. Then there is a directed $s$-$y$ path $D_J(y)$ in $G^1(J)$ whose arc set is completely contained in $A^j$. Furthermore, $D_J(y)$ can be chosen so that it arcs-intersects at most one path $\bar{P}$ in $(P^k)_{k\in J}$. In the semiplanar case, $\bar{P}$ must be one of $P^j$ and $P^{j'}$, and in the laminar case, $\bar{P}=P^{\pi_J(\ell)}$.
\end{lem}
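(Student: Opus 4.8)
The plan is to exploit the fact that the network $N^1(J)$ has unit capacities and that the flow paths $\{P^k\}_{k\in J\cup\{0,n+1\}}$ are mutually noncrossing, so that the drawing is divided into regions $\{\calR_k\}_{k\in J+0}$, each bounded by two consecutive flow paths and the $x$-axis. First I would locate the region containing $y$: since $j=\pre_{J+y}(y)$ and $j'=\nex_{J+y}(y)$ are the neighbors of $y$ in $J$, the terminal $y$ lies in the interior of the $x$-axis segment $[j,j']$, hence (by the last sentence before the statement) $y$ lies on the boundary of exactly the region $\calR_j$, whose bounding paths are $P^j$ and $P^{j'}$. Now I would build $D_J(y)$ by starting at $y$ and walking \emph{upward against the orientation} inside $\calR_j$ until hitting either $P^j$ or $P^{j'}$; such a walk exists and stays inside $A^j$ because every arc reachable from $s$ that can reach $y$ without leaving $\calR_j$ lies in $A^j$ (this uses that $\calR_j$ is a closed region bounded by the two paths, and that $y$ is reachable from $s$ by assumption on the drawing, so some $s$-$y$ path lies in the closure of $\calR_j$; by planarity and noncrossing, we may assume it touches at most the two boundary paths).

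The key structural step is the \emph{single-path-intersection} claim: $D_J(y)$ can be chosen to arc-intersect at most one of $P^j, P^{j'}$. I would argue this by a standard planar uncrossing/shortcutting argument: take any $s$-$y$ path $D$ inside $A^j$; if it meets both boundary paths, then since $P^j$ and $P^{j'}$ both start at $s$ and are arc-disjoint, we can reroute $D$ to follow whichever boundary path it first meets (say $P^j$) from $s$ down to the last shared arc, and then proceed inside $\calR_j$ to $y$ without touching $P^{j'}$; the noncrossing property guarantees this reroute does not create a crossing and stays inside $A^j$. Thus in the semiplanar case $\bar P \in \{P^j, P^{j'}\}$.

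For the laminar refinement, $G^1(J)=G$ is a tree (capacities split a tree into a tree), so there is a \emph{unique} $s$-$y$ path in $G$, call it $sGy$. Walking up from $y$, the first flow path among $\{P^k\}_{k\in J}$ that $sGy$ hits is, by the very definition of the representative $\pi_J(y)$ — namely, $\pi_J(y)$ is the left neighbor $j$ iff $y\vee \pre_{J+y}(y)\sqsubset y\vee \nex_{J+y}(y)$, i.e.\ iff the least common ancestor with $j$ is strictly below the one with $j'$, i.e.\ iff going up from $y$ we reach $P^j$ before $P^{j'}$ — exactly $P^{\pi_J(y)}$. (Here I write $\ell$ for $y$ to match the statement.) So the unique path $sGy$ is forced to equal the concatenation of the initial segment of $P^{\pi_J(y)}$ from $s$ to $y\vee \pi_J(y)$ with the descending segment to $y$; it arc-intersects $P^{\pi_J(y)}$ and no other $P^k$ with $k\in J$, and it lies in $A^j$ since $y\in \pi_J^{-1}(\pi_J(y)) \subseteq [\pre_J(j),\nex_J(j)]\subseteq \calR_j$ (using that $\pi_J(y)\in\{j,j'\}$ and the interval structure of $\pi_J^{-1}$). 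Set $D_J(y):=sGy$.

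The main obstacle I expect is making the planar/topological manipulations in the semiplanar case rigorous — precisely, showing that one can always choose an $s$-$y$ path lying entirely in $A^j$ (not wandering into neighboring regions) and then uncross it against the two boundary paths without introducing new arc-intersections; this requires carefully invoking the noncrossing property of $\{P^k\}$ together with the fact that $\{A^k\}$ covers all arcs and that arcs of $P^{j'}$ are shared only by $A^j$ and $A^{j'}$. The laminar case, by contrast, is essentially forced by uniqueness of paths in a tree and is just a matter of unwinding the definition of $\pi_J$.
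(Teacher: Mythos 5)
Your overall route is the same as the paper's: locate $y$ in the region $\calR_j$ bounded by $P^j$, $P^{j'}$ and the $x$-axis, build $D_J(y)$ by combining a piece of one boundary path with a piece of an $s$-$y$ path, and in the laminar case unwind the definition of $\pi_J(y)$ via lowest common ancestors. The laminar half of your argument is essentially the paper's (one quibble: $G^1(J)$ is \emph{not} a tree — splitting arcs of positive flow creates parallel arcs — but since the underlying vertex sequence of any $s$-$y$ path is still unique, your LCA argument goes through once you phrase it on the underlying tree $G$ and then choose the arc copies of $P^{\pi_J(y)}$ above $v=y\vee j\vee j'$).

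The semiplanar step, however, has a genuine gap as written. You start from ``any $s$-$y$ path $D$ inside $A^j$,'' which presupposes the first claim of the lemma, and your uncrossing rule — follow ``whichever boundary path it first meets'' from $s$ to the last arc shared with \emph{that} path, then continue along $D$ — does not deliver the conclusion: the remaining portion of $D$ after its last contact with $P^j$ may perfectly well still arc-intersect $P^{j'}$, so the rerouted path can meet both boundary paths (also, ``first meets'' is ill-defined, since every $P^k$ with $k\in J$ starts at $s$, which lies on both). The fix, which is exactly what the paper does, is to take an \emph{arbitrary} $s$-$y$ path $Q$ in $G^1(J)$ (it exists because $y$ is reachable from $s$), let $v$ be the \emph{last} vertex of $Q$ lying on $V(P^j)\cup V(P^{j'})$, and set $D_J(y)$ to be the initial segment of the boundary path $\bar P$ containing $v$ followed by the tail of $Q$ from $v$ to $y$. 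By the choice of $v$ the tail touches neither boundary path again, hence by planarity it stays in $\calR_j$ and its arcs lie in $A^j$; the initial segment lies on $\bar P$, whose arcs also belong to $A^j$, and arc-disjointness of the flow paths ensures no arc of the other boundary path (or of any other $P^k$) is used. With that selection rule in place of ``first met,'' your plan matches the paper's proof; as you wrote it, the crucial single-path-intersection claim is asserted rather than proved, and the asserted rule is false in general.
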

\begin{proof}
	Note that $y\in [j,j']$ is in the planar region $\calR_j$. Since $y$ is reachable from $s$ in $G$, there is a path $Q$ from $s$ to $y$ in $G^1(J)$. Let $v$ be the last vertex in $Q$ contained in the vertices of $P^j$ and $P^{j'}$ ($v$ maybe in one or both paths), say $v$ is in $\bar{P}\in \{P^j,P^{j'}\}$. By concatenating the initial piece of $\bar{P}$ from $s$ to $v$ and the final piece of $Q$ from $v$ to $y$ we get a path $D_J(y)$ that is completely contained in $A^j$ and that intersects the arcs of at most one path in $\{P_j, P_{j'}\}$.
	
	Consider the same situation in the laminar case. All arcs in $P^j$, $P^{j'}$ and $Q$ are splitted versions of arcs in $sGj$, $sGj'$ and $sGy$ respectively. The vertex $v$ defined above satisfies $v=y\vee j \vee j'$. If $y\vee j \sqsubset y\vee j'$ then $v=y\vee j$, $\pi_J(y)=j$ and we can construct $D_J(y)$ avoiding all arcs in $P^{j'}$ by selecting $\bar{P}=P^j$ in the previous argument. Analogously, if $y\vee j' \sqsubseteq y\vee j$ then $v=y\vee j'$, $\pi_J(y)=j'$, and we can choose $D_J(\ell)$ to avoid $P^{j}$ by setting $\bar{P}=P^{j'}$.
\end{proof}

\begin{lem}\label{lem:submatroids}
	Let $J\subseteq [n]$ be a nonempty independent set in $\calM[N]$ and $I\subseteq [n]\setminus J$. Consider the following properties.
	\begin{compactenum}
		\item[(P1)] For every $x, y \in I$ with $x\neq y$, 
		$\{\pre_{J+x}(x), \nex_{J+x}(x)\}\cap \{\pre_{J+y}(y), \nex_{J+y}(y)\}=\emptyset$.
		\item[(P2)] $\calM[N]$ is laminar and for every $x, y \in I$, $\pi_J(x)\neq \pi_J(y)$.
	\end{compactenum}
	If either (P1) or (P2) holds then $I$ is independent in $\calM[N^1(J)]$
\end{lem}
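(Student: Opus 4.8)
The plan is to reduce the independence of $I$ in $\calM[N^1(J)]$ to the existence of arc-disjoint $s$-paths: since $N^1(J)$ has unit arc capacities, a family of pairwise arc-disjoint directed paths in $G^1(J)$, starting at $s$ and terminating one each at the elements of $I$, is, by integrality of flows, a certificate that $I$ is independent in $\calM[N^1(J)]$. For $y\in[n]\setminus J$ write $a_y=\pre_{J+y}(y)$ and $b_y=\nex_{J+y}(y)$ for its left and right neighbours in $J$; recall from the construction preceding Lemma~\ref{lem:distinctpaths} that the non-crossing paths $\{P^k\}_{k\in J\cup\{0,n+1\}}$ cut the drawing into regions $\{A^k\}_{k\in J+0}$ linearly ordered from left to right, that $A^{a_y}$ is the region bounded by $P^{a_y}$, $P^{b_y}$ and the $x$-axis, and that two of these regions share an arc only when they are consecutive, the shared arcs then being exactly those of their common bounding path.

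For case (P1) I would, for each $y\in I$, take the $s$--$y$ path $D_J(y)$ furnished by Lemma~\ref{lem:distinctpaths}, whose arcs all lie inside $A^{a_y}$. Given distinct $x,y\in I$, property (P1) gives $\{a_x,b_x\}\cap\{a_y,b_y\}=\emptyset$, so the distinct regions $A^{a_x}$ and $A^{a_y}$ have no common bounding path, are therefore non-consecutive in the left-to-right subdivision, and hence share no arc; consequently $D_J(x)$ and $D_J(y)$ are arc-disjoint. The family $\{D_J(y)\}_{y\in I}$ thus witnesses that $I$ is independent in $\calM[N^1(J)]$.

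For case (P2), $\calM[N]$ is laminar, so after the arc-splitting $G^1(J)$ is a tree with parallel arcs rooted at $s$; for a node $v$ let $\mathrm{below}(v)=\{z\in[n]:z\sqsubseteq v\}$, which is an interval of terminals by planarity. In a tree the only $s$-flow delivering one unit to each element of a set $S$ sends $|S\cap\mathrm{below}(v)|$ units across the arc entering $v$; taking $S=J$ this forces $f_J$, so in $N^1(J)$ the arc entering $v$ has capacity $\max(|J\cap\mathrm{below}(v)|,1)$ (every arc of $G$ is kept). Hence it suffices to prove
\[ |I\cap\mathrm{below}(v)|\ \le\ \max\!\bigl(|J\cap\mathrm{below}(v)|,\,1\bigr)\qquad\text{for every node }v, \]
for then the forced $s$-flow to $I$ is feasible in $N^1(J)$. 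Fix $v$ and set $[p,q]=\mathrm{below}(v)$. If $J\cap[p,q]=\emptyset$, every $i\in[p,q]$ has the same left neighbour $\pre_J(p)$ and the same right neighbour $\nex_J(q)$ (read as $0$, $n+1$ when needed), both outside $[p,q]$; since a join $i\vee w$ with $w\notin\mathrm{below}(v)$ equals $v\vee w$, the defining comparison for $\pi_J(i)$ is independent of $i$, so $\pi_J$ is constant on $[p,q]$, and the injectivity of $\pi_J$ on $I$ given by (P2) forces $|I\cap[p,q]|\le1$. If $J\cap[p,q]\neq\emptyset$, put $j_-=\min(J\cap[p,q])$, $j_+=\max(J\cap[p,q])$: for $j_-<i<j_+$ both neighbours of $i$ lie in $[p,q]$, so $\pi_J(i)\in J\cap[p,q]$; for $i<j_-$ one has $\nex_J(i)=j_-\sqsubseteq v$ while $\pre_J(i)\notin[p,q]$ makes $i\vee\pre_J(i)=v\vee\pre_J(i)\sqsupseteq v\sqsupseteq i\vee j_-$, so the rule defining $\pi_J$ selects the right neighbour, $\pi_J(i)=j_-\in[p,q]$, and symmetrically $i>j_+$ yields $\pi_J(i)=j_+\in[p,q]$. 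Thus $\pi_J(I\cap[p,q])\subseteq J\cap[p,q]$, and injectivity of $\pi_J$ on $I$ gives $|I\cap[p,q]|\le|J\cap[p,q]|$. This proves the displayed inequality in all cases, so $I$ is independent in $\calM[N^1(J)]$.

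The main obstacle is the (P2) verification above: one has to extract, from the purely combinatorial definition of the representative map $\pi_J$ and the tree-order $\sqsubseteq$, the fact that a terminal of $I$ whose representative escapes the subtree interval $\mathrm{below}(v)$ can escape only on the side of that interval which contains no element of $J$ — this is where planarity and the identity $i\vee w=v\vee w$ for $w\notin\mathrm{below}(v)$ are crucial, and it has to be split carefully according to the position of $i$ relative to $\min$ and $\max$ of $J\cap\mathrm{below}(v)$. Case (P1), by contrast, is routine once Lemma~\ref{lem:distinctpaths} is in hand.
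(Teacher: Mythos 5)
Your proof is correct. For (P1) you follow essentially the paper's own route: the canonical paths $D_J(y)$ of Lemma~\ref{lem:distinctpaths} lie in the regions indexed by the left neighbours, (P1) forces these regions to be pairwise non-consecutive, hence the paths are arc-disjoint and witness independence.

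For (P2) you take a genuinely different route. The paper stays inside the same path framework: it proves directly that $D_J(x)$ and $D_J(y)$ are arc-disjoint, handling the case of distinct regions via the ``at most one crossed path'' conclusion of Lemma~\ref{lem:distinctpaths}, and the case of a common region by a delicate chain of comparisons in the join-semilattice combined with the fact that $\pi_J$ partitions $[n]$ into intervals. You instead exploit that in the laminar case $G^1(J)$ is a tree with parallel arcs, so both $f_J$ and the flow needed to route $I$ are forced, and independence of $I$ in $\calM[N^1(J)]$ reduces to the cut condition $|I\cap\mathrm{below}(v)|\le\max(|J\cap\mathrm{below}(v)|,1)$ at every node $v$; you verify it by showing that $\pi_J$ maps the terminals below $v$ into $J\cap\mathrm{below}(v)$ when that set is nonempty and is constant on $\mathrm{below}(v)$ otherwise, so injectivity of $\pi_J$ on $I$ gives the bound. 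The case analysis checks out: for $i\sqsubseteq v$ and $w\not\sqsubseteq v$ the join $i\vee w=v\vee w$ is a strict ancestor of $v$, which both makes the defining comparison independent of $i$ in the empty case and supplies the strict inequality needed when the representative must be the left neighbour $j_+$; and the fact that $\mathrm{below}(v)$ is an interval follows from the paper's observation that $i\in[j,j']$ implies $i\sqsubseteq j\vee j'$. What your argument buys is a shorter, more conceptual (Hall/cut-type) proof of the laminar case that avoids the uncrossing-style path analysis; what the paper's version buys is uniformity, since it reuses for (P2) exactly the witness paths already set up for (P1). Since the lemma is only invoked downstream as a black box for correctness of Algorithms~\ref{alg:semiplanar} and~\ref{alg:laminar}, either proof serves equally well.
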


\begin{proof}
	Suppose that $(P1)$ is satisfied. Then for every pair of distinct $x, y \in I$, the paths $D_J(x)$ and $D_J(y)$ constructed in Lemma \ref{lem:distinctpaths} must lie on non-consecutive (and hence disjoint) regions in $\{\calR_k\}_{k\in J+0}$. Therefore the paths $\{D_J(z)\}_{z\in I}$ are mutually arc-disjoint from which we conclude that $I$ is an independent set in $\calM[N^1(J)]$.
	
	Suppose now that $(P2)$ is satisfied. Let $x<y$ be distinct terminals in $I$. We claim that the paths $D_J(x)$ and $D_J(y)$ are arc disjoint. If $x$ and $y$ belong to different regions then the only possibility for them to share an arc is that $x$ is in $\calR_j$, $y\in \calR_{j'}$ with $j'=\nex_{J}(j)$, and both share arcs in $P^{j'}$. But then, by Lemma \ref{lem:distinctpaths}, $\pi_J(x)=\pi_J(y)={j'}$ which is a contradiction.
	
	If $x$ and $y$ are in the same region $\calR_j$, then $x, y \in [j, j']$ with $j=\pre_{J+x}(x)=\pre_{J+y}(y)$ and $j'=\nex_{J+x}(x)=\nex_{J+y}(y)\in J+(n+1)$. Since the function $\pi_J$ partitions $[n]$ into intervals and $x < y$, we must have $\pi_J(x)=j$, $\pi_J(y)=j'$. Suppose now that $D_J(x)$ and $D_J(y)$ had an arc $a$ in common and let $w$ be its head. Since $D_J(x)$ and $D_J(y)$ are split versions of $sGx$ and $sGy$ we get that $w\sqsupseteq x\vee y$. Since  $w$ and $x\vee j$ are both in $sGx$, one is an ancestor of the other. Note that $w$ cannot be an ancestor of $x\vee j$ since above the latter $D_J(x)$ coincides with $P^{j}$ which is arc-disjoint from $D_J(y)$ by Lemma \ref{lem:distinctpaths} (and $a$ is a common arc). It follows that $x\vee j \sqsupset w\sqsupseteq x\vee y \sqsupseteq y$, and thus, $x\vee j \sqsupseteq y\vee j$. But since $\pi_J(y)=j'$ we have $y\vee j \sqsupseteq y\vee j'$ and we conclude that $x\vee j\sqsupseteq y\vee j' \sqsupseteq j'$. From the last expression we get $x\vee j \sqsupseteq x\vee j'$ which contradicts the fact that $\pi_J(x)=j$. 
	
	We have thus shown that all paths $(D_J(z))_{z\in I}$ are arc-disjoint, thence $I$ is independent in $\calM[N^1(J)]$
\end{proof}

Now we are ready to define the forbidden sets of sizes at most 4 and 3 respectively for Algorithms \ref{alg:semiplanar} and \ref{alg:laminar}. 
For $(X,r^*)$ with $r^*\in [n]$, $X\subseteq [n]\setminus \{r^*\}$ define the sets
\begin{align*}
	\calF_4(X,r^*)&=\{\pre_{X}(\pre_{X+r^*}(r^*)), \pre_{X+r^*}(r^*), \nex_{X+r^*}(r^*),\nex_{X}(\nex_{X+r^*}(r^*)) \}\\
	I_4(X,r^*)&=[\pre_{X}(\pre_{X+r^*}(r^*)),\nex_{X}(\nex_{X+r^*}(r^*))].
\end{align*}
The set $\calF_4(X,r^*)$ contains the 2 closest terminals to the left of $r^*$ in $X$ and the 2 closest terminals to its right. $I_4(X,r^*)$ is the enclosing interval containing $\calF_4(X,r^*)$. Similarly, for the case of laminar matroids, define
\begin{align*}
	\calF_3(X,r^*)&=\{\pre_{X}(\pi_X(r^*)), \pi_X(r^*), \nex_{X}(\pi_X(r^*)) \}\\
	I_3(X,r^*)&=[\{\pre_{X}(\pi_X(r^*)), \nex_{X}(\pi_X(r^*)) \}].
\end{align*}
where $\calF_3(X,r^*)$ consists of the representative of $r^*$ in $X$ and its two neighbors, and $I_3(X,r^*)$ is its enclosing interval. We need one last technical lemma to analize the algorithms performance.

\begin{restatable}{lem}{lematecnico}
	\label{lem:tecnico} Let $X$ and $r^*$ as above, and let $x\in X$.
	\begin{compactenum}
		\item[(a)] If $x\in \calF_4(X,r^*)$ then $I_4(X,r^*)\subseteq I_4(X-x,r^*)$.
		\item[(b)] If $x\not\in \calF_4(X,r^*)$ then $\calF_4(X,r^*)=\calF_4(X-x,r^*)$ and $I_4(X,r^*)=I_4(X-x,r^*)$.
	\end{compactenum}
	If the matroid is laminar, the following properties also hold
	\begin{compactenum}
		\item[(c)] If $x\in \calF_3(X,r^*)$ then $I_3(X,r^*)\subseteq I_3(X-x,r^*)$.
		\item[(d)] If $x\not\in \calF_3(X,r^*)$ then $\calF_3(X,r^*)=\calF_3(X-x,r^*)$ and $I_3(X,r^*)=I_3(X-x,r^*)$.
	\end{compactenum}
\end{restatable}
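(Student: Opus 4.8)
The plan is to reduce both the semiplanar and laminar statements to elementary bookkeeping about the positions of a handful of terminals in the sorted list $X\cup\{0,n+1\}$. Write $p=\pre_{X+r^*}(r^*)$ and $q=\nex_{X+r^*}(r^*)$ for the left and right neighbours of $r^*$ in $X$ (possibly $0$ or $n+1$). Two structural observations drive everything. First, $p$ and $q$ are consecutive in $X\cup\{0,n+1\}$ with $r^*$ between them, so $\nex_X(p)=q$, $\pre_X(q)=p$, and therefore $\calF_4(X,r^*)=\{\pre_X(p),p,q,\nex_X(q)\}$ with $I_4(X,r^*)=[\pre_X(p),\nex_X(q)]$. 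Second, $\pi_X(r^*)\in\{p,q\}$ and, because the defining formula for $\pi_X$ uses only the pair $\{p,q\}$, the other of $p,q$ is exactly the $X$-neighbour of $\pi_X(r^*)$ lying on the $r^*$ side; hence $\calF_3(X,r^*)=\{\pre_X(p),p,q\}$ and $I_3(X,r^*)=[\pre_X(p),q]$ when $\pi_X(r^*)=p$, and $\calF_3(X,r^*)=\{p,q,\nex_X(q)\}$, $I_3(X,r^*)=[p,\nex_X(q)]$ when $\pi_X(r^*)=q$.

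Parts (b) and (d) are then immediate: if $x\notin\calF_4(X,r^*)$ (resp.\ $x\notin\calF_3(X,r^*)$) then in particular $x\neq p,q$, so the neighbours of $r^*$ in $X-x$ are still $p,q$, and since no element of $X$ lies strictly between $p$ (resp.\ $q$) and its recorded neighbour, removing $x$ changes neither $\pre_X(p)$, nor $\nex_X(q)$, nor (in the laminar case) $\pi_X(r^*)$ which depends only on $\{p,q\}$. Hence $\calF$ and $I$ are unchanged.

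For (a) and (c) I would do a short case analysis on which element of the forbidden set $x$ equals. If $x$ is one of the outer elements $\pre_X(p)$ or $\nex_X(q)$ (and their analogues for $\calF_3$), then $r^*$'s neighbours and $\pi_X(r^*)$ are untouched and the only effect is that the corresponding endpoint of the interval moves outward to $\pre_X(\pre_X(p))$ or $\nex_X(\nex_X(q))$ (possibly to $0$ or $n+1$), so $I$ only grows. If $x\in\{p,q\}$, deleting it replaces that neighbour of $r^*$ by the next element of $X$ on the same side; in the semiplanar case this widens $I_4$ on that side and leaves the other side fixed, giving (a).

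The delicate point, and the main obstacle, is (c) in the cases $x=p$ and $x=q$, where removing a neighbour of $r^*$ may change the representative $\pi_X(r^*)$ and hence which neighbour pair defines $I_3$. Here I would use the join-monotonicity already recorded in the paper: $[i,j]\subseteq[k,\ell]$ implies $i\vee j\sqsubseteq k\vee\ell$, with $0\vee y=(n+1)\vee y=s$. For example, if $\pi_X(r^*)=p$, so $r^*\vee p\sqsubset r^*\vee q$, and $x=q$, then the new right neighbour $b=\nex_X(q)$ has $r^*\vee q\sqsubseteq r^*\vee b$, hence $r^*\vee p\sqsubset r^*\vee b$ and the representative remains $p$; one checks $I_3(X-q,r^*)=[\pre_X(p),\nex_X(q)]\supseteq[\pre_X(p),q]=I_3(X,r^*)$. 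The other subcases (deleting $p$ with $\pi_X(r^*)=p$, and the mirror cases with $\pi_X(r^*)=q$) are analogous: monotonicity either forces the new representative, or, when both choices survive, each choice produces an interval containing $I_3(X,r^*)$ since at least one endpoint has moved strictly outward while neither has moved inward. The degenerate situations $p=0$ or $q=n+1$ need a brief check, but there the relevant $x\in\{p,q\}$ is not an element of $X$ (so the subcase is vacuous) or the join with $0$ or $n+1$ equals the root $s$ and forces the representative, so no additional work is required.
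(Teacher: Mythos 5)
Your proposal is correct and follows essentially the same route as the paper's proof: identify $\calF_4$ and $\calF_3$ in terms of the neighbours $p,q$ of $r^*$, dispose of (b) and (d) directly, and handle (a) and (c) by a case analysis on which member of the forbidden set is deleted, using the join-monotonicity $[i,j]\subseteq[k,\ell]\Rightarrow i\vee j\sqsubseteq k\vee\ell$ exactly where the paper does (its cases (iv)--(v)) to pin down the new representative. The only difference is presentational: you compress the paper's cases (i)--(iii) into the ``outer element / both choices survive'' observations, which checks out.
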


\begin{proof}[Proof of Lemma \ref{lem:tecnico}]\mbox{}\\
	
	\noindent \textbf{Part (a):}
	Let $\calF_4(X,r^*)=\{a,b,c,d\}$ from left to right (note that near the borders we could have $a=b$ or $c=d$), in particular $a\leq b \leq r^* \leq c \leq d$. Denote by $a^-=\pre_X(a)$ and $d^+=\nex_X(d)$.
	
	\begin{compactenum}
		\item[(i)] If $x\in \{a,b\}$ then $\calF_4(X-x,r^*)=(\calF_4(X,r^*)-x)+ a^-$ and so $I_4(X,r^*)=[a,d]\subseteq [a^-,d]=I_4(X-x,r^*)$.
		\item[(ii)] If $x\in \{c,d\}$ then $\calF_4(X-x,r^*)=(\calF_4(X,r^*)-x)+ d^+$ and so $I_4(X,r^*)=[a,d]\subseteq [a,d^+]=I_4(X-x,r^*)$.\\
	\end{compactenum}
	
	\noindent\textbf{Part (b):} Direct.
	
	For parts (c) and (d) let $\calF_3(X,r^*)=\{a,b,c\}$ from left to right (in particular $b=\pi_{X}(r^*)$, note that we could be in the cases $a=b=0$ or $b=c=n+1$). Denote by $a^-=\pre_X(a)$ and $c^+=\nex_X(c)$.\\
	
\noindent	\textbf{Part (c):} We have many possibilities to analyze. 
	
	\begin{compactenum}
		\item[(i)] If $x=b$, then the closest neighbors of $r^*$ in $X-x$ are $a$ and $c$. In particular $\pi_{X-x}(r^*)$ is either $a$ or $c$. In both cases, $I_3(X,r^*)=[a,c]\subseteq I_3(X-x,r^*)$.
		\item[(ii)] If $x=a\neq b$ and $r^*\in[b,c]$, then $b$ and $c$ are still the closest neighbors of $r^*$ in $X-x$, and in particular, the representative in $X-x$ is the same as that in $X$, i.e., $\pi_{X-x}(r^*)=b$. Note that since we removed $a$, $\pre_{X-x}(b)=a^-$ and so, $I_3(X,r^*)=[a,c]\subseteq [a^-,c]=I_3(X-x,r^*)$
		\item[(iii)] The case $x=c\neq b$ and $r^*\in [a,b]$ is analogous to the previous one
		\item[(iv)] If $x=a\neq b$,  and $r^*\in [a,b]$, then the closest neighbors of $r^*$ in $X-x$ are $a^-$ and $b$. We have that $r^*\vee b \sqsubseteq r^*\vee a \sqsubseteq r^*\vee a^-$ where the first inequality holds since $\pi_X(r^*)=b$ and the second one since $[a,r]\subseteq [a^-,r^*]$. We conclude that $\pi_{X-x}(r^*)=b$. In particular $I_3(X,r^*)=[a,c]\subseteq [a^-,c]=I_3(X-x,r^*)$.
		\item[(v)] If $x=c\neq b$ and $r^*\in [b,c]$ then the closest neighbors of $r^*$ in $X-x$ are $b$ and $c^+$. We have that
		$r^*\vee b \sqsubset r^*\vee c \sqsubseteq r^*\vee c^+$ where the first inequality holds since $\pi_X(r^*)=b$ and the second holds since $[r^*,c]\subseteq [r^*,c^+]$. We conclude that $\pi_{X-x}(r^*)=b$. In particular $I_3(X,r^*)=[a,c]\subseteq [a,c^+]=I_3(X-x,r^*)$.
	\end{compactenum}
	
\noindent	\textbf{Part (d):} 
	Since $x\not\in \calF_3(X,r^*)$ the neighbors of $r^*$ in $X-x$ are the same as those in $X$, it follows that $\pi_{X-x}(r^*)=\pi_{X}(r^*)=b$,  $\pre_{X-x}(b)=a$, $\nex_{X-x}(b)=c$. Therefore $\calF_3(X-x,r^*)=\calF_3(X,r^*)$ and $I_3(X-x,r^*)=I_3(X,r^*)$. \qedhere
	
\end{proof}

Now we are ready to prove the guarantees for Algorithms \ref{alg:semiplanar} and \ref{alg:laminar}.

\begin{proof}[Proofs of Theorems \ref{thm:semiplanar} and \ref{thm:laminar}]
	
	\mbox{}
	
	\noindent	For both algorithms, if $s=0$ then $|\ALG|=1$ and so it is independent. In the following assume that $s\geq 1$. Line \ref{alg:line-6} in the algorithms guarantees that the set $\ALG$ satisfies the conditions of Lemma \ref{lem:submatroids}, hence $\ALG$ is independent in $\calM[N^1(J)]$, and by Lemma \ref{lem:subsubmatroid}, $\ALG$ is independent in the original matroid. This proves correctness. Since the sampling condition holds by construction, we only need to check the forbidden property.  
	For the rest of the proof, define $\calF(X,r^*)=\calF_i(X,r^*)$, $I(X,r^*)=I_i(X,r^*)$ where $i=4$ on the semiplanar case, and $i=3$ for the laminar case.  We will show that the sets $\calF(X,Y,r^*):=\calF(X,r^*)\cap [n]$ are forbidden sets of size at most $4$ for Algorithm \ref{alg:semiplanar} and of size $3$ for Algorithm  \ref{alg:laminar}.
	
	Let $r^*\in \OPT(Y)$ where $Y$ is an arbitrary set of $t\geq s+1$ elements, and suppose that $R_t=Y$ and $r_t=r^*$. Assume now that the condition
	\begin{align*}
		\tag{$\star$}  \text{for every $i\in \{s+1,\dots, t-1\}$}, r_i\not\in \calF(\OPT(R_i), r_t),
	\end{align*}
	holds. We have to show that $r_t=r^*$ is chosen by the algorithm.\\
	
	\noindent	\emph{Claim. The intervals $I(\OPT(R_i),r^*)$ are non-decreasing in $i$, namely, for all $i\leq t-2$, \[I(\OPT(R_i),r^*)\subseteq I(\OPT(R_{i+1}),r^*).\] }
	Indeed, let $i\leq t-2$. If $\OPT(R_i)=\OPT(R_{i+1})$ then the claim is trivial, so assume otherwise. In particular, we have $r_{i+1}\in \OPT(R_{i+1})\setminus \OPT(R_i)$. To simplify notation, let $A=\OPT(R_{i+1})$ and $A'=A-r_{i+1}$. By condition $(\star)$, $r_{i+1} \not\in \calF(A,r^*)$, and then by Lemma \ref{lem:tecnico} using $X=A$,
	\begin{equation}\label{eq:proofclaim} \calF(A,r^*)=\calF(A',r^*) \text{ and } I(A,r^*)=I(A',r^*).
	\end{equation}
	
	By the matroid exchange axiom we have two cases: either $\OPT(R_i)=A'$ or $\OPT(R_i)=A'+\tilde{r}$ for some $\tilde{r}\neq r_{i+1}$. In the first case we have by \eqref{eq:proofclaim} that  $I(\OPT(R_i),r^*)=I(A',r^*)=I(A,r^*)=I(\OPT(R_{i+1}),r^*)$, which ends the proof, and so we focus on the the second case. If $\tilde{r} \in \calF(A'+\tilde{r},r)$, then by \eqref{eq:proofclaim} and Lemma \ref{lem:tecnico} applied to $X=A'+\tilde{r}$, we have $I(\OPT(R_i),r^*)=I(A'+\tilde{r}, r^*)\subseteq I(A',r^*)=I(A, r^*)=I(\OPT(R_{i+1}),r^*)$. On the other hand, if $\tilde{r}\not\in \calF(A'+\tilde{r},r)$, then, again by Lemma \ref{lem:tecnico} and and \eqref{eq:proofclaim} we have $I(\OPT(R_i),r^*)=I(A'+\tilde{r},r^*)= I(A',r^*)=I(A,r^*)=I(\OPT(R_{i+1}),r^*)$. This concludes the proof of the claim.
	
	We show how to finish the proof of the lemma using the claim. Suppose that $r_i$ is selected by the algorithm for some $i\in \{s+1,\dots, t-1\}$. By $(\star)$ and the claim we deduce that $r_i \not\in I(\OPT(R_i),r^*)\supseteq I(\OPT(R_s),r^*)$. In particular $r_i$ is \emph{far away} from $r^*$:
	\begin{compactenum}
		\item[(a)] In the semiplanar case, there are at least 2 terminals of $\OPT(R_s)$ between $r^*$ and $r_i$. In particular,  $\{\pre_{\OPT(R_s)+r_i}(r_i),$ $\nex_{\OPT(R_s)+r_i}(r_i)\}$ and $\{\pre_{\OPT(R_s)+r^*}(r^*),$ $\nex_{\OPT(R_s)+r^*}(r^*)\}$ do not intersect and so, at iteration $i$, neither $\pre_{\OPT(R_s)+r^*}(r^*)$ nor $\nex_{\OPT(R_s)+r^*}(r^*)$ are added into $B$.	
		\item[(b)] In the laminar case, there is at least one terminal of $\OPT(R_s)$ between $\pi_{\OPT(R_s)}(r^*)$ and $r_i$. In particular, $\pi_{\OPT(R_s)}(r_i)\neq \pi_{\OPT(R_s)}(r^*)$, and so, at iteration $i$, $\pi_{\OPT(R_s)}(r^*)$ is not added into $B$.
	\end{compactenum}
	
	Since the statements above are satisfied for every $i\leq t-1$, we conclude that at time $t$, $r_t=r^*$ satisfies the conditions in line \ref{alg:line-6} of the algorithms, and so it is selected. This concludes the proof that Algorithms \ref{alg:semiplanar} and \ref{alg:laminar} have forbidden sets of sizes 4 and 3 respectively.
\end{proof}

\subsection{Uniform matroids}

We devise a variant of Kleinberg's algorithm \cite{Kleinberg2005} for uniform matroids whose probability competitiveness tends to 1 as the rank $\rho$ goes to infinity. Kleinberg's algorithm is better described when both the rank $\rho=2^k$ and the number of elements $n=2^N$ are powers of 2. 
It was originally presented in a recursive manner, but it is illustrative to describe it without using recursion:\\

\noindent {\it Kleinberg's algorithm. }For every $i\in \mathbb{N}$, let $I_i$ be the interval of the first $n/2^i$ elements. 
Then the intervals $J_k:=I_{k}, J_{k-1}:=I_{k-1}\setminus I_k, \dots, J_1:=I_1\setminus I_2$ and $J_0=I_0\setminus I_1$ partition the ground set $R$. 
The algorithm treats each interval in $\{J_0,J_1,\ldots,J_k\}$ separately. 
For $0\leq i\leq k-1$, it selects at most $b_i=\rho/2^{i+1}$ elements from $J_{i}$ and at most one element from $J_k$, so that in total at most $\rho$ elements are selected. 
The way the algorithm selects an element or not is determined by a threshold for each interval. 
Namely, it selects the first arriving element from $J_k$ and for $i\in \{0,1,\ldots,k-1\}$, the algorithm uses as threshold in the interval $J_i$ the $(\rho/2^i)$-th highest element seen strictly before the interval, i.e., the $(\rho/2^i)$-th element of $I_{i+1}$. 
It selects every element better than this threshold until the budget $b_i$ of the interval is depleted, ignoring all the elements arriving later in this interval.\\

\noindent {\it The probability-ratio of Kleinberg's algorithm is at least 4/3.} Kleinberg \cite{Kleinberg2005} shows that the previous algorithm is guaranteed to obtain an expected fraction $1-O(\sqrt{1/\rho})$ of the optimum weight, which in our notation means to be $1/(1-O(\sqrt{1/\rho}))=1+O(\sqrt{1/\rho})$ utility-competitive. 
This guarantee also holds for the ordinal notion since the algorithm does not need weights. 
However, as we show next, its probability competitiveness is bounded away from 1. 
Since $J_0$ and $I_1$ have the same cardinality, with probability tending to 1/4 as $\rho$ goes to infinity, we simultaneously have that $|I_1\cap R^{\rho-1}|<|J_0\cap R^{\rho-1}|$ and $r^\rho\in J_0$. 
Given that, the threshold for $J_0$, which is the $\rho/2$-th element of $I_1$, will be attained by an element of $R^{\rho-1}$ which is strictly better than $r^\rho$. 
Thus, $r^\rho$ will not be selected. Therefore, the algorithm selects $r^\rho$ (which is in $\OPT$) with probability at most $1-1/4=3/4$. \\

\noindent {\it An asymptotically 1 probability-competitive algorithm.} The next algorithm is a simple modification of Kleinberg's, which 
we write in a continuous setting. 
Every element is associated with a uniform random variable with support $[0,1)$.
It is useful for this part to imagine $[0,1)$ as a time interval, and identify the realization of the uniform random variable as the {\it arrival time} of the element.
For each $j\in \mathbb{N}$, let $I_j$ be the interval $[0,2^{-j})$ and $J_j=[2^{-j-1},2^{-j})$ to be its second half. The sequence $\{J_j\}_{j\geq 0}$ partitions the interval $[0,1)$. 
For convenience, let $K_j=[2^{-j-1},2^{-j-1}(2-4\varepsilon_j))$ be the left $(1-4\varepsilon_j)$ fraction of the interval $J_j$, for some  parameter $\varepsilon_j$, depending on $\rho$, to be chosen later.\\ 

\begin{algorithm}[H]
\caption{\small for uniform matroids. \label{alg:uniform}}
\begin{algorithmic}[1]
	\Require{A uniform matroid $U(n,\rho)$ with ground set $R$ and rank $\rho$.} 
	\State Get a sample of size $n$, independently and uniformly at random from $[0,1)$. Sort it from smallest to largest as $t_1<t_2<\dots <t_n$. Thus $t_i$ is interpreted as  the arrival time of $r_i$.
	\State $\ALG\gets \emptyset$
	\For {$i=1$ to $n$}
	\State Compute the index $j$ such that $t_i\in J_j=[2^{-j-1},2^{-j})$.
	\State Compute the threshold $f_j$ equal to the $\lceil (\frac12)^{j+1}(1+\varepsilon_j)\rho\rceil$-th highest element in $R_i$ with arrival time in $I_{j+1}=[0,2^{-j-1})$.
	\If{less than $(\frac12)^{j+1}\rho$ elements with arrival time in $J_j$ have been selected and $r_i\succ f_j$}
	\State $\ALG\gets \ALG+r_i$
	\EndIf
	\EndFor
	\State Return $\ALG$.
\end{algorithmic}
\end{algorithm}
\vspace{-10pt}
\begin{figure}[H]
	\centering
	\begin{tikzpicture}[scale=7]
%
%

\draw[thick] (0,0) -- (1,0);
\foreach \x/\xtext in {0/0,0.12/$\cdots$,0.25/0.25,0.5/0.5,1/1}
    \draw[thick] (\x,0.5pt) -- (\x,-0.5pt) node[below] {\xtext};
\draw[[-, ultra thick, blue] (0,0) -- (1,0);
\draw (-0.05,0) node {$I_0$};

\draw[thick] (0,2.5pt) -- (1,2.5pt);
\draw[[-, ultra thick, blue] (0,2.5pt) -- (0.5,2.5pt);
\draw[[-, ultra thick, red, dashed] (0.5,2.5pt) -- (1,2.5pt);
\draw (-0.05,2.5pt) node {$I_1$};
\draw (1.05,2.5pt) node {$J_0$};

\draw[thick] (0,5pt) -- (0.5,5pt);
\draw[[-, ultra thick, blue] (0,5pt) -- (0.25,5pt);
\draw[[-, ultra thick, red, dashed] (0.25,5pt) -- (0.5,5pt);
\draw (-0.05,5pt) node {$I_2$};
\draw (0.55,5pt) node {$J_1$};

\draw[thick] (0,7.5pt) -- (0.25,7.5pt);
\draw[[-, ultra thick, blue] (0,7.5pt) -- (0.125,7.5pt);
\draw[[-, ultra thick, red, dashed] (0.125,7.5pt) -- (0.25,7.5pt);
\draw (-0.05,7.5pt) node {$I_3$};
\draw (0.29,7.5pt) node {$J_2$};

\draw (0.125,10.5pt) node {$\vdots$};

\end{tikzpicture}
	\caption{\small Definition of the sets $I_j$ and $J_j$}
\end{figure}
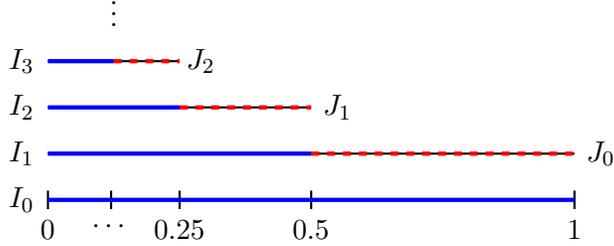

\begin{thm}\label{thm:uniform-prob}
Algorithm \ref{alg:uniform} is $1+O(\sqrt{\log \rho/ \rho})$ probability-competitive.
\end{thm}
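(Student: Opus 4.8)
I will analyze Algorithm~\ref{alg:uniform} interval by interval, showing that for a fixed element $r^k\in\OPT$ (i.e., $k\le\rho$) the probability that it is \emph{not} selected is $O(\sqrt{\log\rho/\rho})$, provided the parameters $\varepsilon_j$ are chosen appropriately (essentially $\varepsilon_j=\Theta(\sqrt{2^j\log\rho/\rho})$, truncated so that it stays meaningful only for the $O(\log\rho)$ intervals $J_0,\dots,J_{\log_2\rho}$ in which the budget $(1/2)^{j+1}\rho$ is at least~$1$). First I would condition on $r^k$ arriving in a particular interval $J_j$; since the arrival time is uniform in $[0,1)$, $r^k$ lands in $J_j$ with probability $2^{-j-1}$, and conditioned on this its arrival time is uniform in $J_j$. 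On this event there are two ways $r^k$ can fail to be selected: (i) the \emph{budget} of $J_j$ is already depleted when $r^k$ arrives, or (ii) $r^k$ does not beat the threshold $f_j$, i.e.\ $r^k\preceq f_j$, which (since $f_j$ is the $\lceil(1/2)^{j+1}(1+\varepsilon_j)\rho\rceil$-th highest element seen so far in $I_{j+1}$) means that at least $(1/2)^{j+1}(1+\varepsilon_j)\rho$ elements at least as good as $r^k$ arrived in $I_{j+1}$. I would bound the probabilities of (i) and (ii) separately.

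\medskip
\noindent\textbf{Bounding the bad events.} For event (ii): the number of elements of $R^k$ (the top $k$ elements, which contains $r^k$ and the elements beating it) that arrive in $I_{j+1}=[0,2^{-j-1})$ is a sum of $k\le\rho$ independent Bernoulli$(2^{-j-1})$ variables, with mean $\mu_k = k\cdot 2^{-j-1}\le (1/2)^{j+1}\rho$. So event (ii) requires this count to exceed $(1+\varepsilon_j)$ times an upper bound on its mean; a Chernoff bound gives probability $\exp(-\Omega(\varepsilon_j^2 (1/2)^{j+1}\rho))$, and with $\varepsilon_j^2\cdot(1/2)^{j+1}\rho = \Theta(\log\rho)$ this is $\rho^{-\Omega(1)}$, which after summing over the $O(\log\rho)$ relevant values of $j$ is still $o(\sqrt{\log\rho/\rho})$. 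For event (i): the budget of $J_j$ is $(1/2)^{j+1}\rho$, and it can only be depleted by elements that both arrive in $J_j$ and beat the threshold $f_j$; all such elements lie in $R^{m}$ for $m$ roughly $(1/2)^{j+1}(1+\varepsilon_j)\rho + (\text{count in }I_{j+1})$, hence (on the complement of (ii)-type overflow events) the number of potential selectors in $J_j$ is itself a binomial with mean $\le (1/2)^{j+1}(1+\varepsilon_j)\rho$, not much larger than the budget; overflowing the budget by the additive slack $\varepsilon_j (1/2)^{j+1}\rho$ again costs $\exp(-\Omega(\varepsilon_j^2(1/2)^{j+1}\rho))=\rho^{-\Omega(1)}$ by Chernoff. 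Since $r^k$ itself beats $f_j$ outside event (ii), once neither bad event occurs $r^k$ is selected. The $O(\sqrt{\log\rho/\rho})$ term, rather than $\rho^{-\Omega(1)}$, comes from the boundary interval: for the last interval $J_{j^*}$ with $j^*\approx\log_2\rho$ the budget is $\Theta(1)$, so $\varepsilon_{j^*}=\Theta(1)$ and the Chernoff argument degrades; there one instead directly estimates that $r^\rho$ (the worst case) fails with probability $O(\sqrt{\log\rho/\rho})$, analogously to how Kleinberg handles the base case, but with the $\sqrt{\log\rho}$ blow-up coming from our enlarged thresholds.

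\medskip
\noindent\textbf{Putting it together.} Summing $\Pr(r^k\text{ not selected}\mid r^k\in J_j)$ weighted by $2^{-j-1}$ over all $j$, the geometrically small tail from the many ``good'' intervals contributes $O(\rho^{-c})$ and the single boundary interval contributes $O(\sqrt{\log\rho/\rho})$, giving $\Pr(r^k\notin\ALG)=O(\sqrt{\log\rho/\rho})$ uniformly over $k\le\rho$; hence $\Pr(r^k\in\ALG)\ge 1-O(\sqrt{\log\rho/\rho})$, so the algorithm is $1/(1-O(\sqrt{\log\rho/\rho}))=1+O(\sqrt{\log\rho/\rho})$ probability-competitive. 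A few routine points must be checked: that Algorithm~\ref{alg:uniform} is correct (it never selects more than $\rho$ elements, because the per-interval budgets sum to at most $\rho$ over $j\ge 0$), that rounding in the threshold index and in the discretization from the recursive $\rho=2^k$, $n=2^N$ setting to general $n,\rho$ only affects constants, and that the choice $\varepsilon_j=\min\{\Theta(\sqrt{2^j\log\rho/\rho}),\,1\}$ simultaneously makes all the Chernoff exponents $\Omega(\log\rho)$ while keeping the cumulative threshold inflation $\sum_j \varepsilon_j (1/2)^{j+1}\rho = O(\sqrt{\rho\log\rho})$ so that budgets remain feasible.

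\medskip
\noindent\textbf{Main obstacle.} The delicate part is event (i): unlike in the utility analysis, where losing a few elements of $J_j$ is cheap, here we need that \emph{the specific element} $r^k$ is not crowded out, so I must argue that conditioned on $r^k\in J_j$ and on $r^k\succ f_j$, the number of \emph{other} elements in $J_j$ beating $f_j$ is, with high probability, strictly below the budget --- and this requires carefully coupling the threshold $f_j$ (a random order statistic of the prefix $I_{j+1}$) with the count of strong elements falling in $J_j$, since these are positively correlated through the shared pool $R^m$. Handling this correlation cleanly --- most likely by first conditioning on the exact set of top elements that landed in $I_{j+1}$, which fixes $f_j$, and then observing that the remaining strong elements fall in $J_j$ independently --- is where the real work lies; the rest is bookkeeping and Chernoff bounds.
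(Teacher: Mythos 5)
Your decomposition into two bad events and your Chernoff treatment of event (ii) match the paper's events $U_1,U_2$, but your treatment of event (i) --- budget depletion --- has a genuine gap, and it sits exactly at the point you flag as the main obstacle. Conditioning on the set of elements landing in $I_{j+1}$ to fix $f_j$ does handle the correlation, but it does not fix the arithmetic: because the threshold index is inflated to $\lceil 2^{-(j+1)}(1+\varepsilon_j)\rho\rceil$, the element $f_j$ typically has global rank about $(1+\Theta(\varepsilon_j))\rho$, so the number of elements beating $f_j$ that land in $J_j$ is (given the prefix) a binomial with mean about $(1+\Theta(\varepsilon_j))\,\rho\, 2^{-(j+1)}$, i.e.\ strictly \emph{above} the budget $\rho\,2^{-(j+1)}$. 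No Chernoff bound can show that a sum with mean $(1+\Theta(\varepsilon_j))B$ stays below $B$; in fact, with $\varepsilon_j^2 2^{-j}\rho=\Theta(\log\rho)$ the budget of $J_j$ is exhausted at some point during $J_j$ with probability $1-\rho^{-\Omega(1)}$, not $\rho^{-\Omega(1)}$ as you claim. What saves the algorithm is not that depletion is unlikely but that it typically happens near the right end of $J_j$: since $r^k$ is uniform in $J_j$, the probability that depletion occurs \emph{before} $r^k$ arrives is $\Theta(\varepsilon_j)$. The paper captures this by introducing the left sub-interval $K_j$ of relative length $1-4\varepsilon_j$, requiring $t^k\in K_j$ (event $U_4$, cost $4\varepsilon_j$), and applying Chernoff only to the count of strong elements landing in $K_j$, whose mean is below the budget by a factor $1+\varepsilon_j$ (event $U_3$, using $\varepsilon_j<1/8$). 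This positional device --- or an equivalent integration over $r^k$'s location within $J_j$ --- is the missing idea in your argument; without it the implication ``neither bad event occurs $\Rightarrow r^k$ selected'' cannot be established with the claimed probabilities.

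Relatedly, your accounting of where the dominant error comes from is off, which is symptomatic of the same gap. In the paper the $\sqrt{\log\rho/\rho}$ term is the sum $\sum_j 4\varepsilon_j 2^{-(j+1)}=\Theta(\sqrt{\log\rho/\rho})$ of these positional losses over \emph{all} intervals; the truncation is at $j^*=\lfloor\tfrac12\log(\rho/96)\rfloor$, where budgets are still of order $\sqrt{\rho}$ (not $\Theta(1)$), the ignored tail contributes only $2^{-j^*-1}=O(\sqrt{1/\rho})$, and the Chernoff failures contribute $O(1/\rho)$ per interval. Attributing the main term to a single boundary interval with budget $\Theta(1)$, while treating all earlier intervals as incurring only $\rho^{-\Omega(1)}$ loss, is inconsistent with the actual behavior of the crowding-out event.
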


By going from utility to probability we incur in a factor of $\sqrt{\log \rho}$ on the error probability. It remains open whether we can achieve Kleinberg's optimal utility-competitiveness of $1+O(1/\sqrt{\rho})$ for the stronger probability notion. 

\begin{proof}[Proof of Theorem \ref{thm:uniform-prob}]
	Since the algorithm selects at most $(1/2)^{j+1}\rho$ elements from interval $J_j$, in total at most $\sum_{j=0}^{\infty}(1/2)^{j+1}\rho=\rho$ elements are included into $\ALG$. 
	Therefore, the algorithm is correct.
	Since the matroid is uniform of rank $\rho$, the optimal base is $\OPT=\{r^1,r^2,\ldots,r^{\rho}\}$.  
	Let $r^*=r^k\in \OPT$, for some $k\in \{1,\ldots, \rho\}$. 
	In what follows, let $j$ be an integer such that $0\leq j\leq \lfloor\frac12\log (\rho/96)\rfloor:=j^*$ and let $\varepsilon_j=\sqrt{12\cdot 2^j\ln \rho/\rho}$. 
	We first find a lower bound on the probability that $r^*$ is selected, given that its arrival time is in $J_j$.

	For each $r^i\in R-r^*$, let $X^{(j)}_i$ and $Y^{(j)}_i$ be the indicator variables that the arrival time of $r^i$ is in $I_{j+1}$ and $K_j$ respectively. 
	Let $\sigma(f_j)$ be the global ranking of the computed threshold $f_j$, that is, $r^{\sigma(f_j)}=f_j$. 
	Consider the following three events,
	\begin{align*}
	\sum_{i\in [\rho+1]\setminus\{k\}}\hspace{-10pt} X^{(j)}_i <  \frac{(1+\varepsilon_j)}{2^{j+1}}\rho,\quad 	\sum_{i\in \bigl[\bigl\lceil \frac{1+\varepsilon_j}{1-\varepsilon_j}\rho \bigr\rceil +1\bigr]\setminus\{k\}}\hspace{-10pt} X^{(j)}_i \geq  \frac{1+\varepsilon_j}{2^{j+1}}\rho,\quad 
	\sum_{i\in \bigl[\bigl\lceil \frac{1+\varepsilon_j}{1-\varepsilon_j}\rho \bigr\rceil +1\bigr]\setminus\{k\}}\hspace{-10pt} Y^{(j)}_i < \frac1{2^{j+1}} \rho,
	\end{align*}	
 	that we call $U_1,U_2$ and $U_3$ respectively. 
	Consider a fourth event, $U_4$, that is $t^k\in K_j$ where $t^k$ is the arrival time of $r^k$.
	We claim that provided $t^k\in J_j$, the intersection of these four events guarantees that $r^*$ is selected by the algorithm.
	Conditional on the event that $t^k\in J_j$,  event $U_1\cap U_2$ implies that $\rho+1 < \sigma(f_j) \leq \lceil (1+\varepsilon_j)\rho/(1-\varepsilon_j)\rceil +1 $ and $r^*\succ f_j$.
	Event $U_2\cap U_3$ implies that the number of elements strictly higher than the threshold $f_j$ and arriving on interval $K_j$ is strictly less than $(\frac12)^{j+1}\rho$.
	In particular, this implies that the algorithm has not selected enough elements in the interval $J_j$. 
	Therefore, conditional on the event that $t^k\in J_j$, the event $U_1\cap U_2\cap U_3\cap U_4$ implies that $r^*\in \ALG$. 
	Calling $\overline{U_i}$ to the negation of event $U_i$, by union bound and noting that events from 1 to 3 are independent of the arrival time of $r^*=r^k$, we have
	$\Pr(r^k\not\in \ALG\,|\, t^k\in J_j) \leq \sum_{i=1}^4 \Pr(\overline{U_i}\,|\, t^k\in J_j) = \Pr(\overline{U_1})+\Pr(\overline{U_2})+\Pr(\overline{U_3}) + 4\varepsilon_j. $

	The random variables $\{X_i^{(j)}\}_{i\in [n]\setminus \{k\}}$ are independently and identically Bernoulli distributed of parameter equal to the length of $I_{j+1}$, that is $2^{-j-1}$.  
	Similarly, the random variables $\{Y_i^{(j)}\}_{i\in [n]\setminus \{k\}}$ are independently and identically Bernoulli distributed of parameter equal to the length of $K_{j}$ that is $(1-4\varepsilon_j)2^{-j-1}$.
	In the following we upper bound the probabilities in the sum above by using the Chernoff bound \cite[p. 64-66, Theorems 4.4 and 4.5]{MU2005}.	
	\begin{align*}
	\Pr(\overline{U_1})
	& \leq \exp\Bigl(-\frac{\varepsilon^2_j}{3}2^{-j-1}\rho\Bigr)= \exp\Bigl(-\frac{12\cdot 2^j\ln \rho}{3\rho}2^{-j-1}\rho\Bigr)=\exp\Bigl(-2\ln  \rho\Bigr) \leq \frac{1}{\rho}.
	\end{align*}

	Let $\mu_X$ and $\mu_Y$ be the expected sums of the random variables $X^{(j)}_i$, and respectively $Y_{i}^{(j)}$,  for $i \in \bigl[\bigl\lceil \frac{1+\varepsilon_j}{1-\varepsilon_j}\rho \bigr\rceil +1\bigr]\setminus\{k\}$. For $\mu_X$ we have  
	$\mu_X=\left\lceil\frac{1+\varepsilon_j}{1-\varepsilon_j}\rho\right\rceil\left(\frac12\right)^{j+1}\geq \frac{1+\varepsilon_j}{1-\varepsilon_j}\cdot \frac{\rho}{2^{j+1}}.$
	The choice of $j$ guarantees that $1/\rho< \epsilon_j<1/8$, and therefore
	\begin{align*}
	\mu_Y&=\left\lceil \frac{1+\varepsilon_j}{1-\varepsilon_j}\rho\right\rceil \frac{1-4\varepsilon_j}{2^{j+1}} 
	\leq \left( \frac{1+\varepsilon_j}{1-\varepsilon_j}  + \varepsilon_j \right) \rho \cdot \frac{1-4\varepsilon_j}{2^{j+1}} 
	\leq \frac{(1-4\varepsilon_j)(1+2\varepsilon_j)}{1-\varepsilon_j}\frac{\rho}{2^{j+1}} \leq \frac1{1+\varepsilon_j} \cdot \frac{\rho}{2^{j+1}}.
	\end{align*}
	In the last inequality we used that $\varepsilon_j<1/8$. By Chernoff bound on events $\overline{U_2}$ and $\overline{U_3}$, we obtain
	\begin{align*}
	\Pr(\overline{U_2})
	&\leq \Pr(\sum_i
	X_i^{(j)} < \mu_X(1-\varepsilon_j))
	\leq \exp\Bigl(-\frac{\varepsilon_j^2}{2}\mu_X\Bigr) \leq \exp\Bigl(-\frac{12\cdot 2^j\ln \rho}{2\rho}\cdot \frac{1+\varepsilon_j}{1-\varepsilon_j}\cdot \frac{\rho}{2^{j+1}}\Bigr)\leq 
	\frac{1}{\rho},\\
\Pr(\overline{U_3})
	&\leq \Pr(\sum_i
	Y_i^{(j)} \geq \mu_Y(1+\varepsilon_j)) \leq \exp\Bigl(-\frac{\varepsilon_j^2}{3}\mu_Y\Bigr)  \leq \exp\Bigl(-\frac{12\cdot 2^j\ln \rho}{3\rho}\cdot \frac{1+\varepsilon_j}{1-\varepsilon_j} \cdot \frac{(1-4\varepsilon_j)\rho}{2^{j+1}}\Bigl) \leq
	\frac{1}{\rho}.
	\end{align*}

Putting all together, it follows that 
\begin{align*}
\Pr(r^k\not\in \ALG) &= \sum_{j>j^*}\Pr(r^k\not\in \ALG | t^k\in J_j)\frac{1}{2^{j+1}} +\sum_{j=0}^{j^*}\Pr(r^k\not\in \ALG | t^k\in J_j)\frac{1}{2^{j+1}}\\
&\leq \sum_{j>j^*}\frac1{2^{j+1}} +\sum_{j\geq 0}\left(\frac3\rho  + 4\varepsilon_j\right) \frac1{2^{j+1}} = \frac1{2^{j^*+1}} + \biggl( \frac3\rho + 4\sqrt{\frac{12\ln \rho}{\rho}}\sum_{j\geq 0}2^{j/2} \frac1{2^{j+1}} \biggr),
\end{align*}
which is $O\left(\sqrt{\frac{1}{\rho}}+\frac{1}{\rho}
+\sqrt{\frac{\log{\rho}}{\rho}}\right)=O\left(\sqrt{\frac{\log{\rho}}{\rho}}\right)$. Therefore, the algorithm is  $1/(1 - O(\sqrt{\log \rho/\rho}))= (1+O(\sqrt{\log \rho/\rho}))$ probability-competitive.
\end{proof}

\section{Algorithms for the ordinal MSP on general matroids} \label{sec:general}

\subsection{$O(1)$ intersection-competitiveness: Proof of Theorem \ref{thm:intersection}}

Our algorithm for the intersection notion works as follows. We first sample $s$ elements. After that, we select an element as long as it is part of the optimum of the subset of elements seen so far, and it preserves the independence of the current solution. Recall that we denote by $R_i=\{r_1,r_2,\ldots,r_i\}$ the first $i$ elements seen by the algorithm. 

\begin{algorithm}[H]
	\caption{Improving Greedy
		\label{alg:cardinal}}
	\begin{algorithmic}[1]
		\Require{Matroid $\calM(R,I)$ in random order $r_1,r_2,\dots, r_n$, and a fixed integer $s\in \{1,\ldots n\}$.}
		\State $\ALG\gets \emptyset$
		\For{$i=s+1$ to $n$}
		\If{$r_i \in \OPT(R_i)$ \text{ and } $\ALG+r_i\in \calI$}
		\State $\ALG\gets \ALG+r_i$.	
		\EndIf \EndFor
		\State Return $\ALG$.
	\end{algorithmic}
\end{algorithm}

The idea of only considering elements that belong to the current optimum is not new. 
Ma et al.~\cite{MTW2016} consider an algorithm that after sampling a fraction, selects an element as long as it belongs to the current offline optimum, and they prove this to be 9.6 utility-competitive (and the analysis holds for probability as well) for laminar matroids.
The same algorithm was suggested by Babaioff et al.~\cite{BIK2007}, and they showed how this algorithm fails to be even constant utility-competitive. 
In contrast, we show this algorithm to be $O(1)$ intersection-competitive.
\begin{lem}\label{lem:harmonic} 
	Let $B=\{r_i:r_i\in \OPT(R_i),\; i\in \{s+1,\ldots,n\}\}$.
	Then $\E[|B|]=(H_n - H_s)\rho$, where $H_j$ denotes the $j$-th harmonic number.
\end{lem}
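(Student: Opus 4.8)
The plan is to write $|B| = \sum_{i=s+1}^n \mathbf{1}[r_i \in \OPT(R_i)]$ and compute the expectation by linearity, so that $\E[|B|] = \sum_{i=s+1}^n \Pr(r_i \in \OPT(R_i))$. The key observation is that $\OPT(R_i)$ is a base of the matroid restricted to $R_i$, hence has exactly $\rho(R_i)$ elements; and conditioned on the \emph{set} $R_i$ (but not the order within it), the last-arriving element $r_i$ is uniformly distributed over $R_i$, independently of which set $R_i$ is. Therefore $\Pr(r_i \in \OPT(R_i) \mid R_i) = \rho(R_i)/i$, and taking expectations over $R_i$ gives $\Pr(r_i \in \OPT(R_i)) = \E[\rho(R_i)]/i$.

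Next I would argue that $\E[\rho(R_i)] = \rho$ whenever $i \ge s+1$; more precisely I will use that $\E[\rho(R_i)] = \rho$ for all $i$ large enough that $R_i$ spans the matroid with probability $1$. Actually the cleanest route avoids even this: since the statement sums from $i=s+1$ to $n$ and the identity to be proved is $(H_n-H_s)\rho$, I suspect the intended reading is that $R_i$ already has full rank $\rho$ for every $i \geq s+1$ (this holds, e.g., if $s \geq \rho$, or one can simply note $\OPT(R_i) \subseteq \OPT(R_{i+1})$ is false but $\rho(R_i)$ is nondecreasing). To be safe I would instead prove the exact combinatorial identity directly: for each fixed element $r \in R$, let $k_r$ be such that $r = r^{k_r}$ in the value order; a standard fact about the greedy/lexicographic optimum is that $r \in \OPT(R_i)$ if and only if $r \in R_i$ and $r \notin \operatorname{span}(R_i \cap R^{k_r-1})$ (cf.\ the proof of Lemma \ref{lem:opt-restricted}). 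Summing $\Pr(r \in \OPT(R_i))$ over all $r$ and all $i$ and regrouping, one recognizes $\sum_{i}\E[\rho(R_i)]/i$; since $\rho(R_i) = \rho$ almost surely for $i \ge s+1$ under the convention that the sample already has full rank (or, without that convention, using $\E[\rho(R_i)] = \rho$ once $i$ is in the summation range as guaranteed by the algorithm's setup), each term equals $\rho/i$, and $\sum_{i=s+1}^n \rho/i = \rho(H_n - H_s)$.

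The main obstacle is the claim $\E[\rho(R_i)] = \rho$, i.e.\ that the rank stabilizes; this requires either an explicit hypothesis that $s$ is large enough (so $R_s$, and hence every $R_i$ with $i \ge s+1$, spans $\calM$ almost surely), or a more careful accounting where $\rho(R_i)$ is replaced by its expectation and one shows the harmonic-sum identity still collapses. I would handle this by invoking whatever convention the paper has fixed for the sampling phase (the surrounding text about Algorithm \ref{alg:cardinal} should make clear that $s$ is chosen so this holds, or that $\rho$ here denotes $\rho(R)$ and full rank is reached); the remaining steps — linearity of expectation, the uniform-last-element argument, and summing the harmonic series — are then routine.
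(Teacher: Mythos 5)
Your core computation is exactly the paper's: write $|B|=\sum_{i=s+1}^n \mathbf{1}[r_i\in\OPT(R_i)]$, use linearity, condition on the set $R_i$ so that the last arrival $r_i$ is uniform over it, and use that $\OPT(R_i)$ has $\rho(R_i)$ elements, giving $\Pr(r_i\in\OPT(R_i)\mid R_i)=\rho(R_i)/i$. The obstacle you then wrestle with -- showing $\E[\rho(R_i)]=\rho$ for all $i\ge s+1$ -- is an artifact of an imprecision in the statement, not something you can or should patch with a full-rank convention: no such hypothesis exists in the paper ($s$ is an arbitrary integer in $\{1,\dots,n\}$), and the equality $\E[|B|]=(H_n-H_s)\rho$ is simply false in general (take the free matroid, where $|B|=n-s$ deterministically, or note $\rho(R_i)\le i<\rho$ whenever $i<\rho$). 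The paper's own proof only establishes the inequality $\E[|B|]\le (H_n-H_s)\rho$, by bounding $\rho(R_i)\le\rho$, and this upper bound is all that the proof of Theorem \ref{thm:intersection} uses, since there $\E[|B|]$ enters with a negative sign. So the fix is to delete your attempted justification of $\E[\rho(R_i)]=\rho$ (and the detour through the greedy/span characterization, which is not needed) and state and prove the lemma as an upper bound; with that change your first paragraph already is the paper's argument.
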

\begin{proof}
	For every $i\in [n]$ if we condition the set $R_i$ to be some fixed set $F\in \binom{R}{i}$, we have 
	$$\Pr(r_i\in \OPT(R_i) | R_i = F)= \Pr(r_i \in \OPT(F)) \leq \frac{\rho}{i}.$$
	Therefore, by linearity of expectation we conclude that
	\begin{align*}
	\E[|B|] &= \sum_{i=s+1}^n \Pr(r_i \in \OPT(R_i)) \leq \sum_{i=s+1}^n \frac{\rho}{i} = (H_n - H_s)\rho.\qedhere
	\end{align*}
\end{proof}

\begin{proof}[Proof of Theorem \ref{thm:intersection}]
	We study the competitiveness of the algorithm when the sample size is $s$, and then we optimize over this value to conclude the theorem. 
	For any random ordering, $|\ALG|=\rho(R\setminus R_s)\geq \rho(\OPT\setminus R_s)=|\OPT\setminus R_s|$. Then, $	\E[|\ALG\cap \OPT|]$ equals
	\begin{align*}
\E[|\ALG|]+\E[|\OPT\setminus R_s|]-\E[|\ALG\cup (\OPT\setminus R_s)|] \ge 2\E[|\OPT\setminus R_s|]-\E[|\ALG\cup (\OPT\setminus R_s)|].
	\end{align*}
	Furthermore, $\ALG\cup (\OPT\setminus R_s)\subseteq B$, where $B$ is the set defined in Lemma \ref{lem:harmonic}. 
	Since the elements arrive in uniform random order, for $r\in \OPT$ we have that $\Pr(r\notin R_s)=(n-s)/n=1-s/n$. 
	Therefore, the right hand side of the previous inequality is at least
	\begin{align*}
	2\E[|\OPT\setminus R_s|]-\E[|B|] 	
	&\geq \left(2-\frac{2s}{n}-\int_{s}^n\frac{1}{x}dx\right)\rho =\left(2-\frac{2s}{n} + \ln(s/n)\right)\rho.
	\end{align*}
	This quantity is maximized in $s=n/2$. 
	So, by assuming $n$ even (which can be done by adding an extra dummy element if $n$ is odd), and setting the algorithm for $s=n/2$, we obtain
	\begin{align*}
	\E[|\ALG\cap \OPT|]&\geq \rho(1 - \ln(2))= |\OPT| / \ln(e/2).\qedhere
	\end{align*} 
	%
\end{proof}

\subsection{Ordinal/Probability-competitiveness: Proof of Theorem \ref{thm:ord-prob}}
\label{subsec:reduction-layered-ordinal}
We introduce a variant of the MSP that helps us to leverage existing algorithms for the utility version of the MSP, in order to get competitive algorithms for the ordinal and probability variants. 
We need a concept similar to the {\it aided sample-based MSP} introduced by Feldman et al .\cite{FSZ2015}. 

In the {\it Layered-MSP} the input is a tuple  $(\mathcal{M}, F, C, \succ)$ where $\calM=(R,\calI,\succ)$ is a totally ordered matroid, $C=\{c_1,c_2,\ldots,c_k\}$ is a finite set with $C\cap R=\emptyset$, $\succ$ is a total order over $R\cup C$ with $c_1\succ c_2\succ \cdots \succ c_k$, and  $F$ is a random subset of $R$ in which every element is present with probability $1/2$. The set $C$ defines a partition of $R$ in the following way. 
We call $C_0=\{r\in R:r\succ c_1\}$ the {\it highest layer} and $C_k=\{r\in R:c_k\succ r\}$ the {\it lowest layer}. 
For $j\in \{1,\ldots,k-1\}$, the {\it $j$-th layer} is the set $C_j=\{r\in R:c_j\succ r\succ c_{j+1}\}$. 
By construction, the layers $\{C_0,C_1,\ldots,C_k\}$ induced by $C$ form a partition of $R$.
In the following we call a tuple $(\mathcal{M},F,C, \succ)$ as a {\it layered matroid}.

An algorithm for the Layered-MSP first sees $F$ but is unable to select any of its elements. The rest of the elements of $\mathcal{M}$ arrive in uniform random order. At time step $t$, the algorithm can get the full value order in $R_t$ by using only ordinal information, it can use an independence oracle to test any subset of $R_t$ and it can also check membership of any element to each layer induced by $C$. 

\begin{defi}\label{def:level-alg-competitive}
	We say that an algorithm for the Layered-MSP is $\alpha$-competitive if it returns an independent set $\ALG\in \calI$, and for each $ j \in \{0,1,\ldots,|C|\},\, \E|\ALG\cap C_j| \geq \frac{1}{\alpha} |\OPT\cap C_j|,$ where the expectation is taken both over the distribution of $F$ and the internal algorithm randomness.
\end{defi}

\begin{thm}[Feldman et al., {\cite[Corollary 4.1]{FSZ2015}}]\label{thm:feldman}
	There exists an $8\lceil \log (|C|+1)+1\rceil$-competitive algorithm for the Layered-MSP.
\end{thm}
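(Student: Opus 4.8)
The plan is to observe that the Layered-MSP is, up to renaming, precisely the \emph{aided sample-based MSP} of Feldman et al.~\cite{FSZ2015}, with their \emph{weight classes} replaced by the layers $C_0,\dots,C_k$ that the set $C$ induces on $R$. Indeed, the random set $F$ (each element of $R$ included independently with probability $1/2$) plays the role of their sampled set, the layer index of an element plays the role of its weight class, and the independence oracle together with the value order on $R_t$ supply every piece of combinatorial information their algorithm ever queries --- in particular it never needs actual weights, only class membership and independence tests. Hence we may run their algorithm verbatim, answering ``which weight class is $r$ in?'' with ``$r\in C_j$''.

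To see that this yields exactly the guarantee of Definition~\ref{def:level-alg-competitive}, I would recall the three ingredients of their algorithm. After seeing $F$, it (i) draws a uniformly random shift and uses it to group the $h:=|C|+1$ consecutive layers into $O(1+\log h)$ \emph{buckets}, each a run of consecutive layers of a random length; (ii) assigns to each bucket a single (random) matroid with the property that independent sets chosen inside the individual bucket matroids have union independent in $\calM$, a structural lemma proved purely from matroid axioms; and (iii) selects a greedy independent set inside each bucket matroid as elements arrive, outputting the union $\ALG$, which is then independent in $\calM$ by construction. The point of the random bucketing is that the expected number of elements $\ALG$ takes from a fixed layer $C_j$ is at least $\Omega(1/\log h)$ times $|\OPT\cap C_j|$; their Corollary~4.1 makes the constants explicit, giving
\[
\E|\ALG\cap C_j|\;\ge\;\frac{1}{8\lceil \log(|C|+1)+1\rceil}\,|\OPT\cap C_j|
\qquad\text{for all } j\in\{0,1,\dots,|C|\},
\]
where the constant $8$ collects the $1/2$-sampling loss, the random-shift loss, and the within-bucket greedy loss. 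This is precisely $\alpha$-competitiveness with $\alpha=8\lceil\log(|C|+1)+1\rceil$.

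The one point requiring care --- and the main obstacle to turning this into a proof rather than a pointer --- is verifying that the layers of the Layered-MSP may be substituted for Feldman et al.'s weight classes \emph{inside} their structural lemma and its analysis. Their argument speaks of $\OPT$ as a maximum-weight basis and of the matroids obtained by restricting and contracting along weight thresholds, whereas here the separators $c_1\succ\cdots\succ c_k$ live in $R\cup C$ and $\OPT$ is the lexicographically optimal basis of $\calM$. One must check that every occurrence of ``weight'' can be replaced by ``rank in the order $\succ$'' --- equivalently, by any weight function compatible with $\succ$ that places each $c_i$ strictly between the layers $C_{i-1}$ and $C_i$ --- so that the down-closure and basis-exchange steps survive unchanged; given Lemma~\ref{lem:opt-restricted} and the fact that each layer is an interval of $\succ$, this substitution is routine, and with it the theorem is exactly Corollary~4.1 of \cite{FSZ2015}.
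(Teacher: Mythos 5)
Your proposal matches the paper's treatment: the paper does not reprove this statement but simply cites Corollary~4.1 of Feldman et al.~\cite{FSZ2015}, reproduces their bucketing algorithm (random shift, bucket matroids obtained by contraction and restriction, greedy within each bucket) for completeness, and remarks that the guarantee carries over verbatim because the algorithm never uses numerical weights, only layer membership and independence tests. Your observation that the layers play the role of their weight classes and that the per-layer guarantee is exactly Definition~\ref{def:level-alg-competitive} is precisely the paper's justification, so your argument is correct and essentially identical in approach.
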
 

For the sake of completeness we include the mentioned algorithm for Layered-MSP of Feldman et al. We remark that the algorithm was introduced in the context of the utility version. Nevertheless, the result above follows in the absence of weights. 

\begin{algorithm}[H]
	\caption{(Feldman et al. \cite{FSZ2015}) for Layered-MSP}
	\label{alg:FSZ}
	\begin{algorithmic}[1]
		\Require{A layered matroid $(\mathcal{M}, F,C,\succ)$}
		\State Let $\tau$ be a uniformly at random number from $\{0,1,\ldots,\lceil \log(|C|+1)\rceil\}$.
		\State Let $\Delta$ be a uniformly at random number from $\{0,1,\ldots,2^{\tau}-1\}$.
		\State Let $\mathbf{B}=\{B_1,B_2,\ldots,B_{\lceil (\Delta+|C|)/2^{\tau}\rceil}\}$ where $ B_i=\bigcup_{j=\max\{0,2^{\tau}(i-1)-\Delta+1\}}^{\min\{|C|,2^{\tau} i-\Delta\}}C_j.$
		\State With probability $1/2$, set $H=\text{odd}(|C|)$ or $\text{even}(|C|)$ otherwise.
		\State For each $i\in H$ let $T_i\leftarrow \emptyset$.
		\For{each element in $r\in R\setminus F$}
		\State Let $i$ be such that $r\in B_i$.
		\If{ $i\in H$ and $r\in N_i$ and $T_i+r\in \mathcal{I}_i$}
		\State $T_i\leftarrow T_i+r$.
		\EndIf	
		\EndFor
		\State Return $\ALG=\cup_{i\in H}T_i$.
	\end{algorithmic}
\end{algorithm}

\noindent {\it About the algorithm of Feldman et al.} We denote by $\text{odd}(k)$ and $\text{even}(k)$ the odd and even numbers, respectively, in the set $\{0,1,\ldots,k\}$. The set $\calI_i$ is the independent sets family of a matroid $M_i=(N_i,\calI_i)$ defined as follows. 	
Let $B_{\ge i}=\cup_{j\in \{i,\ldots,\lceil (\Delta+|C|)/2^{\tau}\rceil\}}B_j$. 
The matroid $M_1$ is obtained from $M$ by contracting
\footnote{The {\it contraction} of  $\calM=(R,\calI)$ by $Q$, $\calM/Q$, has ground set $R-Q$ and a set $I$ is independent if $\rho(I\cup Q)-\rho(Q)=|I|$.} $F\cap B_{\ge 2}$ and then restricting\footnote{The {\it restriction} of $\calM=(R,\calI)$ to $Q$, $\calM|_Q$, has ground set $Q$ and a set $I$ is independent if $I\in \calI$ and $I\subseteq Q$.} to $B_1$.
For $i>1$, $M_i$ is obtained from $M$ by contracting $F\cap B_{\ge i+1}$ and then restricting it to $B_i\cap \text{span}(F\cap B_{\ge i-1})$.

\subsubsection{Reduction from Layered-MSP to ordinal and probability variants of the MSP}

The main result of this section corresponds to the lemma below. Theorem \ref{thm:ord-prob} follows directly using this lemma, and the rest of this section is devoted to prove the lemma.

\begin{lem}\label{lem:level-reduction}
	Suppose there exists a $g(|C|)$-competitive algorithm for the Layered-MSP, where $g$ is a non-decreasing function. Then, 
	\begin{enumerate}	
		\item[(i)]\label{thm:level-to-ordinal} there exists an $O(g(1+\log \rho))$ ordinal-competitive algorithm for the MSP, and 
		\item[(ii)]\label{thm:level-to-probability} there exists an $O(g(1+\rho))$ probability-competitive algorithm for the MSP. 
	\end{enumerate}	
\end{lem}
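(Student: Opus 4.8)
The goal is to reduce the general ordinal/probability MSP to the Layered-MSP, for which Feldman et al.'s algorithm (Theorem~\ref{thm:feldman}) gives an $8\lceil\log(|C|+1)+1\rceil$-competitive guarantee. The key design choice is: from which sample do we extract the ``separator'' set $C$, and how many separators do we keep. First I would take a sample $F\subseteq R$ where each element is included independently with probability $1/2$ (this is exactly the distribution the Layered-MSP expects for its sample set). Compute $\OPT(F)$, the greedy optimum of the sample. For the \textbf{probability} variant, let $C$ be \emph{all} of $\OPT(F)$ (so $|C|\le\rho$), ordered by $\succ$. For the \textbf{ordinal} variant, take $C$ to be a geometrically spaced subset of $\OPT(F)$ — the elements of rank $1,2,4,8,\dots$ within $\OPT(F)$ — so $|C|=O(1+\log\rho)$. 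In both cases, run the given $g(|C|)$-competitive algorithm for the Layered-MSP on the instance $(\calM,F,C,\succ)$ (the elements of $F$ cannot be selected, which matches the Layered-MSP model; the remaining elements arrive in random order). Output whatever that algorithm outputs. Since $g$ is non-decreasing, the competitive factor is $g(O(1+\log\rho))$ and $g(O(1+\rho))$ respectively, matching the claimed bounds.

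\textbf{The core estimate (probability variant).} We must show that for each fixed $r^*\in\OPT=\OPT(R)$, $\Pr(r^*\in\ALG)\ge 1/O(g(1+\rho))$. Condition on $r^*\notin F$, which happens with probability $1/2$. The Layered-MSP guarantee gives $\E|\ALG\cap C_\ell|\ge \frac1{g(|C|)}|\OPT\cap C_\ell|$ for the layer $C_\ell$ containing $r^*$. The difficulty is that this is a statement about \emph{expected intersection size within a layer}, not about $r^*$ individually. The crucial structural fact I would prove is: when $C=\OPT(F)$, every layer $C_\ell$ contains at most one element of $\OPT(R\setminus F)$ — equivalently at most one element that the algorithm could possibly pick, \emph{restricted to the layer}. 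This follows from a matroid-exchange / interleaving argument: the greedy optima of $F$ and of the whole ground set interleave in value order, so between two consecutive elements of $\OPT(F)$ there lies at most one element of $\OPT(R\setminus F)$ that is ``relevant'' (i.e. improving). Combined with $\OPT\cap Q\subseteq\OPT(Q)$ (Lemma~\ref{lem:opt-restricted}), this forces $|\OPT\cap C_\ell|\le 1$, and when $r^*\notin F$ one shows $r^*$ itself lies in $\OPT(R\setminus F)$ and is the unique candidate in its layer, so $\E|\ALG\cap C_\ell|\ge 1/g(|C|)$ becomes $\Pr(r^*\in\ALG\mid r^*\notin F)\ge 1/g(|C|)$. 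Removing the conditioning costs a factor $2$: $\Pr(r^*\in\ALG)\ge \frac1{2g(|C|)}\ge\frac1{2g(1+\rho)}$.

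\textbf{The ordinal variant.} Here we keep only $O(1+\log\rho)$ separators, so layers are coarser and may contain many optimum elements; we can no longer argue element-by-element, but by Lemma~\ref{lem:ordinal-equivalence} it suffices to bound $\E|\ALG\cap R^k|$ from below by $\frac1\alpha\E|\OPT\cap R^k|$ for every prefix $R^k$. The idea is that the geometric spacing of separators within $\OPT(F)$ means that each layer $C_\ell$ (except possibly the top one) satisfies a ``doubling'' property: $|\OPT(F)\cap C_\ell|$ is at most the number of elements of $\OPT(F)$ above $C_\ell$, hence by the interleaving fact $|\OPT\cap C_\ell|\le O(|\OPT\cap(\text{layers above }C_\ell)|)$, up to lower-order additive terms coming from sampling fluctuations (here I would use that $\OPT(F)$ has roughly half the ``mass'' of $\OPT$ on every prefix, a standard concentration/coupling estimate — this is the analogue of how Feldman et al.\ handle weight classes). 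Summing the Layered-MSP guarantee $\E|\ALG\cap C_\ell|\ge\frac1{g(|C|)}|\OPT(F)\cap C_\ell|$ over the layers intersecting $R^k$, and using the doubling relation to telescope, yields $\E|\ALG\cap R^k|\ge\frac1{O(g(|C|))}\E|\OPT\cap R^k|=\frac1{O(g(1+\log\rho))}\E|\OPT\cap R^k|$, with an extra $O(1)$ absorbed for the top layer $C_0$ and the $1/2$ sampling loss.

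\textbf{Main obstacle.} The delicate point — and where I would spend the most care — is controlling how $\OPT(F)$ relates to $\OPT(R)$ on \emph{every} prefix simultaneously: I need that a random half-sample's greedy optimum both (a) interleaves with the true optimum so layers are ``thin'' in the right sense, and (b) doesn't lose too much mass on any prefix, so that the layer boundaries I chose actually sandwich the optimum elements the way the doubling argument requires. This is essentially a matroid version of the order-statistics estimates underlying the classical secretary sampling bound, and making it uniform over all $k$ (rather than in expectation for a single $k$) is the technical heart of the proof; everything else is bookkeeping on top of Theorem~\ref{thm:feldman}.
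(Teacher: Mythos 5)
The ordinal half of your plan is essentially the paper's: the separators are a geometrically spaced subset of the optimum of a sample, and one converts the per-layer guarantee into prefix bounds via Lemma \ref{lem:ordinal-equivalence}. Two remarks, though. First, Lemma \ref{lem:ordinal-equivalence} only asks for a bound on $\E|\ALG\cap R^k|$ for each $k$ \emph{separately}, so the "uniform over all $k$" concentration you identify as the technical heart is not needed; the paper simply defines, for each dyadic scale $j$, two sampling events $A_j,B_j$ of constant joint probability under which the union of the top layers is contained in $R^k$ and carries $\Omega(|\OPT\cap R^k|)$ benchmark mass, and applies the layered guarantee conditioned on them. Second, the paper uses a \emph{two-stage} sample: the separators come from a first sample $R_s$, while the Layered-MSP's random set $F$ is a fresh half-sample of $R\setminus R_s$; in your design the single set $F$ plays both roles, and then conditioning on $F$-measurable events (such as $r^*\notin F$, or the layer structure around $r^*$) is not licensed by Definition \ref{def:level-alg-competitive}, which is only an average over $F$. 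The paper's conditionings are legitimate precisely because the conditioning events are determined by first-stage randomness, before $\mathcal{A}^{\text{layer}}$ is run.

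The probability half has a genuine gap: the "crucial structural fact" — that with $C=\OPT(F)$ every layer contains at most one element of $\OPT(R\setminus F)$ — is false, and no matroid-exchange/interleaving argument can rescue it. Counterexample: in a uniform matroid of rank $\rho\ge 5$, if $r^1,r^6\in F$ and $r^2,r^3,r^4,r^5\notin F$, then $r^1$ and $r^6$ are consecutive elements of $\OPT(F)$ and the layer between them contains $r^2,\dots,r^5$, all of which lie in $\OPT(R\setminus F)$ (indeed in $\OPT$). Greedy optima of complementary sets do not interleave. Without uniqueness, the per-layer expectation guarantee cannot be turned into a per-element bound: an algorithm can satisfy $\E|\ALG\cap C_\ell|\ge|\OPT\cap C_\ell|/g$ for every layer while never selecting one particular optimum element of a multi-element layer, so $\Pr(r^*\in\ALG)$ can be $0$ under your reduction. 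The paper's proof of part (ii) repairs exactly this: it feeds the layered algorithm the matroid restricted to $R_s^+=\{r\notin R_s:\ r\in\OPT(R_s+r)\}$ with $C=\OPT(R_s)$ and a fresh $F$, and uses the coin-flip coupling of Algorithm \ref{alg:coupling} to show that for each fixed $f(j)\in\OPT$ (which improves the sample optimum no matter the coins, by Lemma \ref{lem:opt-restricted}), with probability at least $1/8$ the two adjacent improving elements are both sampled while $f(j)$ is not, so that $f(j)$ is the \emph{unique} element of $R_s^+$, and of $\OPT(R_s^+)$, in its layer; since this event is decided before $\mathcal{A}^{\text{layer}}$ runs, the conditional application of the $g(1+\rho)$ guarantee is valid and yields $\Pr(f(j)\in\ALG)\ge 1/(8g(1+\rho))$. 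Some probabilistic isolation argument of this kind, together with the decoupling of the separator-defining sample from the layered algorithm's $F$, is what your proposal is missing.
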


\begin{proof}[Proof of Theorem  \ref{thm:ord-prob}]
	We observe that $g(x)=8\lceil \log (x+1)+1\rceil$ is non-decreasing, so we can apply Lemma \ref{lem:level-reduction} using Algorithm \ref{alg:FSZ} and Theorem \ref{thm:feldman}.
\end{proof}


In the following, let $\mathcal{A}^{\text{layer}}$ be a $g(|C|)$-competitive algorithm for the layered-MSP. Our 
algorithm for Lemma \ref{lem:level-reduction}  (i), depicted as Algorithm \ref{alg:reductioncardinal}, first gets a sample from $R$ with expected size $n/2$, and constructs a partition $C$ using the optimum of the sample. 
By sampling a set $F$ over the remaining elements it feeds $\mathcal{A}^{\text{layer}}$ with a layered matroid.  

\begin{algorithm}[H]
	\caption{$O(g(1+\log \rho))$ ordinal-competitive algorithm
		\label{alg:reductioncardinal}}
	\begin{algorithmic}[1]
		\Require{Matroid $\calM(R,\calI,\succ)$ in random order $r_1,r_2,\dots, r_n$.}
		\State Let $s\sim \Bin(n,1/2)$ and compute $\OPT(R_s)=\{s(1),\dots,s(\ell)\}$, where $s(1)\succ s(2)\succ \cdots \succ s(\ell)$.
		\State Let $C=\{s(1),s(2),s(4),\ldots, s(2^{k-1})\}$, where $k=\lfloor\log \ell \rfloor +1$.
		\State Let $t\sim \Bin(n-s,1/2)$ and let $F=\{r_{s+1},\ldots,r_{s+t}\}$ be the next $t$ elements from $R\setminus R_s$. 
		\State Return $\ALG=\mathcal{A}^{\text{layer}}(\calM|_{R\setminus R_s},F,C,\succ)$.
	\end{algorithmic}
\end{algorithm}

\begin{proof}[Proof of Lemma \ref{lem:level-reduction} (i)]	
	Let $\OPT=\{f(1),\dots,f(\rho)\}$ be such that $f(1)\succ f(2)\succ \cdots \succ f(\rho)$. 
	Consider the function $T:\mathbb{N}\to \mathbb{N}$ given by $T(0)=0$, $T(i)=|\{r\in R\colon r\succ f(i)\}|$ if $i\in [1,\rho]$ and $T(i)=n$ if $i>	 \rho$. 
	In particular, for $i\in [1,\rho]$, $T(i)$ is the number of elements in $\mathcal{M}$ that are at least as high as the $i$-th element of $\OPT$, so $f(i)=r^{T(i)}$.
	Observe that for all $k$ such that $T(i)\leq k < T(i+1)$ we have $|\OPT\cap R^k|=i$.  
	In the following we study the expected number of elements from $R^k$ that the algorithm selects, so we can conclude using Lemma \ref{lem:ordinal-equivalence}. 
	If $T(0)\leq k<T(1)$ it holds $\E[|\ALG\cap R^k|]=0=|\OPT\cap R^k|$, so this case is done.
	
	Let $k\in \mathbb{N}$ be such that $T(1)\leq k< T(8)$. 
	Let $f$ be the highest non-loop element in the value order with $f(1)\succ f$.
	With probability at least 1/4, it holds that $f(1)\in R\setminus R_s$ and $f\in R_s$. 
	In this case, $f\in \OPT(R_s)$, since $f(1)=s(1)$, and the only non-loop element of $C_0$ is $f(1)$. 
	Thus, $C_0=\{f(1)\}=\OPT(R\setminus R_s)\cap C_0$. 
	Therefore,
	\[\Pr(C_0=\{f(1)\})\ge \Pr(f(1)\in R\setminus R_s,f\in R_s)\ge 1/4.\]
	Since $g$ is non-decreasing and $|C|\le 1+ \log \ell\le 1+ \log \rho$, Algorithm $\mathcal{A}^{\text{layer}}$ is $g(1+\log \rho)$-competitive. Furthermore, the event $C_0=\{f(1)\}$ depends only on steps 1 and 2 of the algorithm before executing $\mathcal{A}^{\text{layer}}$. It follows that 
	\begin{align*}
	\Pr(f(1)\in \ALG) &\geq \frac{1}{4}\Pr(f(1)\in \ALG| C_0=\{f(1)\})=\frac{1}{4}\E[|\ALG\cap C_0|\;|\;C_0 =\{f(1)\}]\\
	&\geq \frac{1}{4g(1+\log \rho)}\E[|\OPT(R\setminus R_s)\cap C_0|\; |\; C_0=\{f(1)\}]= \frac1{4g(1+\log \rho)}.
	\end{align*}
	Since $|\OPT\cap R^{k}|\leq 8$, we conclude in this case that
	\begin{align*}
	\E[|\ALG\cap R^k|]&\geq \Pr(f(1)\in \ALG)\geq \frac{1}{4g(1+\log \rho)} \geq \frac{1}{32g(1+\log \rho)}|\OPT\cap R^{k}|.
	\end{align*}
	Let $j\geq 3$ and $k$ be such that $T(2^j)\leq k< T(2^{j+1})$. We denote $q=2^{j-3}$. 
	Let $A_j$ be the event where $|\{f(1),\ldots,f(2q)\}\cap R\setminus R_s|\geq q$ and $B_j$ is the event where $|\{f(2q+1),\ldots,f(6q)\}\cap R_s|\geq 2q$.
	We have that $\Pr(A_j\cap B_j)\geq 1/4$, since in our algorithm the probability for an element to be sampled equals  the probability of not being sampled.
	Observe that any subset of elements strictly better than $f(t)$ has rank at most $t-1$, and therefore $f(2q)\succeq s(2q)$. 	
	If $B_j$ holds, then $s(2^{j-2})=s(2q)\succeq f(6q)$. Since $f(6q)\succeq f(8q)=f(2^j)$, it follows that
	\[ \bigcup_{i=0}^{j-3}C_i\subseteq \{r\in R\setminus R_s:r\succeq f(2^j)\}= R^{T(2^j)} \cap R\setminus R_s.\]
	This implies that 
	\begin{align*}
	\E[|\ALG\cap R^k|]
	&\geq \frac14 \E\Bigl[|\ALG\cap R^{T(2^j)}|\, \Big\vert\, A_j\cap B_j \Bigr]\geq \frac14 \E\Bigl[\Bigl|\ALG \cap \bigcup_{i=0}^{j-3}C_i\Bigr| \,\, \Big\vert\, A_j\cap B_j \Bigr].
	\end{align*}
	Furthermore, if $A_j$ holds, then
	\begin{align*}
	\Bigl|\OPT(R\setminus R_s)\cap \bigcup_{i=0}^{j-3} C_i\Bigr|&=|\{f\in \OPT \cap R\setminus R_s \colon f\succeq s(2q)\}|\\
	&\geq |\{f\in \OPT \cap R \setminus R_s\colon f\succeq f(2q)\}|\geq q.
	\end{align*}		
	Since $g$ is non-decreasing and $|C|\le 1+\log \ell\le 1+\log \rho$, Algorithm $\mathcal{A}^{\text{layer}}$ is $g(1+\log \rho)$-competitive. Since events $A_j$ and $B_j$ depend only on the sampling at line 1 of the algorithm (before executing $\mathcal{A}^{\text{layer}}$) and by using linearity of the expectation and the observation above we have that
	\begin{align*}	
	\frac14 \E\Bigl[\Bigl|\ALG \cap \bigcup_{i=0}^{j-3}C_i\Bigr| \,\, \Big\vert\, A_j\cap B_j \Bigr]&\geq \frac1{4g(1+\log \rho)}\E\Bigl[\Bigl|\OPT(R\setminus R_s)  \cap \bigcup_{i=0}^{j-3}C_i\Bigr| \,\,\Big\vert\, A_j\cap B_j \Bigr]\\
	&\geq \frac1{4g(1+\log \rho)}q \geq \frac1{64g(1+\log \rho)}|\OPT\cap R^k|,
	\end{align*}
	where the last inequality holds since $|\OPT\cap R^k|\le 2^{j+1}=16q$. By using Lemma \ref{lem:ordinal-equivalence} we conclude that the algorithm is $O(g(1+\log \rho))$ ordinal-competitive.
\end{proof}

\begin{algorithm}[H]
	\caption{$O(g(1+\rho))$ probability-competitive algorithm
		\label{alg:reductionprobability}}
	\begin{algorithmic}[1]
		\Require{Matroid $\calM(R,\calI,\succ)$ in random order $r_1,r_2,\dots, r_n$.}
		\State Let $s\sim \Bin(n,1/2)$ and compute $\OPT(R_s)=\{s(1),\dots,s(\ell)\}$, where $s(1)\succ s(2)\succ \cdots \succ s(\ell)$. 
		\State Let $t\sim \Bin(n-s,1/2)$ and let $F= \{r_{s+1},\ldots,r_{s+t}\}$ be the next $t$ elements from $R\setminus R_s$. 
		\State Return $\ALG=\mathcal{A}^{\text{layer}}(\calM {|_{R_s^+}},F\cap R_s^+,\OPT(R_s),\succ)$, where $R_s^+=\{r\in R\setminus R_s:r\in \OPT(R_s+r)\}$.
	\end{algorithmic}
\end{algorithm}

To prove Lemma \ref{lem:level-reduction} (ii), consider Algorithm \ref{alg:reductionprobability} depicted above.
Before the analysis, it will be useful to consider the next process.
Let $(X_t)_{t\in \mathbb{N}}$ be a sequence of Bernoulli independent random variables such that $\Pr(X_t=0)=\Pr(X_t=1)=1/2$ for every $t\in \mathbb{N}$. We create two sets $V,W\subseteq R$ iteratively using the following procedure. 
\begin{algorithm}[H]
	\caption{Coupling procedure}
	\label{alg:coupling}
	\begin{algorithmic}[1]
		\Require{Matroid $\calM(R,\calI,\succ)$.}
		\State Initialize $V\leftarrow \emptyset$, $W\leftarrow \emptyset$ and $\theta\leftarrow 0$.
		\For{$i=1$ to $n$}
		\If{$r^i\in \OPT(V+r^i)$}
		\State $\theta\leftarrow \theta+1$ and $Y_{\theta}=X_i$.
		\If{$Y_{\theta}=0$}
		\State $V\leftarrow V+r^i$
		\Else{ $W\leftarrow W+r^i$.}
		\EndIf
		\EndIf
		\EndFor
	\end{algorithmic}
\end{algorithm}

The value $\theta$ represents a counter on the elements that improve over the current set $V$. When the counter is updated, we say that the element considered on that iteration {\it is assigned coin $Y_\theta$.}	
\begin{lem}\label{lem:coupling}
	$(V,W)$ has the same distribution as $(\OPT(R_s),R_s^+)$ in Algorithm \ref{alg:reductionprobability}.
\end{lem}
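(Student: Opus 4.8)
The plan is to couple the random experiment that produces $(\OPT(R_s), R_s^+)$ in Algorithm \ref{alg:reductionprobability} with the deterministic-order coupling procedure (Algorithm \ref{alg:coupling}), by exploiting the familiar fact that a $\Bin(n,1/2)$-prefix of a uniformly random permutation is equidistributed with ``each element is placed in the sample independently with probability $1/2$''. Concretely, in Algorithm \ref{alg:reductionprobability} the pair $(R_s, R\setminus R_s)$ has the same law as the pair obtained by flipping an independent fair coin $Z_r$ for each $r\in R$ and putting $r$ in the sample iff $Z_r=0$: this is because conditioning on $s$ we get a uniform $s$-subset, and summing over $s\sim\Bin(n,1/2)$ reproduces the product measure. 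So it suffices to show that $(\OPT(R_s),R_s^+)$, computed from such independent coins $(Z_r)_{r\in R}$, has the same joint distribution as $(V,W)$ produced by Algorithm \ref{alg:coupling} from the coins $(X_i)_{i}$.

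The key structural observation is the following: an element $r^i$ lies in $R_s$ \emph{and} contributes to $\OPT(R_s)$, or lies in $R_s^+$, only through how it interacts with the previously-seen elements of $\OPT$; and the greedy/matroid structure makes this interaction depend only on the elements \emph{strictly better} than $r^i$ that were themselves retained in the running optimum. First I would prove, by induction on $i$ processed in decreasing value order $r^1\succ r^2\succ\cdots\succ r^n$, that if we run the greedy construction of $\OPT(R_s)$ in value order, then $r^i$ is ``relevant'' — i.e.\ $r^i\in\OPT(V_{i-1}+r^i)$ where $V_{i-1}$ is the set of relevant-and-sampled higher elements processed so far — exactly when the test ``$r^i\in\OPT(V+r^i)$'' in line 3 of Algorithm \ref{alg:coupling} fires (here one uses Lemma \ref{lem:opt-restricted}: whether $r^i$ extends the current partial optimum is monotone and is unaffected by irrelevant elements, since irrelevant elements are spanned by strictly-higher retained ones). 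Once $r^i$ is relevant, its coin $Z_{r^i}$ decides whether it joins $\OPT(R_s)$ (coin $=0$, sampled) or joins $R_s^+$ (coin $=1$, not sampled but still improving over the current sample optimum, since $R_s^+=\{r\notin R_s: r\in\OPT(R_s+r)\}$ and the current sample optimum among higher elements is exactly $V$). This is precisely the branching in lines 4--7 of Algorithm \ref{alg:coupling}, with $Y_\theta$ playing the role of $Z_{r^i}$.

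The matching of coins is the delicate point: in Algorithm \ref{alg:coupling} the $\theta$-th relevant element is assigned $X_i$ where $i$ is its \emph{value-order index}, whereas we want to feed it $Z_{r^i}$; but since the $Z_r$ are i.i.d.\ fair coins indexed by elements and the $X_i$ are i.i.d.\ fair coins indexed by value-rank, any bijection between the two index sets gives the same joint law, so we may simply set $Z_{r^i}:=X_i$. Crucially, an irrelevant element consumes no coin in either description (it is never sampled-into-$\OPT$ nor into $R_s^+$ in a way that matters, and Algorithm \ref{alg:coupling} skips it), so the counter $\theta$ stays synchronized with ``number of relevant elements seen.'' Assembling these pieces, the induction shows that at termination $V=\OPT(R_s)$ and $W=R_s^+$ as random sets with the claimed joint distribution. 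The main obstacle is the inductive verification that ``relevance'' is correctly captured by the line-3 test and is insensitive to the irrelevant elements already discarded — this is where one must invoke the matroid exchange structure and Lemma \ref{lem:opt-restricted} carefully, rather than anything probabilistic.
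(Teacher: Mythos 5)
Your proposal is correct and follows essentially the same route as the paper: identify the sample with the coin-$0$ elements (valid because a $\Bin(n,1/2)$-prefix of a uniform random order makes each element sampled independently with probability $1/2$), then induct down the value order to show the coupling's $V$ equals $\OPT$ of the sampled elements processed so far while $W$ collects exactly the unsampled elements passing the improvement test. The paper's proof does precisely this, setting $Z=\{r^i\colon X_i=0\}$ and verifying $V_i=\OPT(Z\cap R^i)$ and $W_i=\{r\in R^i\setminus Z\colon r\in\OPT(V_i+r)\}$ by the same induction, so no further comparison is needed.
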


\begin{proof}
	Let $(V_i,W_i)_{i=1}^n$ be the states of the coupling process at the end of each iteration.  
	Let $Z=\{r^i:X_i=0\}$. 
	Observe that $Z$ and the sample $R_s$ of Algorithm \ref{alg:reductionprobability} have the same distribution.
	Since the coupling procedure checks from highest to lowest element, it follows that $V_i= \OPT(Z\cap R^i)$  for every $i\in \{1,\ldots,n\}$, and therefore $V_n=\OPT(Z)$. 
	%
	To conclude the lemma it suffices to check that $W_n=\{r\in R\setminus Z:r\in \OPT(Z+r)\}$. 
	In fact, it can be proven by induction that 
	\[W_i= \{r\in R^i\setminus Z:r\in \OPT(V_i+r)\}\]
	for every $i\in \{1,\ldots,n\}$, and so the lemma follows, since $R^n=R$ and $V_n=\OPT(Z)$.
\end{proof}	

\begin{proof}[Proof of Lemma \ref{lem:level-reduction} (ii)]		
	
	Thanks to Lemma \ref{lem:coupling} we assume that $(\OPT(R_s),R_s^+)$ is generated by the process described in Algorithm \ref{alg:coupling}.
	In what follows, fix $f(j)=r^{T(j)}\in \OPT$. We know that at step
	$T(j)$ of the coupling procedure, $f(j)\in \OPT(R_s+f(j))$ no matter the trajectory of $(X_t)_{t\in \mathbb{N}}$. 
	Let $\theta$ be such that $r^{T(j)}$ is assigned coin $Y_{\theta}$, that is, $Y_{\theta}=X_{T(j)}$. 
	Then, with probability at least 1/8, the event $\mathcal{E}$ defined as $Y_{\theta-1}=Y_{\theta+1}=0$ and $Y_{\theta}=1$, holds. 
	If $\mathcal{E}$ happens, let $s(h)$ be the element of $\OPT(R_s)$ who was assigned coin $Y_{\theta-1}$. 
	In particular, the element $s(h+1)$ is assigned coin $Y_{\theta+1}$.
	Thus,
	\[C_{h}=\{r\in R_s^+\colon s(h)\succ r \succ s(h+1)\}=\{f(j)\}=\OPT(R_s^+)\cap C_{h}.\]
	Therefore by using that the occurrence of $\calE$ is decided before executing $\mathcal{A}^{\text{layer}}$ which is $g(1+\ell)\leq g(1+\rho)$ competitive, we get that
	\begin{align*}
	\Pr(f(j)\in \ALG) &\geq \frac{1}{8}\Pr(f(j)\in \ALG | \mathcal{E})\\
	&= \frac{1}{8}\E[|\ALG\cap C_{h}| | \mathcal{E}]\geq \frac{1}{8g(\rho+1)}\E[|\OPT(R_s^+)\cap C_{h}| | \mathcal{E}]= \frac1{8g(\rho+1)},
	\end{align*}
	and we conclude that Algorithm \ref{alg:reductionprobability} is $O(g(1+\rho))$ probability-competitive.
\end{proof}

\subsection{Comparison between ordinal measures}\label{sec:incomparable}
In this section we discuss about some incomparability results for the competitiveness measures previously introduced. We show that an algorithm that is utility-competitive is not necessarily competitive for the rest of the measures. In particular, we provide an instance where the $O(\log \rho)$ utility-competitive algorithm by Babaioff, Immorlica and Kleinberg \cite{BIK2007} has a poor competitiveness for the other three. Regarding the notions of the ordinal MSP, we show that the intersection and the ordinal measures are incomparable. More specifically, we show the existence of an algorithm and an instance where it is arbitrarily close to 1 intersection-competitive, but have an unbounded ordinal/probability-competitiveness. And on the other hand, we also show the existence of an algorithm and an instance where it is arbitrarily close to 1 ordinal-competitive, but have an unbounded intersection/probability-competitiveness. Recall that probability is the strongest in the sense that it implies competitiveness for all the other measures considered (see Lemma \ref{lem:variant-relations}).

In the utility variant, the weight $w(r)$ of an element is revealed to the algorithm when arrived. Suppose for simplicity that the rank $\rho$ of the matroid is known\footnote{This assumption can be removed by using standard arguments; we can set $\rho$ equal to twice the rank of the sampled part.} and that the weights the algorithm sees are all different. In the above mentioned algorithm, called by the authors the {\it Threshold Price Algorithm} (TPA), it is taken a sample $R_s$  of $s\sim \text{Bin}(n,1/2)$ elements\footnote{The original analysis uses half of $n$, but the analysis gets simpler if one uses $\text{Bin}(n,1/2)$ since one can assume that each element is in the sample with probability 1/2 independent of the rest.} and it records the top weight $w^*$ of a non-loop\footnote{A {\it loop} in a matroid is an element that belongs to no basis.} element seen in $R_s$. It chooses uniformly at random a number $\tau$ in the set $\{0, 1, 2, \dots, \lceil \log_2 \rho\rceil\}$, and then it selects greedily any non-sampled element whose weight is at least $T=w^*/2^{\tau}$. 

\begin{thm}[Babaioff et al. \cite{BIK2007}]
\label{thm:bik-2007}
The algorithm TPA is $O(\log \rho)$ utility-competitive.
\end{thm}
\subsubsection{TPA is $\Omega(\rho)$ intersection, ordinal and probability-competitive}
We show in this section that TPA is $\Omega(\rho)$-competitive in the intersection, ordinal and probability measures. We first revisit the proof by \cite{BIK2007} for the $O(\log \rho)$ utility-competitiveness of TPA. 
\begin{proof}[Proof of Theorem \ref{thm:bik-2007}]
Let $\OPT=\{f(1), \ldots, f(\rho)\}$ be the optimal base such that $w_1>w_2>\cdots>w_{\rho}$, where $w_i=w(f(i))$  for each $i\in \{1,\ldots,\rho\}$.
Suppose that $f(1)\notin R_s$. Then, if the second non-loop element of the matroid is sampled and if $\tau=0$, the element $f(1)$ will be selected in the second phase of TPA. Hence $\Pr(f(1)\in \ALG)\ge 1/4\cdot 1/(\lceil \log \rho\rceil+1)=\Omega(1/\log \rho)$.

Let $B=\{i\in \{2,\ldots,\rho\}:w_i\geq w_1/\rho\}$, and let $\mathcal{E}_i$ be the event where $f(1)\in R_s$ and $w_i/2< T\leq  w_i$. For each $i\in B$, we have $\log(w_1/w_i)<\log \rho$ and therefore $\Pr(\mathcal{E}_i)=1/2\cdot \Pr(\tau=\lceil \log(w_1/w_i)\rceil)=\Omega(1/\log \rho)$. The random order assumption implies that in expectation $(i-1)/2\geq i/4$ elements in $\{f(2),\dots,f(i)\}$ are non-sampled, hence the expected rank of the matroid restricted  to non-sampled elements of weight at least $w_i=w(f(i))$ is $\Omega(i)$. Therefore, given $i\in B$ and conditioned on $\mathcal{E}_i$, the algorithm selects $\Omega(i)$ elements of value at least $w_i/2$. 
It follows that for each $i\in B$, 
\[\E[|\ALG\cap \{r\colon w(r)\geq w_i/2\}|]\geq\Pr(\mathcal{E}_i)\E[|\ALG\cap \{r\colon w(r)\geq w_i/2\}||\mathcal{E}_i]=\Omega\left(\frac{i}{\log \rho}\right).\]

In addition, observe that the elements in $\OPT\setminus \{f(i):i\in B\cup \{1\}\}$ have total weight less than $w_1(\rho-|B|-1)/\rho<w_1$, and therefore $2\sum_{i\in B\cup\{1\}}w_i\geq w(\OPT)$. Putting all together, we have that the expected weight of the output is
	\begin{align*}
	\E[w(\ALG)]&\geq \frac12 \sum_{i=1}^{\rho}(w_i-w_{i+1})\E[|\ALG\cap \{r\colon w(r)\geq w_i/2\}|]\\
	&= \Omega\left(\frac{1}{\log \rho}\right) \sum_{i\in B\cup\{1\}}(w_i-w_{i+1})\cdot i \\
	&= \Omega\left(\frac{1}{\log \rho}\right)\sum_{i\in B\cup\{1\}}w_i= \Omega\left(\frac{1}{\log \rho}\right) w(\OPT).\qedhere
	\end{align*}

\end{proof}
We study in the following the competitiveness of TPA for the ordinal measures. 
Since probability is the strongest (Lemma \ref{lem:variant-relations}), it is enough to check that TPA is $\Omega(\rho)$ ordinal and intersection competitive. 
Let $R=\{r^1,\dots, r^n\}$ with $n=2\rho^3$.
Consider the laminar family $\{R^{n/2},R\}$, where $R^{n/2}=\{r^1,\dots, r^{n/2}\}$, and take the laminar matroid $(R,\calI)$ where a set $I\in \calI$ is independent if $|I\cap R^{n/2}|\leq 1$ and $|I|\le \rho$.
In particular, the matroid rank is $\rho$.  
Given $\varepsilon>0$, the weights are given by $w(r^i)=8-\varepsilon i$ for $i\in \{1,\ldots n/2\}$, and $w(r^i)=7-\varepsilon i$ for $i\in \{n/2+1,\ldots,n\}$. 

Observe that the optimal basis is given by $\OPT=\{r^1,r^{n/2+1},\dots,r^{n/2+\rho-1}\}$.
If we run TPA on this instance, with probability $p=1-2^{-n/2}$ the top weight $w^*$ in the sample is from $R^{n/2}$, and thus, $7<w^*<8$. 
Let $\mathcal{E}$ be this event.
No matter the value of $T$, the algorithm always select at most $\rho$ elements in the non-sampled part, and therefore 
\begin{align*}
\E[|\ALG\cap \OPT|]&=(1-2^{-n/2})\E[|\ALG\cap \OPT||\mathcal{E}]+2^{-n/2}\E[|\ALG\cap \OPT||\overline{\mathcal{E}}]\\
			      &\leq \E[|\ALG\cap \OPT||\mathcal{E}]+\rho\cdot 2^{-\rho^3}\leq \E[|\ALG\cap \OPT||\mathcal{E}]+1.
\end{align*}
Conditional on $\mathcal{E}$, we have that either $T>7$ or $T<4$. 
In the first case, TPA will select at most one element, which would come from $R^{n/2}$, and so $|\ALG\cap \OPT|\leq 1$. 
Otherwise, if $T<4$, the algorithm will select at most one element from $R^{n/2}$ and at most the first $\rho$ non-sampled elements from $R\setminus R^{n/2}$. 
The expected number of elements in $\{r^{n/2+1},\dots,r^{n/2+\rho-1}\}$ appearing in $\{r_{s+1},\ldots,r_{s+\rho}\}$, that is the first $\rho$ elements appearing after the sample, is $(\rho-1)\rho/n<1$. 
It follows that $\E[|\ALG\cap \OPT||\mathcal{E}]$ is upper bounded by $2$, and therefore $\E[|\ALG\cap \OPT|]\leq 3$. 
The discussion above implies as well that $\E[|\ALG\cap R^{n/2+\rho-1}|]\leq 3$, and then TPA is $\Omega(\rho)$ intersection and ordinal competitive. 
The conclusion for the ordinal case follows by using the characterization for ordinal competitiveness in Lemma \ref{lem:ordinal-equivalence}. 

\subsubsection{The ordinal and intersection measures are not comparable}

We show in the following that there is no competitiveness dominance between the intersection and the ordinal notions.  
Let $m$ and $M$ be two positive integers.
Consider $R=\{r^1,r^2,\ldots, r^{(M+1)m}\}$ and let $\mathcal{L}$ be the partition of $R$ given by
\[\mathcal{L}=\bigcup_{i=1}^m\{\{r^i\}\}\cup \bigcup_{j=1}^M \{\{r^{jm+1},\ldots, r^{(j+1)m}\}\}.\]
In particular, 
we consider the partition matroid 
$\mathcal{M}_{m,M}=(R,\calI)$ where $I\in \calI$ if $|I\cap L|\le 1$ for every $L\in \mathcal{L}$. 
The matroid rank is $m+M$. 
The algorithm we choose is greedy: we initialize $\ALG\leftarrow \emptyset$, and when an element $r$ arrives it is selected if $\ALG+r\in \calI$.

\begin{lem}\label{lem:ordinal-vs-inter-1}
Suppose $M\ge m^2$. Then, the greedy algorithm over instance $\mathcal{M}_{m,M}$ is $(1+1/m)$ ordinal-competitive and $\Omega(m)$ intersection-competitive.
\end{lem}
In fact, the ordinal competitiveness holds no matter what the value of $M$ is. We adjust the value of $M$ in order to get a poor competitiveness for the intersection notion.  
\begin{proof}[Proof of Lemma \ref{lem:ordinal-vs-inter-1}]
	The optimal base of $\mathcal{M}_{m,M}$ is the highest element of each part in $\mathcal{L}$, that is,
	\[\OPT=\{r^1,r^2,\ldots,r^m\}\cup \{r^{jm+1}:j\in \{1,\ldots,M\}\}.\] 
	In particular, we have 
	\[|\OPT\cap R^k|=\begin{cases}k & \text{ if }k\in \{1,\ldots,m\},\\ m+j & \text{ if }k\in \{jm+1,\ldots,(j+1)m\} \text{ and }j\in \{1,\ldots,M\}.\end{cases}\] 
	Observe that an element $r\in Q$, with $Q\in \mathcal{L}$, is selected by the algorithm if and only if $r$ is the first element of $Q$ that arrives. 
	Therefore, $\Pr(r^i\in \ALG)=1$ if $i\in \{1,\ldots,m\}$ and $\Pr(r^i\in \ALG)=1/m$ if $i\in \{m+1,\ldots,(m+1)m\}$. 
	It follows that $\E[|\ALG\cap R^k|]=k$ if $k\in \{1,\ldots, m\}$. 
	If $k>m$, then 
	\begin{align*}
	\E[|\ALG\cap R^k|]&=\sum_{i=1}^m\Pr(r^i\in \ALG)+\sum_{i=m+1}^k\Pr(r^i\in \ALG)=m+\frac{1}{m}(k-m).
	\end{align*}
	Thus, when $k\in \{1,\ldots,m\}$, we have $|\OPT\cap R^k|/\E[|\ALG\cap R^k|]=1$.
	Suppose $k=jm+r$ with $j\ge 1$ and $0\leq r\leq m$. Then, 
	\begin{align*}
	\frac{|\OPT\cap R^k|}{\E[|\ALG\cap R^k|]}&=\frac{m+j}{m+\frac{1}{m}(k-m)}\\
								      &=1+\frac{m-r}{m^2+mj+r-m}\leq 1+\frac{1}{m+j-1}\leq \frac{1}{m},
	\end{align*}
	and thus the greedy algorithm is $(1+1/m)$ ordinal-competitive for $\mathcal{M}_{m,M}$. 
	In the first inequality we used that $\phi(x)=(m-x)/(m^2+mj+x-m)$ is a decreasing function in the interval $[0,m]$.
	Observe that the competitiveness result holds no matter the value of $M$.  
	It remains to study the intersection competitiveness. 
	By the observations above, we have that 
	\[\frac{|\OPT|}{\E[|\ALG\cap \OPT|]}=\frac{m+M}{m+\frac{1}{m}\cdot M}\geq \frac{m+m^2}{m+m}\geq \frac{m}{2},\]
	and so the algorithm is at least $\Omega(m)$ intersection-competitive.
\end{proof}
Although not mentioned explicitly in the proof, since $\Pr(r^i\in \ALG)=1/m$ for $i>m$ it follows that the algorithm is $m$ probability competitive for $\mathcal{M}_{m,M}$, and for every $M$. 
In the following we construct an instance for which the intersection competitiveness is close to 1, but the ordinal competitiveness is poor. 
Let $m$ be a positive integer and $R=\{r^1,\ldots,r^{2m-1}\}$.
Consider the partition $\mathcal{L}$ given by
\[\mathcal{L}=\left\{\{r^1,\dots, r^m\}, \{r^{m+1}\}, \{r^{m+2}\},\dots, \{r^{2m-1}\}\right\}.\]
Let $\mathcal{N}_m=(R,\calI)$ be the 
partition matroid where $I\in \calI$ if $|I\cap L|\le 1$ for every $L\in \mathcal{L}$. 
The matroid rank is $m$.

\begin{lem}
The greedy algorithm over instance $\mathcal{N}_{m}$ is $(1+1/m)$ intersection-competitive and $\Omega(m)$ ordinal-competitive.
\end{lem}
\begin{proof}
	An element $r\in Q$, with $Q\in \mathcal{L}$, is selected by the algorithm if and only if $r$ is the first element of $Q$ that arrives. 
	Therefore, $\Pr(r^i\in \ALG)=1/m$ if $i\in \{1,\ldots,m\}$ and $\Pr(r^i\in \ALG)=1$ if $i\in \{m+1,\ldots,2m-1\}$. 
	Since $\OPT=\{r^1,r^{m+1},\ldots,r^{2m-1}\}$,
	\[\frac{|\OPT\cap R^1|}{\E[|\ALG\cap R^1|]}=\frac{1}{\Pr(r^1\in \ALG)}=1/(1/m)=m,\]
	hence the algorithm is $\Omega(m)$ ordinal-competitive. Finally, we have
	\begin{align*}
	\frac{|\OPT|}{\E|\ALG\cap\OPT|}=\frac{m}{1/m+m-1}\leq 1+\frac{1}{m},
	\end{align*}
	and we conclude that the greedy algorithm is $(1+1/m)$ intersection-competitive over $\mathcal{N}_m$.
\end{proof}

\bibliography{bibsoda}{}
\bibliographystyle{abbrv}
\end{document}